\title{Online Information Acquisition: \\
	Hiring Multiple Agents}
\author{%
	Federico Cacciamani \\
	Politecnico di Milano \\
	\texttt{federico.cacciamani@polimi.it}
	\And
	Matteo Castiglioni \\
	Politecnico di Milano \\
	\texttt{matteo.castiglioni@polimi.it}
	\And
	Nicola Gatti \\
	Politecnico di Milano \\
	\texttt{nicola.gatti@polimi.it}	
}
\newcommand{\Bigcap}{\text{\LARGE $\cap$}}
\newcommand{\defeq}{\coloneqq}
\newcommand{\Reals}{\mathbb{R}}
\newcommand{\Naturals}{\mathbb{N}}
\newcommand{\prob}{\mathbb{P}}
\newcommand{\LP}{\text{LP}}
\let\mc\mathcal
\newcommand{\bone}{b_{\triangleright}}
\newcommand{\btwo}{b_\triangleleft}
\newcommand{\sone}{s_{\triangleright}}
\newcommand{\stwo}{s_\triangleleft}
\newcommand{\algoprob}{\textsc{EstimateProb}}
\newcommand{\algocost}{\textsc{EstimateCosts}}
\newcommand{\algocommit}{\textsc{Commit}}
\newcommand{\xvec}{\boldsymbol{x}}
\newcommand{\gammavec}{\boldsymbol{\gamma}}
\newcommand{\svec}{\boldsymbol{s}}
\newcommand{\xivec}{\boldsymbol{\xi}}
\newcommand{\pivec}{\boldsymbol{\pi}}
\newcommand{\bvec}{\boldsymbol{b}}
\newcommand{\muvec}{\boldsymbol{\mu}}
\newcommand{\yvec}{\boldsymbol{y}}
\newcommand{\zvec}{\boldsymbol{z}}
\newcommand{\chivec}{\boldsymbol{\chi}}
\newcommand{\pvec}{\boldsymbol{p}}
\newcommand{\zetavec}{\boldsymbol{\zeta}}
\newcommand{\N}{\mathcal{N}}
\newcommand{\Si}{\mathcal{S}}
\newcommand{\A}{\mathcal{A}}
\newcommand{\C}{\mathcal{C}}
\newcommand{\E}{\mathcal{E}}
\newcommand{\D}{\mathcal{D}}
\newcommand{\B}{\mathcal B}
\newcommand{\U}{\mathcal{U}}
\renewcommand{\H}{\mathcal{H}}
\renewcommand{\P}{\mathcal{P}}
\newcommand{\G}{\mathcal{G}}
\theoremstyle{plain}
\newtheorem{theorem}{Theorem}[section]
\newtheorem{lemma}{Lemma}[section]
\newtheorem{corollary}[theorem]{Corollary}
\theoremstyle{definition}
\newtheorem{definition}{Definition}[section]
\newtheorem{assumption}{Assumption}
\theoremstyle{remark}
\definecolor{mygreen}{rgb}{0.0, 0.5, 0.0}
\newcommand{\commentsymbol}{\it\color{gray}$\triangleright$~}
\newcommand{\LComment}[1]{{\commentsymbol{#1}}}
\newcommand{\Statei}{\Statex \hskip\ALG@thistlm}
\begin{document}

	\maketitle
	
	\begin{abstract}

	We investigate the mechanism design problem faced by a principal who hires \emph{multiple} agents to gather and report costly information. Then, the principal exploits the  information to make an informed decision.  We model this problem as a game, where the principal announces a mechanism consisting in action recommendations and a payment function, a.k.a. scoring rule.
	Then, each agent chooses an effort level and receives partial information about an underlying state of nature based on the effort. Finally, the agents report the information (possibly non-truthfully), the principal takes a decision based on this information, and the agents are paid according to the scoring rule.
	While previous work focuses on single-agent problems, we consider multi-agents settings. This poses the challenge of coordinating the agents' efforts and aggregating correlated information. Indeed, we show that optimal mechanisms must correlate agents' efforts, which introduces externalities among the agents, and hence complex incentive compatibility constraints and equilibrium selection problems. 
	First, we design a polynomial-time algorithm to find an optimal incentive compatible mechanism.
	Then, we study an online problem, where the principal repeatedly interacts with a group of unknown agents. We design a no-regret algorithm that provides $\widetilde{\mathcal{O}}(T^{2/3})$ regret with respect to an optimal mechanism, matching the state-of-the-art bound for single-agent settings.
\end{abstract}
	\section{Introduction}\label{sec:intro}

Acquiring reliable information is crucial in any decision making problem. Often, decision makers delegate the task of gathering information to other parties. In the classical information acquisition scenario~\citep{savage1971elicitation}, a principal delegates a \emph{single} agent to acquire information~\citep{chen2021optimal,papireddygari2022contracts,li2022optimization}. However, in many real-world scenarios, the principal may have multiple sources of information~\citep{cacciamani2023online}. Consider a portfolio manager that wants to learn the potential of a company to make an informed investment. The manager could hire multiple analysts to conduct separate researches on the same company, where each analyst spends effort to produce a report. The manager gains information from the reports and decides whether or not to make the investment. To incentivize the analysts to produce accurate reports, the manager designs a payment scheme that pays the analysts based on the reports accuracy.
Formally, this problems can be modeled as a game between a principal that wants to acquire information about a stochastic state $\theta$ and a group of agents that receive information about this state through signals whose accuracy depends on undertaken effort (effort levels are modeled through different \emph{actions}). The game goes as follows. First, the principal commits to a distribution $\muvec$ over action recommendations and payment function, \emph{a.k.a.}, scoring rule~\citep{oesterheld2020minimum,neyman2021binary}. Then, each agent $i$ observes an action recommendation  sampled from $\muvec$ and performs a costly action $b_i$. Each agent receives a signal $s_i$ from a joint probability distribution $\prob^{(i)}\left(\svec\vert  \bvec, \theta\right)$ and reports the signal (possibly non-truthfully) to the principal. Based on the reported information, the principal makes a decision $a \in \mc A$. Finally, the agents are paid according to the scoring rule. 
Our goal is twofold. First, we want to analyze the optimization problem of computing optimal mechanisms, characterizing their properties and computational complexity. Moreover, we study an online problem in which the principal does not have prior knowledge about the agents. Our goal is to design online no-regret learning algorithms that repeatedly interacts with an unknown group of agents maximizing the cumulative principal's utility. 

\paragraph{Original contributions.}
We study the design of efficient algorithms for the multi-agent information acquisition problem, with a focus on both the optimization problem and the online learning problem arising when a principal interacts repeatedly with unknown agents. 
First, we assume that the principal knows all the game parameters. We show that an optimal mechanism can be computed efficiently. Our algorithm solves a quadratic optimization problem by providing a linear relaxation and then recovers a solution to the original problem in polynomial time.
Moreover, we characterize the settings in which the optimal mechanism is correlated, \emph{i.e.}, each agent's payment depends also on the signals reported by other agents, and the settings in which the optimal mechanism is uncorrelated,  \emph{i.e.}, each agent's payment depends only from her reported signal and the state of nature. 
Next, we consider the online problem in which the principal does not know the game parameters and needs to learn them by repeatedly interacting with the agents. We present an online algorithm that attains $\widetilde{\mathcal{O}}(T^{2/3})$ regret with respect to an optimal mechanism, which matches the state-of-the-art bound for single-agent settings. Our algorithm comprises three phases.
In the first one, we estimate the probability distribution over states of nature induced by the agents' signals. Ensuring that the estimations are sufficiently accurate is the main challenge in this phase. The required level of accuracy depends on the specific instance and it is crucial for designing strictly-\emph{incentive compatible} (IC) scoring rules, which maintain truthfulness under uncertainty. In the second phase, the algorithm estimates the differences among costs of the agents' actions. To do so, we employ non-truthful mechanisms, which require a non-trivial analysis of the agents' behavior.
Finally, the algorithm commits to an approximately optimal strategy while ensuring truthfulness under uncertainty. We achieve this goal by leveraging the estimations obtained in the previous phases. Specifically, we find an approximately optimal and approximately IC mechanism by exploiting the estimations. Then, we take a convex combination between this scoring rule and an ad-hoc strictly IC scoring rule to obtain an IC mechanism.

\paragraph{Related works.}

The study of information acquisition has been mostly confined to economics. However, in recent years, computational problems related to information acquisition have received increasing attention. Several works have focused on optimization problems related to the computation of scoring rules \citep{chen2021optimal,papireddygari2022contracts,li2022optimization,neyman2021binary}.
The closest to our work is \citep{chen2023learning}, which studies an online learning problem in which a principal acquires information from a single agent. \citet{chen2023learning} provide a $\widetilde{\mathcal{O}}(T^{2/3})$ regret bound. To the best of our knowledge, ours is the first computational work that considers scoring rules with multiple agents.
In the online multi-agent information acquisition problem, \citep{cacciamani2023online} is the closest work. \citet{cacciamani2023online} study an online mechanism design problem \emph{without money} in which the agents' utilities depend on the principal's action and the principal tries to incentivize the agents to report the information playing agent-favorable actions.
The information acquisition problem is  also related to the principal-agent problem, a.k.a. contract design, in which a principal designs an outcome-dependent payment scheme to incentivize an agent to play a hidden action~\citep{babaioff2012combinatorial,guruganesh2021contracts,alon2021contracts,castiglioni2022bayesian,DBLP:conf/sigecom/CastiglioniM022,dutting2019simple,dutting2021complexity}. A recent work that is closely related to ours is \cite{zhu2023sample}, which studies the repeated interaction between a principal and an agent, providing sublinear regret bounds.

	\section{Preliminaries}\label{sec:prelim}
%

\paragraph{Game model.} 
We investigate games between a \emph{principal} and a set $\N = [n]$ of $n$ \emph{agents}\footnote{In this work, for any $n\in\Naturals_{>0}$, we use $[n] = \left\{1,..,n\right\}$ to denote the set of the first $n$ natural numbers.}. We assume that $n$ is constant.\footnote{This avoids computational issues related to the exponential-size representation of the problem instance. Most of our results continue to hold for arbitrary $n$.}. Each agent $i\in\mc N$ can choose an action $b_i$ from a set $B_i$ of $k$ actions, each one with a cost $c_i(b_i)$ specified by a cost function $c_i: B_i\to [0,1]$. For $i\in\N$ and $b_i,b_i^\prime\in B_i$, we let $C_i(b_i,b_i^\prime)=C_i(b_i) - c_i(b_i^\prime)$. We denote the set of all possible action profiles as $\B\defeq \times_{i \in \N} B_i$ and the set of all possible action profiles excluding agent $i$'s action as $\B_{-i} = \times_{j\neq i}B_j$. Given $\bvec\in\B$, we denote as $\bvec_{-i}\in\B_{-i}$ the tuple obtained from $\bvec$ by removing element $b_i$ relative to agent $i$. The interaction model we consider is enhanced with a \emph{pre-play} round of communication between the principal and the agents in which the principal can privately recommend to each agent which action to take. \footnote{In Section~\ref{sec:uncorr}, we show that  optimal mechanisms correlate the agents' efforts and hence this step is fundamental to maximize the principal's utility.} After each agent takes an action $b_i\in B_i$ (not necessarily equal to the action recommended by the principal), a state of nature $\theta$ is sampled from a finite set $\Theta=\left\{\theta_1,\ldots,\theta_m\right\}$ according to a \emph{prior} $\pvec\in\Delta(\Theta)$.\footnote{In this work, given a finite set $\mc Z$, we denote as $\Delta(\mc Z)$ the $|\mc Z|$-dimensional probability simplex.} The state of nature is observed neither by the principal nor by the agents. Instead, depending on the action profile $\bvec \in\B$ chosen by all the agents, each agent $i$ observes a signal $s_i$ drawn from a finite $l$-dimensional set $S_i$. The signal profile received by the principal is denoted as $\svec = \left(s_1,...,s_n\right)$. The set of all possible signal profiles is $\Si=\times_i S_i$. Moreover, the set of all possible signal profiles excluding agent $i$'s signal is denoted as $\Si_{-i} = \times_{j\neq i}S_j$. Given a signal profile $\svec\in\Si$, we let $\svec_{-i}\in\Si_{-i}$ the signal profile obtained from $\svec_i$ by removing signal $s_i$. The signal profile and the state of nature are sampled according to a joint probability distribution that depends on the agent's action profile $\prob\left(\svec, \theta\vert\bvec\right)\coloneqq p[\theta]\prob\left(\svec\vert\theta, \bvec\right)$. 
The marginal signal probability of agent $i$ is defined as $\prob^{(i)}\left(s_i\vert  \bvec, \theta\right) \coloneqq \sum_{\svec' \in \mc S:s'_i=s_i} \prob\left(\svec\vert\bvec,\theta\right)$, where we use the superscript $(i)$ to explicit that we are considering the probability with which agent $i$ receives signal $s_i\in S_i$.
We assume that the information received by each agent $i$ is independent from the actions taken by other agents\footnote{Intuitively, this assumption models those cases in which the information received by an agent depends exclusively on her level of effort.}, \emph{i.e.,}
\[
	\prob^{(i)}\left(s_i\vert  \bvec, \theta\right) \coloneqq \prob^{(i)}\left(s_i\vert b_i, \theta\right)\quad\forall i\in\N,\, \forall s_i\in S_i,\,\forall \bvec\in\B, \forall \theta \in \Theta.
\]
  Notice that this does not exclude that the signals received by the agents are correlated. Moreover, we denote as $ \prob^{(i)}\left(s_i\vert  b_i\right)\coloneqq\sum_{\theta}  p(\theta) \prob^{(i)}\left(s_i\vert  b_i,\theta\right)$ the probability with which agent $i$ observes signal $s_i$ after the agent played action $b_i$. Finally, we denote with $\prob^{(i)}\left(\theta\vert  b_i, s_i\right)$ the probability of state $\theta$ in the posterior induced to agent $i\in\N$ by signal $s_i\in S_i$ and action $b_i\in B_i$.
 In our model, the agents can communicate a signal observation (possibly lying and communicating a signal $s_i^\prime$ different than the signal $s_i$ actually received) to the principal, which can pay the agents to reward their communication. After receiving a signal profile $\svec^\prime$ from the agents, the principal chooses an action $a$ from a finite set $\A=\left\{a_1,...,a_d\right\}$ and receives utility $u(a, \theta)\in[0,1]$. 

\paragraph{Mechanisms.}
A mechanism for the principal must specify three different components: (i) a \emph{recommendation policy} $\muvec$ that determines which actions are recommended to the agents, (ii) a \emph{payment scheme} $\gammavec=(\gammavec_1,...,\gammavec_n)$ -- also called \emph{scoring rule} -- specifying how each agent will be paid, and (iii) an \emph{action policy} $\pivec$ encoding which action the principal chooses as a function of the received signal profile and the recommended actions. 
In this work we are interested in the class of \emph{correlated mechanisms}, in which the payment received by agent $i$ depends on the whole action profile $\bvec$ recommended to the agents and on the whole signal profile $\svec$ received by the principal, as well as on the state of nature $\theta$. The set of correlated mechanisms is denoted as $\C$. Formally,
\[
	C \defeq \left\{(\muvec, \gammavec, \pivec)\mid \muvec\in\Delta(\B),\,\gammavec_i: \B\,\,\times\Si\,\,\times\Theta\to [0,M]\,\forall i\in\mc N,\,\pivec:\B\times\Si\to\Delta(\A) \right\},
\]
where $M$ is a parameter that limits the principal's budget.~\footnote{Bounding the payments is a classical assumption in online problems related to information acquisition and in principal-agent problems~\cite{chen2023learning,zhu2023sample}. Without this assumption the learner decision space is unbounded.} When the principal uses mechanism $(\muvec, \gammavec, \pivec)$, the action profile recommended is $\bvec\in\B$, the signal profile reported is $\svec\in\Si$ and the state of nature is $\theta$, the payment received by agent $i$ is $\gamma_i[\bvec, \svec, \theta]$, while  $\pi[\bvec, \svec, a]$ denotes the probability with which the principal plays action $a\in\A$.

\paragraph{Optimal mechanisms and incentive compatibility.}
The objective of the principal is to find an optimal mechanism, \emph{i.e.,} a mechanism guaranteeing the maximum possible difference between utility of the principal and total payments. By the revelation principle, it is possible to restrict our attention to mechanism that are \emph{truthful}, \emph{i.e.,} such that the agents are incentivized to follow the principal's recommendations and to report truthfully the signals they observe. Hence, we denote  the expected payment received by player $i$ when she behaves truthfully and the mechanism is $(\muvec, \gammavec, \pivec)$ as: 
\[
	F_i(\muvec, \gammavec) = \sum_{\bvec\in\B}\sum_{\svec\in\Si}\sum_{\theta\in\Theta} \mu[\bvec] \prob\left(\svec, \theta\vert \bvec\right) \gamma_i[\bvec, \svec, \theta].
\]
Then, the expected utility of the principal is obtained as the difference between the expected utility received by actions in $\A$ and the total expected payments (assuming truthful behavior of the agents): 
\[
	U(\muvec, \gammavec, \pivec) = \sum_{\bvec\in\B}\sum_{\svec\in\Si}\sum_{\theta\in\Theta} \left[\mu[\bvec]\prob\left(\svec, \theta\vert\bvec\right)\left[\sum_{a\in\A}\pi[\bvec, \svec, a] u(a, \theta)\right]\right] - \sum_{i\in\mc N} F_i(\muvec, \gammavec).
\]
To ensure that truthful behavior is optimal (a.k.a. the mechanism is IC), we introduce the concept of \emph{deviation functions}, which model the possible deviations from truthful behavior. Formally, the set $\Phi_i$ of deviation functions of agent $i$ is
\(
	\Phi_i = \left\{(\phi, \varphi)\mid \phi: B_i\to B_i,\,\varphi: B_i\times S_i\to S_i\right\}. 
\)
Given any couple $(\phi, \varphi)\in\Phi_i$, the function $\phi$ models the deviation from the recommended action, while $\varphi$ models untruthful reporting of the received signal. The expected payment of agent $i$ when she deviates according to $(\phi, \varphi)$, all the other agents behave truthfully, and the mechanism is $(\muvec, \gammavec, \pivec)$ is:
\[
	F_i^{\phi, \varphi}(\muvec, \gammavec) = \sum_{\bvec\in\B}\sum_{\svec\in\Si}\sum_{\theta\in\Theta} \mu[\bvec]\prob\left(\svec, \theta\vert(\phi(b_i), \bvec_{-i})\right)\gamma_i[\bvec, (\varphi(b_i, s_i), \svec_{-i}), \theta].
\]
An optimal correlated mechanism can be found as a solution of the following optimization problem: 
\begin{subequations}\label{eq:genopt}
	\begin{align}
		&\max_{(\muvec, \gammavec, \pivec)\in\C} &&U(\muvec, \gammavec, \pivec) \\
		&\text{s.t.} && F_i(\muvec, \gammavec) - F_i^{\phi, \varphi}(\muvec, \gammavec) \geq \sum_{\bvec\in\B}\mu[\bvec] C_i(b_i, \phi(b_i))\quad\forall i\in\N,\,\forall (\phi, \varphi)\in\Phi_i \label{constr:ic}
	\end{align}
\end{subequations}
The objective function of \eqref{eq:genopt} is the maximization of principal's expected utility assuming honest behavior of the agents, and constraint \eqref{constr:ic} guarantees that the mechanism is incentive compatible (IC), \emph{i.e.,} it guarantees that, for all agents, truthful behavior is an equilibrium.

	\section{A linear programming relaxation for computing optimal mechanisms} \label{sec:lp}
In this section, we provide a polynomial-time algorithm to solve Problem~\eqref{eq:genopt}, \emph{i.e.}, to find an optimal mechanism. As a first step, we provide a Linear Program (LP) relaxation of~\eqref{eq:genopt}.  
%
Notice that \eqref{eq:genopt} presents two main issues: (i) the objective function and the constraints are non-linear in the variables $(\muvec, \gammavec, \pivec)$, and (ii) it has an exponential number of constraints, since $|\Phi_i| = k^k l^{lk}$.
To address (i), we introduce variables $\xvec=(\xvec_1, ..., \xvec_n)$ and $\yvec$, where $\xvec_i\in \Reals_{\geq 0}^{|\B|\times|\Si|\times|\Theta|}$ for each $i\in\N$, and $\yvec\in\Reals_{\geq 0}^{|\B|\times|\Si|\times|\A|}$. Intuitively, $x_i[\bvec, \svec, \theta]$ represents the product $\mu[\bvec]\gamma_i[\bvec, \svec, \theta]$, while $\yvec[\bvec, \svec, a]$ represents the product $\mu[\bvec]\pi[\bvec, \svec, a]$, thus making the objective function and the constraints linear. This yields a relaxation of the original non-linear optimization problem. Then, in order to be able to recover valid mechanisms from variables $\xvec_i, \yvec$, we introduce additional constraints. For what concerns (ii), in order to reduce the number of constraints, we observe that it is possible to safely consider a restricted set of deviations for each agent, while still guaranteeing incentive compatibility w.r.t. deviations in $\Phi_i$. 
In the following, we will denote the expected payment received by agent $i$ when she is recommended to play $b_i\in B_i$, she plays $b_i^\prime$, observes $s_i\in S_i$, and reports $s_i^\prime\in S_i$ as: 
\[
f_i(\xvec_i \vert b_i, b^\prime_i, s_i, s^\prime_i, \prob) = \sum_{\bvec_{-i}\in\B_{-i}}\sum_{\svec_{-i}\in\Si_{-i}} \sum_{\theta\in\Theta} x_i[(b_i, \bvec_{-i}), (s_i^\prime, \svec_{-i}), \theta]\prob\left((s_i, \svec_{-i}), \theta\vert (b_i^\prime, \bvec_{-i})\right),
\]
where we made explicit the dependence of $f_i$ from the probability distribution ${\prob\left(\cdot, \cdot\vert \bvec\right)\in\Delta(\Si\times\Theta)}$. 
Moreover, we write ${f_i(\xvec_i\vert b_i, s_i, \prob) \coloneqq f_i(\xvec_i\vert b_i, b_i, s_i, s_i, \prob)}$ to denote the expected payment received by agent $i$ when she is recommended action $b_i\in B_i$, she observes signal $s_i\in S_i$ and she behaves honestly. Then, consider the following LP, which we denote as $\LP\left(\zetavec, \Lambda, \varepsilon\right)$. It is parameterized by $\zetavec$, $\Lambda$ and $\varepsilon$, where $\zetavec=(\zetavec_{\bvec})_{\bvec\in\B}$ is a collection of probability distributions over $\Si\times\Theta$,  $\Lambda=(\Lambda_1,...,\Lambda_n)$ with $\Lambda_i: B_i\times B_i\to [-1, 1]$ represent pairwise cost differences, and  $\varepsilon > 0$:
\begin{subequations}\label{eq:lp}
	\begin{align}
		&\max \limits_{\substack{\xvec \succeq 0, \yvec \succeq 0,\\ \zvec \succeq 0 ,  \muvec\in\Delta(\B) }} \quad
		\sum_{\bvec\in\B}\sum_{\svec\in\Si}\sum_{\theta\in\Theta} \left[\sum_{a\in\A}\left[y[\bvec, \svec, a]\zeta_{\bvec}[\svec, \theta]u(a, \theta)\right] - \sum_{i\in\N}x_i[\bvec, \svec, \theta] \zeta_{\bvec}[\svec, \theta]\right]  \quad \quad \text{s.t.}\\
		&
		\sum\limits_{\substack{s_i\in S_i}}\hspace{-0.1cm} \left[f_i(\xvec_i\vert b_i, s_i, \zetavec^\prime)- z_i[b_i, b_i^\prime, s]\right]  
			  \geq \hspace{-0.3cm}\sum\limits_{\bvec_{-i}\in\B_{-i}}\hspace{-0.3cm}\mu[(b_i, \bvec_{-i})]\Lambda_i(b_i, b_i^\prime) - \varepsilon \hspace{0.45cm} \forall i\in\N\,\forall b_i, b_i^\prime\in B_i \label{eq:noprof}  \\
		&\,  z_i[b_i, b_i^\prime, s_i] \geq  f_i(\xvec_i\vert b_i, b_i^\prime, s_i, s_i^\prime, \zetavec^\prime) \hspace{2.95cm}\forall i\in\N,\,\forall b_i, b_i^\prime\in B_i,\,\forall s_i, s_i^\prime\in S_i \label{eq:auxub} \\
		&\,  \sum\limits_{a\in\A} y[\bvec, \svec, a] = \mu[\bvec]\hspace{7.6cm}\forall \bvec\in\B,\,\forall \svec\in\Si \label{eq:constrlp1}\\
		&\,  x_i[\bvec, \svec, \theta] \leq M \mu[\bvec] \hspace{5.2cm}\forall i\in\N,\,\forall \bvec\in\B,\,\forall\svec\in\Si,\,\forall\theta\in\Theta \label{eq:constrlp2}
	\end{align}
\end{subequations}
Intuitively, constraint \eqref{eq:auxub} ensures that the auxiliary variable $z_i[b_i, b_i^\prime, s_i]$ provides an upper bound on the expected payment that agent $i$ could get through any untruthful signal reporting when she was recommended to play $b_i$, she played $b_i^\prime$ and observes $s_i$. Constraint \eqref{eq:noprof} exploits the auxiliary variables $\zvec=(\zvec_1,...,\zvec_n)$ with $\zvec_i\in\Reals_{\geq 0}^{|B_i|\times |B_i|\times |S_i|}$ to guarantee that no deviation is profitable for agent $i$. The following theorem shows how to recover an optimal correlated mechanism from LP~\eqref{eq:lp}\footnote{All the proofs omitted from the main paper can be found in the Appendix.}. 
\begin{restatable}{theorem}{thlp}\label{th:thlp}
	Let $(\xvec^\star, \yvec^\star, \muvec^\star, \zvec^\star)$ be an optimal solution to $\LP\left(\prob, C, 0\right)$, where $C=(C_1,...,C_n)$. Then, let $\gammavec^\star = (\gammavec_1^\star,...,\gammavec_n^\star)$ and $\pivec^\star$ be such that 
	\[
	\gamma^\star_i[\bvec, \svec, \theta] = \begin{cases}
		\frac{x^\star_i[\bvec, \svec, \theta]}{\mu^\star[\bvec]} & \text{if } \mu[\bvec]\neq 0 \\
		0 &\text{otherwise}
	\end{cases} \quad\quad 
	\pi^\star[\bvec, \svec, a] = \begin{cases}
		\frac{y^\star[\bvec, \svec, a]}{\mu^\star[\bvec]} & \text{if } \mu[\bvec]\neq 0 \\
		\frac{1}{d} &\text{otherwise.}
	\end{cases}
	\]
	Then $(\muvec^\star, \gammavec^\star, \pivec^\star)$ is an optimal solution to Problem \eqref{eq:genopt}.	 
\end{restatable}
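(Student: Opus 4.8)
The plan is to show two things: first, that the LP $\LP(\prob, C, 0)$ is a genuine relaxation of Problem~\eqref{eq:genopt} (so its optimal value is an upper bound on the optimal principal utility), and second, that the reconstructed triple $(\muvec^\star, \gammavec^\star, \pivec^\star)$ is feasible for~\eqref{eq:genopt} and achieves exactly the LP objective value. Putting these together gives optimality. For the relaxation direction, I would start from an arbitrary truthful mechanism $(\muvec, \gammavec, \pivec)$ feasible for~\eqref{eq:genopt} and define $x_i[\bvec,\svec,\theta] = \mu[\bvec]\gamma_i[\bvec,\svec,\theta]$, $y[\bvec,\svec,a] = \mu[\bvec]\pi[\bvec,\svec,a]$, and $z_i[b_i,b_i',s_i] = \max_{s_i'} f_i(\xvec_i \mid b_i, b_i', s_i, s_i', \prob)$. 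Then I would check: constraint~\eqref{eq:auxub} holds by the definition of $z_i$ as a max; constraint~\eqref{eq:constrlp1} holds because $\pi[\bvec,\svec,\cdot]$ is a distribution; constraint~\eqref{eq:constrlp2} holds because $\gamma_i$ takes values in $[0,M]$; and the objective of the LP evaluated at this point equals $U(\muvec,\gammavec,\pivec)$ after substituting $\zeta_{\bvec}[\svec,\theta] = \prob(\svec,\theta\mid\bvec)$ and expanding definitions. The only nontrivial check is constraint~\eqref{eq:noprof}: I need to argue that the IC constraints~\eqref{constr:ic}, which quantify over the full deviation space $\Phi_i$, imply the per-action-pair inequalities~\eqref{eq:noprof}. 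This uses the standard reduction — it suffices to prevent the ``worst'' deviation, which for a fixed recommended $b_i$ and played $b_i'$ is to report, for each observed $s_i$, the signal $s_i'$ maximizing the expected payment; summing the honest payment over $s_i$ on the left and this maximal deviation payment (captured by $z_i$) on the right recovers~\eqref{eq:noprof} with $\varepsilon = 0$ and $\Lambda_i = C_i$. This deviation-decomposition step is the main obstacle and should be stated as a separate observation (it is presumably the content of the ``restricted set of deviations'' remark preceding the LP).

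For the converse direction, I would take the LP optimum $(\xvec^\star,\yvec^\star,\muvec^\star,\zvec^\star)$ and verify that $(\muvec^\star,\gammavec^\star,\pivec^\star)$ as defined in the statement is feasible for~\eqref{eq:genopt} with matching objective. First, $\gammavec^\star$ is well-defined and lies in $[0,M]$: when $\mu^\star[\bvec]\neq 0$, constraint~\eqref{eq:constrlp2} gives $x_i^\star[\bvec,\svec,\theta]/\mu^\star[\bvec] \le M$, and nonnegativity of $\xvec^\star$ gives the lower bound; $\pivec^\star[\bvec,\svec,\cdot]$ is a distribution by~\eqref{eq:constrlp1} when $\mu^\star[\bvec]\neq 0$ and by the uniform fallback otherwise. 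Second, whenever $\mu^\star[\bvec]=0$ the values of $\gamma_i^\star$ and $\pi^\star$ on $\bvec$ are irrelevant to both $U$ and the IC constraints, since every term in $F_i$, $F_i^{\phi,\varphi}$, and $U$ carries a factor $\mu^\star[\bvec]$ (for the deviation payment one must note the recommendation index is still $b_i$, so the factor $\mu^\star[(b_i,\bvec_{-i})]$ appears) — so I can assume $\mu^\star[\bvec]>0$ throughout. Then $x_i^\star[\bvec,\svec,\theta] = \mu^\star[\bvec]\gamma_i^\star[\bvec,\svec,\theta]$ and $y^\star[\bvec,\svec,a] = \mu^\star[\bvec]\pi^\star[\bvec,\svec,a]$ by construction, so the LP objective at the optimum equals $U(\muvec^\star,\gammavec^\star,\pivec^\star)$ exactly (again substituting $\zeta_{\bvec}[\svec,\theta] = \prob(\svec,\theta\mid\bvec)$).

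It remains to show IC, i.e. that constraint~\eqref{constr:ic} holds for every $(\phi,\varphi)\in\Phi_i$. Fix an agent $i$ and a deviation $(\phi,\varphi)$. Because the deviation acts separately on each recommended action $b_i$ and because $F_i$ and $F_i^{\phi,\varphi}$ decompose as sums over $b_i$ of terms weighted by $\sum_{\bvec_{-i}}\mu^\star[(b_i,\bvec_{-i})]$, it suffices to verify the inequality for each fixed $b_i$ with $\phi(b_i)=b_i'$. For that fixed pair, the honest payment contributes $\sum_{s_i} f_i(\xvec_i^\star \mid b_i, s_i, \prob)$ and the deviation payment contributes $\sum_{s_i} f_i(\xvec_i^\star \mid b_i, b_i', s_i, \varphi(b_i,s_i), \prob) \le \sum_{s_i} z_i^\star[b_i, b_i', s_i]$ by~\eqref{eq:auxub}. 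Subtracting and invoking~\eqref{eq:noprof} with $\varepsilon=0$, $\Lambda_i=C_i$ gives $\sum_{s_i}\big(f_i(\xvec_i^\star\mid b_i,s_i,\prob) - z_i^\star[b_i,b_i',s_i]\big) \ge \sum_{\bvec_{-i}}\mu^\star[(b_i,\bvec_{-i})]\,C_i(b_i,b_i')$; summing this over $b_i$ yields~\eqref{constr:ic}. Hence $(\muvec^\star,\gammavec^\star,\pivec^\star)$ is feasible for~\eqref{eq:genopt} with value equal to $\mathrm{OPT}_{\LP}$, and since $\mathrm{OPT}_{\LP} \ge \mathrm{OPT}_{\eqref{eq:genopt}}$ by the relaxation argument, it is optimal. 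Throughout, the one place demanding care is the bookkeeping that the reconstruction is exact on the support of $\muvec^\star$ and that off-support coordinates are harmless; the deviation-restriction lemma is the conceptual crux and everything else is substitution.
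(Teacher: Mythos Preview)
Your proposal is correct and follows essentially the same argument as the paper. The only cosmetic difference is that the paper proves feasibility first and then argues optimality by contradiction (assuming a strictly better mechanism and mapping it to an LP-feasible point with strictly larger objective), whereas you frame the same content as ``LP is a relaxation'' followed by ``the reconstructed mechanism is feasible and attains the LP value''; the underlying constructions and constraint checks are identical.
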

As a consequence of Theorem \ref{th:thlp}, noticing that the linear program has a number of constraints and variables polynomial in $k$, $l$ and $m$, we obtain the following corollary.

\begin{corollary}
	An optimal mechanism can be found in polynomial time.
\end{corollary}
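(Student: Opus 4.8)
The corollary follows from Theorem~\ref{th:thlp} together with a routine size count: since $n$ is constant, the numbers of variables and of constraints of $\LP(\prob,C,0)$ are polynomial in $k$, $l$, $m$ (and $d$) --- e.g.\ constraints~\eqref{eq:noprof}--\eqref{eq:constrlp2} number $O(nk^2+nk^2l^2+k^nl^n+nk^nl^nm)$ --- so the LP is solvable in polynomial time, and the recovery map of Theorem~\ref{th:thlp} is a polynomial number of divisions. Hence the substance is Theorem~\ref{th:thlp} itself, which I would prove by exhibiting a value-preserving correspondence between feasible solutions of $\LP(\prob,C,0)$ and feasible mechanisms of Problem~\eqref{eq:genopt}, in a \emph{soundness} direction (LP solution $\Rightarrow$ mechanism) and a \emph{completeness} direction (mechanism $\Rightarrow$ LP solution).

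\textbf{Soundness.} Given a feasible $(\xvec,\yvec,\muvec,\zvec)$ of $\LP(\prob,C,0)$, recover $(\muvec,\gammavec,\pivec)$ as in the statement. One checks $\mu[\bvec]\gamma_i[\bvec,\svec,\theta]=x_i[\bvec,\svec,\theta]$ and $\mu[\bvec]\pi[\bvec,\svec,a]=y[\bvec,\svec,a]$ for all indices (when $\mu[\bvec]=0$, \eqref{eq:constrlp2} with $\xvec\succeq0$ forces $x_i[\bvec,\svec,\theta]=0$, and \eqref{eq:constrlp1} with $\yvec\succeq0$ forces $y[\bvec,\svec,a]=0$); then $\gammavec_i$ has range $[0,M]$ by~\eqref{eq:constrlp2} and $\pivec[\bvec,\svec,\cdot]\in\Delta(\A)$ by~\eqref{eq:constrlp1} and $\yvec\succeq0$, so $(\muvec,\gammavec,\pivec)\in\C$. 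Substituting these identities and $\zeta_{\bvec}[\svec,\theta]=\prob(\svec,\theta\mid\bvec)$ makes $U(\muvec,\gammavec,\pivec)$ equal to the LP objective term by term; the same substitution, after regrouping the sums in $F_i$ and $F_i^{\phi,\varphi}$ by the recommended action $b_i$ and observed signal $s_i$, gives $F_i(\muvec,\gammavec)=\sum_{b_i,s_i}f_i(\xvec_i\mid b_i,s_i,\prob)$ and $F_i^{\phi,\varphi}(\muvec,\gammavec)=\sum_{b_i,s_i}f_i(\xvec_i\mid b_i,\phi(b_i),s_i,\varphi(b_i,s_i),\prob)$ for all $(\phi,\varphi)\in\Phi_i$.

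\textbf{From the polynomial to the exponential deviation set (the crux).} I claim the constraints~\eqref{eq:noprof}--\eqref{eq:auxub}, indexed only by pairs $(b_i,b_i')$, imply~\eqref{constr:ic}, indexed by all $(\phi,\varphi)\in\Phi_i$. The key is that $F_i-F_i^{\phi,\varphi}$ decomposes additively over $b_i$: for each $b_i$, \eqref{eq:auxub} bounds $f_i(\xvec_i\mid b_i,\phi(b_i),s_i,\varphi(b_i,s_i),\prob)\le z_i[b_i,\phi(b_i),s_i]$ for every $s_i$; summing over $s_i$ and using~\eqref{eq:noprof} with $b_i'=\phi(b_i)$ (here $\varepsilon=0$, $\Lambda=C$) yields $\sum_{s_i}\bigl(f_i(\xvec_i\mid b_i,s_i,\prob)-z_i[b_i,\phi(b_i),s_i]\bigr)\ge\sum_{\bvec_{-i}}\mu[(b_i,\bvec_{-i})]\,C_i(b_i,\phi(b_i))$; summing over $b_i$ gives $F_i(\muvec,\gammavec)-F_i^{\phi,\varphi}(\muvec,\gammavec)\ge\sum_{\bvec}\mu[\bvec]\,C_i(b_i,\phi(b_i))$, i.e.\ precisely~\eqref{constr:ic}. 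So the recovered mechanism is feasible for~\eqref{eq:genopt}, with objective equal to the LP value.

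\textbf{Completeness and conclusion.} Conversely, from a feasible mechanism $(\muvec,\gammavec,\pivec)$ set $x_i[\bvec,\svec,\theta]=\mu[\bvec]\gamma_i[\bvec,\svec,\theta]$, $y[\bvec,\svec,a]=\mu[\bvec]\pi[\bvec,\svec,a]$ and $z_i[b_i,b_i',s_i]=\max_{s_i'}f_i(\xvec_i\mid b_i,b_i',s_i,s_i',\prob)$. Then~\eqref{eq:auxub} holds by construction, \eqref{eq:constrlp1}--\eqref{eq:constrlp2} because $\pivec[\bvec,\svec,\cdot]$ is a distribution and $\gammavec_i\le M$, and~\eqref{eq:noprof} for a fixed $(b_i,b_i')$ by instantiating~\eqref{constr:ic} at the deviation that plays $b_i'$ when recommended $b_i$ (and follows the recommendation otherwise) and reports a payment-maximizing signal in that case (and truthfully otherwise): terms with recommendation $\neq b_i$ cancel since $C_i(b,b)=0$, leaving exactly~\eqref{eq:noprof} at $\varepsilon=0$, and the objective matches as before. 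Combining the directions, an optimal LP solution has value $v^\star$ equal to the optimum of~\eqref{eq:genopt}, the recovered mechanism is feasible and attains $v^\star$, hence is optimal; with the size count above this proves the corollary. I expect the only delicate point to be the bookkeeping of the regrouping/decomposition tying $F_i$ and $F_i^{\phi,\varphi}$ to the $f_i$ and $z_i$ quantities; everything else is direct substitution.
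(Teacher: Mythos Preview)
Your proposal is correct and follows essentially the same approach as the paper: the corollary is reduced to Theorem~\ref{th:thlp} plus a size count, and your proof of the theorem via a value-preserving correspondence (soundness/completeness) uses exactly the constructions and the key decomposition-over-$b_i$ argument the paper uses; the paper phrases the completeness direction as a proof by contradiction rather than a direct equality of optima, but the substance is identical.
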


	\section{Correlated vs uncorrelated mechanisms}\label{sec:uncorr}
Before introducing our online learning problem, we discuss one of the main issues arising from the adoption of correlated mechanism. Indeed, consider the case in which the principal is not able to commit to an IC mechanism, for instance because she is uncertain about the game parameters and thus she is not able to characterize the set of IC mechanisms. When the principal uses non-IC mechanisms it becomes complex to characterize the behavior of the agents. This is because correlated mechanisms introduce \emph{externalities} among the agents. More precisely, since the payments received by agent $i$ depend also on the deviation policies $(\phi_j, \varphi_j) \in \Phi_j$ adopted by agents $j\neq i$, the agents should play an \emph{equilibrium} of the $n$-players game induced by the correlated mechanism. This introduces well-known issues related to both computational complexity and equilibrium selection~\cite{daskalakis2009complexity}. Thus, committing to a correlated mechanism which is not IC induces an unpredictable behavior of the agents, which in online settings makes it impossible for the learner to even estimate the game parameters.

To address such drawbacks of correlated mechanisms, we introduce the class of \emph{uncorrelated} mechanisms. An uncorrelated mechanism is composed by an uncorrelated scoring rule $\gammavec = (\gammavec_1,...,\gammavec_n)$ and an action policy $\pivec$. Formally, the set of uncorrelated mechanisms is defined as: 
\[
	\U =\left\{(\gammavec, \pivec)\mid \gammavec_i: S_i\times\Theta\in [0,M]\,\,\forall i\in\N,\, \pivec: \Si\to \Delta(\A) \right\}.
\]
Notice that any uncorrelated mechanism can be represented as a correlated mechanism and hence $\U\subset\C$.
Differently from correlated mechanisms, any $(\gammavec, \pivec)\in\U$ induces a well-defined best response for each agent. In particular, the best-response problem can be framed as a single-follower Stackelberg game. Given any $(\gammavec, \pivec)\in\U$, we define the optimal action $b^\circ_i(\gammavec_i)\in B_i$ and the optimal signal reporting policy $\varphi^\circ_i(\cdot\vert \gammavec_i): S_i\to S_i$ when the principal commits to mechanism $(\gammavec, \pivec)$ as: 
\[
	\left(b^\circ_i(\gammavec_i), \varphi^\circ_i(\cdot\vert\gammavec_i)\right)\in arg\max_{\substack{b_i\in B_i \\ \varphi: S_i\to S_i}}\left\{\sum_{s_i\in S_i}\sum_{\theta\in\Theta} \prob^{(i)}\left(s_i, \theta\vert b_i\right)\gamma_i[\varphi(s_i), \theta] - c_i(b_i)\right\},
\]
where, as common in the literature, ties are broken in favor of the principal. Therefore, the expected utility of the principal when she commits to mechanism $(\gammavec, \pivec)\in\U$ is: 
\[
	U^\circ(\gammavec, \pivec) = \sum_{\svec\in\Si}\sum_{\theta\in\Theta}\prob\left(\svec, \theta\vert \bvec^\circ(\gammavec)\right)\left[\left(\sum_{a\in\A} \pi[\varphi^\circ(\svec\vert\gammavec), a]u(a,\theta)\right) - \sum_{i\in\N}\gamma_i[\varphi_i^\circ(s_i\vert\gammavec_i), \theta]\right],
\]
where $\bvec^\circ(\gammavec) = (b_1^\circ(\gammavec_1),...,b_n^\circ(\gammavec_n))$ and $\varphi^\circ(\svec\vert\gammavec) = (\varphi^\circ_1(s_1\vert\gammavec_1),...,\varphi^\circ_n(s_n\vert\gammavec_n))$. Uncorrelated mechanisms eliminate all externalities among the agents, inducing predictable agents' responses. Hence, they are appealing in  online settings in which the principal must learn from agents' behavior. 
We conclude the section showing that despite their advantages that make uncorrelated mechanisms a very useful tool, they can be suboptimal with respect to correlated mechanisms.
\begin{restatable}{theorem}{thoptcor}\label{th:thoptcor}
	There exists a game in which no uncorrelated mechanism is optimal.
\end{restatable}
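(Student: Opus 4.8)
The plan is to exhibit an explicit instance — with as small a description as possible — in which correlating the agents' recommended actions strictly improves the principal's utility over any uncorrelated mechanism. I would aim for $n=2$ agents, $k=2$ actions each (say a ``high effort'' action $\bone$ and a ``low effort'' action $\btwo$ with $c_i(\bone)>c_i(\btwo)$), a binary state $\Theta=\{\theta_1,\theta_2\}$ with uniform prior, and binary signals. The idea is to engineer the signal distributions $\prob^{(i)}(s_i \mid b_i,\theta)$ and the principal's utility $u(a,\theta)$ so that (i) the principal only profits from the joint information when \emph{exactly one} agent exerts high effort at a time — e.g. because a single informative report suffices to identify $\theta$ and make the right decision, while having both agents work is wasteful (two costs paid for no extra decision value), and having neither work is useless; and (ii) the cost of high effort is high enough that in any \emph{uncorrelated} mechanism the principal cannot simultaneously incentivize each agent to best-respond with high effort ``half the time'' — because an uncorrelated scoring rule $\gammavec_i$ induces a \emph{fixed} best response $b_i^\circ(\gammavec_i)\in B_i$ via the Stackelberg tie-break, so each agent deterministically either always works or always shirks, never ``sometimes.''

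The key steps, in order: (1) Write down the instance precisely and compute, for the target correlated mechanism $\muvec$ that puts mass $1/2$ on $(\bone,\btwo)$ and $1/2$ on $(\btwo,\bone)$, the honest-behavior utility $U$ from Problem~\eqref{eq:genopt}, exhibiting a scoring rule $\gammavec$ that pays each agent exactly her recommended-action cost in expectation (so the IC constraints~\eqref{constr:ic} hold and the total payment equals the average cost), and verify this value is some explicit positive number $V^\star$. A clean choice is a standard strictly-proper-type scoring rule restricted to the recommended action, which makes truthful reporting optimal and makes following the recommendation weakly optimal once payments cover the cost gap; I'd double-check IC against the restricted deviation set used in Section~\ref{sec:lp}. (2) Upper-bound $U^\circ(\gammavec,\pivec)$ over all $(\gammavec,\pivec)\in\U$. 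Since $b_i^\circ(\gammavec_i)$ is a single action, the induced action profile $\bvec^\circ(\gammavec)$ is one of the four deterministic profiles; enumerate them. For each profile, bound the principal's gross utility from the best possible decision rule given the information available under that profile, and subtract the payments, which must be at least what is needed to make that profile a best response (in particular, if the profile has agent $i$ playing $\bone$, the agent must be paid at least $c_i(\bone)$ in expectation by individual rationality / the structure of the best-response argmax — more carefully, at least $c_i(\bone)-c_i(\btwo)$ more than the shirking payoff). Show each of the four values is $\le V^\star - \delta$ for some $\delta>0$.

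The main obstacle I anticipate is step~(2): ruling out \emph{all} uncorrelated mechanisms, not just ``natural'' ones. The subtlety is that an uncorrelated scoring rule for agent $i$ still depends on the full signal profile is \emph{not} true — it depends only on $s_i$ and $\theta$ — but the action policy $\pivec$ sees the whole profile $\svec$, so the principal can still aggregate; I must make sure the instance is such that aggregation does not rescue the ``both shirk'' or ``both work'' profiles. The cleanest way to kill the ``both work'' profile is to make $u(a,\theta)$ bounded so that gross utility is capped at some $u_{\max}$ while the forced payment is $c_1(\bone)+c_2(\bone)$, chosen with $u_{\max}-c_1(\bone)-c_2(\bone) < V^\star$. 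To kill ``exactly one works, deterministically'' — which is the dangerous case, since it can achieve the same gross utility as the correlated mechanism — I'd use a mild asymmetry or an individual-rationality-style floor so that a single fixed worker is paid (weakly) more than the average cost $\tfrac12(c_1(\bone)+ c_2(\btwo)) + \tfrac12(c_1(\btwo)+c_2(\bone))$ incurred by the correlated mechanism, or alternatively make the two agents' high-effort signals complementary (each alone identifies $\theta$ only on half the states) so that a lone worker yields strictly less gross utility. Once the instance is tuned so that the gap $\delta$ is strictly positive, the conclusion follows: $\sup_{\U} U^\circ < U(\muvec,\gammavec,\pivec)$, hence no uncorrelated mechanism is optimal. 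I'd present the final instance with the tuning already fixed and relegate the four-case arithmetic to the appendix.
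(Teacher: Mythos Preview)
Your proposed construction does not deliver a strict gap, and the underlying intuition misidentifies \emph{why} correlation helps. The IC constraints~\eqref{constr:ic} decompose per recommendation: for any deviation that is the identity except at one recommended action $b_i^\star$, both sides of~\eqref{constr:ic} reduce to the conditional constraint at $b_i^\star$. Consequently, in your mechanism with $\muvec$ uniform on $\{(\bone,\btwo),(\btwo,\bone)\}$, the minimum payment needed to make agent~1 play $\bone$ when recommended $(\bone,\btwo)$ is exactly the same as in the uncorrelated mechanism that deterministically induces $(\bone,\btwo)$. And since under $(\bone,\btwo)$ agent~2 shirks, $s_2$ is uninformative and the correlated scoring rule for agent~1 gains nothing from conditioning on it. Your correlated mechanism's utility is therefore the \emph{average} of two uncorrelated-mechanism utilities and cannot exceed their maximum; neither asymmetry (the cheaper fixed worker dominates the average) nor ``complementary signals'' (each realization still has only one worker, hence only partial information) rescues this.

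The paper's proof uses a different lever: the gain from correlation comes from letting the \emph{scoring rule} $\gamma_i[\bvec,\svec,\theta]$ depend on the other agents' signals $\svec_{-i}$, not from randomizing who works. In the paper's instance, agent~2 has zero cost for both actions, and when both agents play the informative action their signals are perfectly correlated. The principal recommends agent~2's informative action with small probability $\varepsilon$ and pays agent~1 \emph{only} under that recommendation, with a rule that rewards $s_1$ matching both $s_2$ and $\theta$. Because agent~1 cannot fake this match after deviating to the uninformative action, the expected payment needed to satisfy her IC constraint drops strictly below what any rule depending only on $(s_1,\theta)$ can achieve. That is the missing idea: to build an instance where cross-verification via $\svec_{-i}$ strictly lowers the IC-minimal payment, and then compare that payment to the explicit optimum over uncorrelated rules.
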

%
However, while uncorrelated mechanisms are suboptimal in general, there exists realistic classes of games in which optimal mechanisms are uncorrelated as shown by the following theorem.

\begin{restatable}{theorem}{thoptuncor}\label{th:thoptuncor}
	Assume there for each $i\in\N$, $\theta\in\Theta$ and $b_i\in B_i$, there exists a probability distribution $\psi_i(\cdot\vert b_i)\in\Delta(S_i)$ such that $\forall\bvec\in\B$,  $\forall\svec\in\Si$ and $\forall\theta\in\Theta$, 
	\(
	\prob\left(\svec, \theta\vert\bvec\right) = p[\theta]\prod_{i\in\N}\psi_i(s_i\vert b_i, \theta)
	\).
	Then, there exists a mechanism $(\gammavec, \pivec)\in\U$ that is optimal among  correlated mechanisms.
\end{restatable}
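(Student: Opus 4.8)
\textbf{Proof plan for Theorem~\ref{th:thoptuncor}.}

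The plan is to start from an arbitrary optimal correlated mechanism $(\muvec^\star, \gammavec^\star, \pivec^\star)\in\C$ and construct from it an uncorrelated mechanism $(\gammavec, \pivec)\in\U$ with weakly larger principal utility; since $\U\subset\C$, this uncorrelated mechanism is then optimal among all correlated mechanisms. The crucial structural fact we will exploit is the product form $\prob\left(\svec,\theta\vert\bvec\right) = p[\theta]\prod_{i\in\N}\psi_i(s_i\vert b_i,\theta)$: conditioned on $\theta$, the agents' signals are independent, and agent $i$'s signal distribution depends only on $b_i$. This decoupling is exactly what should let us ``average out'' the correlation in the payments without changing the incentive structure of any single agent.

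The first step is to handle the recommendation policy. Because the objective~\eqref{eq:genopt} is linear in $\muvec$ once $(\gammavec^\star,\pivec^\star)$ is fixed, and because with the product-form assumption each agent $i$'s IC constraint~\eqref{constr:ic} decomposes additively over the support of $\muvec$ (the signal expectation factors through $\psi_i(\cdot\vert b_i,\theta)$ alone), I expect we can restrict attention to a \emph{pure} recommendation $\bvec^\star\in\B$: pick a profile in the support of $\muvec^\star$ maximizing the conditional objective, and verify that recommending $\bvec^\star$ deterministically still satisfies all IC constraints and does not decrease utility. The second step is to uncorrelate the payments. For each agent $i$, define $\gamma_i[s_i,\theta]$ by taking the conditional expectation of $\gamma^\star_i[\bvec^\star,(s_i,\svec_{-i}),\theta]$ over $\svec_{-i}$ drawn from $\prod_{j\neq i}\psi_j(s_j\vert b^\star_j,\theta)$ — i.e. average over the other agents' signals under the true conditional distribution at the recommended profile. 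One checks that (i) $\gamma_i$ still lands in $[0,M]$ since it is an average of values in $[0,M]$; (ii) agent $i$'s expected truthful payment is unchanged, and more importantly agent $i$'s expected payment under \emph{any} deviation $(\phi,\varphi)\in\Phi_i$ is also unchanged, because — by independence conditional on $\theta$ and the fact that $\prob^{(i)}$ depends only on $b_i$ — agent $i$'s deviation only affects the $(b_i,s_i)$-marginal, over which the averaging was not performed, while the $\svec_{-i}$-marginal that we integrated out is unaffected by $i$'s deviation. Hence all of agent $i$'s IC constraints are preserved. The third step is to similarly replace $\pivec^\star$: for the new action policy, define $\pi[\svec,a]$ to use the original $\pi^\star[\bvec^\star,\svec,a]$ (this part needs no uncorrelation since $\pivec$ is not constrained to be uncorrelated beyond dropping the $\bvec$-argument, and $\bvec^\star$ is fixed). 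Finally, one verifies that the principal's utility $U^\circ(\gammavec,\pivec)$ equals $U(\muvec^\star,\gammavec^\star,\pivec^\star)$: the utility term $\sum_a \pi u(a,\theta)$ is untouched, and each payment term $F_i$ is preserved by construction of $\gamma_i$, using again that $\prob\left(\svec,\theta\vert\bvec^\star\right)$ factors so that the $\svec_{-i}$-average inside $\gamma_i$ recombines correctly with the outer expectation.

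The main obstacle I anticipate is the bookkeeping in step two: showing rigorously that replacing correlated payments by their $\svec_{-i}$-conditional expectations preserves \emph{every} deviation payment $F_i^{\phi,\varphi}$, not just the truthful one. The subtlety is that a deviation $\varphi$ reports $\varphi(b_i,s_i)$ in the first argument of $\gamma_i$, and one must confirm that the conditioning distribution we averaged against (the product $\prod_{j\neq i}\psi_j(s_j\vert b^\star_j,\theta)$) is precisely the conditional law of $\svec_{-i}$ given $\theta$ that appears in $F_i^{\phi,\varphi}$ — which holds because agent $i$'s action change $\phi(b_i)$ does not alter $\svec_{-i}$'s distribution under the product-form assumption, and it is here that the hypothesis $\prob^{(i)}\left(s_i\vert\bvec,\theta\right)=\prob^{(i)}\left(s_i\vert b_i,\theta\right)$ combined with full product independence is essential. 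A secondary point to be careful about is the tie-breaking convention in $b^\circ_i(\gammavec_i)$ and $\varphi^\circ_i$: we need the honest action/reporting pair to be an \emph{optimal} best response for each agent (not merely incentive compatible in the weak sense), which follows from the fact that the original mechanism is IC and the single-agent best-response objective defining $b^\circ_i,\varphi^\circ_i$ coincides, under the product form, with agent $i$'s deviation problem embedded in~\eqref{constr:ic} evaluated at the pure recommendation $\bvec^\star$.
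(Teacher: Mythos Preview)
Your overall architecture—pick a single recommended profile, then average out $\svec_{-i}$ in agent $i$'s payment using the product distribution—is close to the paper's, but step~1 has a real gap. The claim that ``each agent $i$'s IC constraint decomposes additively over the support of $\muvec$'' is not correct: agent~$i$ observes only her own recommendation $b_i$, not $\bvec_{-i}$, so the IC constraint~\eqref{constr:ic} for a deviation that acts only at $b_i^\star$ reads
\[
\sum_{\bvec_{-i}}\mu^\star[(b_i^\star,\bvec_{-i})]\Big(\text{truthful}-\text{deviation payoff at }(b_i^\star,\bvec_{-i})\Big)\ \ge\ \sum_{\bvec_{-i}}\mu^\star[(b_i^\star,\bvec_{-i})]\,C_i(b_i^\star,b_i').
\]
This is only an \emph{average} over $\bvec_{-i}$; the product-form hypothesis decouples the \emph{signal} distributions, but $\gamma_i^\star[\bvec,\svec,\theta]$ can still depend arbitrarily on $\bvec_{-i}$, so nothing forces the inequality to hold termwise. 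Consequently, if you fix a single $\bvec^\star$ and set $\gamma_i[s_i,\theta]=\sum_{\svec_{-i}}\prod_{j\neq i}\psi_j(s_j\vert b_j^\star,\theta)\,\gamma_i^\star[\bvec^\star,(s_i,\svec_{-i}),\theta]$, the resulting uncorrelated IC condition unwinds (exactly as your step~2 computation shows) to the \emph{conditional} IC at $\bvec^\star$, which need not follow from the averaged constraint.

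The paper repairs precisely this point: it still selects $\bar\bvec\in\arg\max_{\bvec:\mu^\star[\bvec]>0}G(\bvec)$ for the action policy $\pi[\svec,a]=\pi^\star[\bar\bvec,\svec,a]$, but for each agent~$i$ it averages the payment over \emph{both} $\svec_{-i}$ and $\bvec_{-i}$, namely
\[
\gamma_i[s_i,\theta]=\frac{1}{\sum_{\bvec_{-i}}\mu^\star[(\bar b_i,\bvec_{-i})]}\sum_{\bvec_{-i}}\mu^\star[(\bar b_i,\bvec_{-i})]\sum_{\svec_{-i}}\prod_{j\neq i}\psi_j(s_j\vert b_j,\theta)\,\gamma_i^\star[(\bar b_i,\bvec_{-i}),(s_i,\svec_{-i}),\theta].
\]
With this extra $\bvec_{-i}$-averaging, the uncorrelated IC for agent~$i$ reproduces exactly the original (averaged) IC at recommendation $\bar b_i$, and your step~2 observation—that agent~$i$'s deviation cannot alter the $\svec_{-i}$-law under the product form—is what makes the $\svec_{-i}$-average legitimate. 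Your steps~2 and~3 are otherwise the right computations; the missing idea is that the $\bvec_{-i}$-average must be folded into $\gamma_i$ as well, not discarded in step~1.
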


This condition models scenarios in which the signals received by the agents are independent. The absence of signals correlation makes the expressive power of correlated mechanisms futile.

	\section{Learning the optimal mechanism}\label{sec:online}
We study an online learning scenario in which the principal interacts for $T$ rounds with $n$ agents without knowing neither the joint probability distribution $\prob\left(\svec,\theta\vert\bvec\right)$ nor the cost functions $c_i$.
At each round $t\in [T]$, the principal publicly announces her mechanism. If the mechanism is correlated, then the agents receive recommendations $\bvec^t\sim\muvec$. Then, each $i\in\N$ chooses action $\tilde b^t_i$ (possibly different that $b_i^t$) incurring the cost $c_i(\tilde b^t_i)$. If the mechanism is uncorrelated the agents play according to the tuple of best responses $\tilde\bvec_t = \bvec^\circ(\gammavec^t)$. 
To avoid the issues highlighted in Section~\ref{sec:uncorr}, we will never employ correlated mechanisms that are \emph{not} IC.
Then, a state of nature $\theta^t$ and a signal profile $\svec^t$ are sampled according to $\prob(\svec,\theta\vert\tilde\bvec^t)$. Each agent $i$ observes signal $s_i$ and reports signal $\tilde s_i^t$ to the principal. If the mechanism is correlated, then $\tilde s_i^t=s_i$. If the mechanism is uncorrelated, then $\tilde s_i^t = \varphi_i^\circ(s_i^t\vert\gammavec_i^t)$. Finally, the principal takes action $a^t\sim\pi[\bvec^t, \tilde\svec^t, \cdot]$ and gets utility $u(a^t,\theta^t)$ while each agent is paid according to the scoring rule.
At the end of the round, the feedback received by the learner includes the actions $\tilde \bvec^t$ taken by the agents~\footnote{It is common in the literature to assume that agents' actions can be observed (see, \emph{e.g.,} \cite{chen2023learning}). Intuitively, this is because the truthful reporting of signals can be used to discriminate between different actions.}, the signals $\tilde \svec^t$ reported by the agents, and the state of nature $\theta^t$, while she is not able to observe the signals $\svec^t$ that were actually observed by the agents. 

The performances of the algorithm are measured in terms of \emph{cumulative regret} $R^T$, which represents the expected loss of utility for the principal due to not having selected the optimal mechanism at each $t\in [T]$. Formally, let $(\muvec^\star, \gammavec^\star, \pivec^\star)$ be an optimal mechanism (\emph{i.e.,} an optimal solution to \eqref{eq:genopt}), and let $T_c, T_u\subseteq [T]$ be the sets of rounds in which the principal committed to a correlated and uncorrelated mechanism, respectively (it holds $T_c\cup T_u = [T]$). Then, the cumulative regret is defined as: 
\[
R^T = \sum_{t\in T_c}\left[U(\muvec^\star, \gammavec^\star, \pivec^\star) - U(\muvec^t, \gammavec^t, \pivec^t)\right] + \sum_{t\in T_u}\left[U(\muvec^\star, \gammavec^\star, \pivec^\star) - U^\circ(\gammavec^t, \pivec^t)\right].
\]
Our goal is to design an algorithm that achieves $R^T=o(T)$. 
%
In the following, we let $\ell > 0$ be the minimum distance between the posteriors induced by two signals, \emph{i.e.,}  $\ell =\min_{i \in \N, b_i \in B_i, s_i,s_i' \in S_i} \sum_{\theta\in\Theta}\left(\prob^{(i)}\left(\theta\vert b_i, s_i\right) - \prob^{(i)}\left(\theta\vert b_i, s^\prime_i\right)\right)^2$, and $\iota > 0$ be the minimum probability with which each signal is received by an agent, \emph{i.e.,}  $\iota=\min_{i \in \N, b_i \in B_i, s_i \in S_i}\prob^{(i)}\left(s_i\vert b_i\right)$.

\subsection{Algorithm overview and assumptions}
For each agent $i\in\N$ and action $b_i\in B_i$, we assume to know an uncorrelated scoring rule strictly incentivizing $i$ to play action $b_i$ while also incentivizing her to report truthfully the observed signal.

\begin{assumption}\label{ass:inc}
	For each $i\in\N$, the learner knows a set of scoring rules ${\Gamma_i = \{\gammavec_i^{b_i}: S_i\times\Theta\in [0,M]\mid b_i\in B_i\}}$ and $\rho >0$ such that
	\begin{equation}\label{eq:incstrict}
		\sum_{s_i\in S_i}\sum_{\theta\in\Theta}\left[\prob^{(i)}\left(s_i, \theta\vert b_i\right)\gamma_i^{b_i}[s_i, \theta] - \prob^{(i)}\left(s_i, \theta\vert b_i^\prime\right)\gamma_i^{b_i}[\varphi_i(s_i
		), \theta]\right] \geq C(b_i, b_i^\prime) + \rho
	\end{equation}
	for all $b_i^\prime\in B_i\setminus\{b_i\}$ and $\varphi_i\in S_i\to S_i$, and such that 
	\begin{equation}\label{eq:incprop}
		\sum_{\theta\in\Theta}\prob^{(i)}\left(\theta\vert s_i, b_i\right)\left[\gamma_i^{b_i}[s_i, \theta] - \gamma_i^{b_i}[s_i^\prime, \theta]\right] \geq 0\quad\forall s_i,s_i^\prime\in S_i.
	\end{equation}
\end{assumption}

\begin{wrapfigure}[9]{R}{.55\textwidth}
	\begin{minipage}{.55\textwidth}
		\vspace{-0.7cm}
		\begin{algorithm}[H]
			\caption{Online Information Acquisition}\label{alg:overview}
			\begin{algorithmic}
				\REQUIRE $T, N_1, N_2, N_3, \rho, \Gamma, \delta$
				\STATE \LComment{Exploration phase}
				\STATE $(\zetavec, \nu, \xivec, \rho)\gets\algoprob(N_1, \Gamma, \delta)$ \LComment{Sec. \ref{sec:estprob}}	
				\STATE $(\Lambda, \chi)\gets\algocost(N_2, N_3, \Gamma, \delta)$ \LComment{Sec. \ref{sec:estcost}}
				\STATE \LComment{Commit Phase}
				\STATE $\algocommit(\zetavec, \nu, \xivec, \rho, \Lambda, \chi, \Gamma, \rho)$ \LComment{Sec. \ref{sec:commit}}
			\end{algorithmic}
		\end{algorithm}
	\end{minipage}
\end{wrapfigure}

%
Intuitively, Eq.~\eqref{eq:incstrict} guarantees that for each agent $i$, following the action recommendation is strictly better than deviating to a different action, while Eq.~\eqref{eq:incprop} guarantees that reporting the observed signal  is never worse in expectation than reporting a different one. This assumption is common in the literature (see, \emph{e.g.},~\cite{chen2023learning}), and it is necessary to achieve incentive compatibility under uncertainly.
Throughout the remaining of the paper, we let $\Gamma = \{\gammavec^{\bvec}\coloneqq(\gammavec_1^{b_1},...,\gammavec_n^{b_n})\mid \bvec\in\B\}$.

Algorithm~\ref{alg:overview} provides an high-level overview of our algorithm. The procedure is divided in two phases, an \emph{exploration phase} and a \emph{commit phase}. The exploration phase is devoted to finding the estimators $\zetavec_{\bvec}\in \Delta(\Si\times\Theta)$ of the joint probabilities $\prob\left(\cdot,\cdot\vert\bvec\right)$ for each $\bvec$, the estimators $\xivec^{(i)}_{b_i, s_i}\in\Delta(\Theta)$ of the posteriors $\prob^{(i)}(\cdot\vert b_i, s_i)$ for each $i\in\N$, $b_i\in B_i$ and $s_i\in S_i$, and the estimators $\Lambda_i(b_i, b_i^\prime)$ of the cost differences $C_i(b_i, b_i^\prime)$ for $i\in\N$, $b_i,b_i^\prime\in B_i$, together with the respective confidence bounds $\nu,\varrho,\chi\in \Reals_{\ge0}$. 
During the commit phase, instead, we leverage the estimates obtained in the previous rounds to output a sequence of IC mechanisms that guarantee sublinear cumulative regret. As inputs to the algorithm we provide the total number of rounds $T$, the minimum number of rounds $N_1,N_2,N_3$ that regulate the length of the exploration phase, the scoring rules $\Gamma$, the scalar $\rho$ described in Assumption~\ref{ass:inc}, and the desired confidence level $\delta\in (0,1)$ on the regret bound. 
We provide a description of the three algorithms \algoprob, \algocost\, and \algocommit\, in Sections \ref{sec:estprob}, \ref{sec:estcost} and \ref{sec:commit}, respectively. The guarantees of Algorithm~\ref{alg:overview} are stated in the following theorem.   

\begin{restatable}{theorem}{thgeneral}\label{th:thgeneral}
	Let $\kappa = \frac{289}{2}m^2 \ln(12|\B||\Si|Tmn/\delta)\frac{1}{\iota^2\ell^2}$. For any $\delta\in (0,1)$, with probability at least $1-\delta$, running Algorithm~\ref{alg:overview} with $N_1=N_3=T^{2/3}$ and $N_2=\log (T)$ guarantees
	\[
		R^T \leq \widetilde{\mc O}\left(\frac{M^3}{\rho\ell}|\B||\Si|mnk^3l^2 \sqrt{\ln(1/\delta)}\max\{T^{2/3},\kappa\}\right)
	\] 
\end{restatable}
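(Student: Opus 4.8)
The plan is to bound $R^T$ by decomposing $[T]$ into the \emph{exploration rounds} (those spent inside $\algoprob$ and $\algocost$) and the \emph{commit rounds} (those spent inside $\algocommit$), and to control each block on a single high-probability ``clean'' event.

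\textbf{Clean event and exploration regret.} First I would define the event $\E$ on which every estimate returned by the exploration phase lies within its claimed confidence radius: $\|\zetavec_\bvec-\prob(\cdot,\cdot\vert\bvec)\|\le\nu$ for all $\bvec\in\B$, $\|\xivec^{(i)}_{b_i,s_i}-\prob^{(i)}(\cdot\vert b_i,s_i)\|\le\varrho$ for all $i,b_i,s_i$, and $|\Lambda_i(b_i,b_i')-C_i(b_i,b_i')|\le\chi$ for all $i,b_i,b_i'$. By the guarantees proved in Sections~\ref{sec:estprob} and~\ref{sec:estcost}, a Hoeffding/Chernoff argument plus a union bound over the $\mathcal{O}(|\B||\Si|mnk^2)$ estimated quantities gives $\prob(\E)\ge 1-\delta$; the logarithmic factor inside $\kappa$ and the $\sqrt{\ln(1/\delta)}$ in the final bound are exactly what this union bound and these concentration inequalities produce. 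Next I would bound the length of the exploration phase by $\widetilde{\mathcal{O}}(\max\{T^{2/3},\kappa\})$: $\algocost$ runs $\widetilde{\mathcal{O}}(nk^2N_2N_3)=\widetilde{\mathcal{O}}(T^{2/3})$ rounds, while $\algoprob$ must reduce the posterior-estimation radius below the instance-dependent threshold (of order $\ell$) that is needed to later build a strictly-IC scoring rule, and on $\E$ a concentration bound shows this takes at most $\widetilde{\mathcal{O}}(\max\{N_1,\kappa\})$ rounds. Since every exploration round uses a mechanism built from $\Gamma$ (payments bounded by $M$) and utilities lie in $[0,1]$, the per-round regret there is $\mathcal{O}(nM)$, so this block contributes at most $\widetilde{\mathcal{O}}\big(nM|\B||\Si|\max\{T^{2/3},\kappa\}\big)$, which already sits inside the claimed bound.

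\textbf{Commit-phase regret.} For the commit rounds I would invoke the robustness guarantee of $\algocommit$ (Section~\ref{sec:commit}): plugging the perturbed parameters $(\zetavec,\Lambda)$ and a slack $\varepsilon=c(\nu+\varrho+\chi)$ into $\LP(\zetavec,\Lambda,\varepsilon)$ and applying the recovery map of Theorem~\ref{th:thlp} yields a mechanism that is $\widetilde{\mathcal{O}}(\mathrm{poly})$-close to optimal but only approximately IC; mixing it with weight $1-\lambda$ against a strictly-IC reference scoring rule $\gammavec^\bvec\in\Gamma$ (strictness margin $\rho$, cf.\ Assumption~\ref{ass:inc}), with $\lambda$ of order $(\nu+\varrho+\chi)/\rho$, restores \emph{exact} IC while giving up only $\mathcal{O}(\lambda M)$ utility. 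Two sub-steps do the work: (a) an LP-stability estimate showing the optimal value of $\LP(\zetavec,\Lambda,\varepsilon)$ is within $\widetilde{\mathcal{O}}\big(\tfrac{M^2}{\ell}\,|\B||\Si|mnk^2l^2(\nu+\varrho+\chi)\big)$ of $U(\muvec^\star,\gammavec^\star,\pivec^\star)$ — the $1/\ell$ and extra $M$ coming from pushing parameter perturbations through the payment terms and the posterior-dependent constraints, and from checking that the true optimal mechanism is feasible for the perturbed LP under the chosen $\varepsilon$ — and (b) the mixing-weight bookkeeping. Substituting $\nu,\varrho,\chi=\widetilde{\mathcal{O}}(1/\sqrt{\max\{T^{2/3},\kappa\}})$ into (a) and (b) gives a per-round commit regret of $\widetilde{\mathcal{O}}\big(\tfrac{M^3}{\rho\ell}|\B||\Si|mnk^3l^2\sqrt{\ln(1/\delta)}\,/\sqrt{\max\{T^{2/3},\kappa\}}\big)$; summing over the $\le T$ commit rounds and using $T/\sqrt{\max\{T^{2/3},\kappa\}}\le\max\{T^{2/3},\kappa\}$ yields the stated rate. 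Adding the two blocks and absorbing constants into $\widetilde{\mathcal{O}}$ finishes the argument.

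\textbf{Main obstacle.} I expect the crux to be the commit-phase robustness analysis, i.e.\ establishing \emph{simultaneously} that the mixed mechanism is exactly (not merely approximately) IC and that the total utility lost to the LP perturbation plus the mixing is only $\widetilde{\mathcal{O}}(\mathrm{poly}/(\rho\ell))$ times the estimation error. This needs a careful sensitivity analysis of $\LP(\zetavec,\Lambda,\varepsilon)$: one must show the true optimum is near-feasible for the perturbed LP with slack $\varepsilon$, quantify how badly the recovered mechanism violates the IC constraints under the \emph{true} parameters, and then pick $\lambda$ just large enough to absorb that violation via the $\rho$-margin of $\Gamma$ — with $1/\ell$ entering because maintaining strict truthfulness of signal reporting requires separating the estimated posteriors accurately, which is precisely what forces the $\kappa$-length exploration. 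A secondary difficulty is the behavioral analysis underpinning $\algocost$: since non-IC uncorrelated mechanisms are deployed there, the agents' Stackelberg best responses must be pinned down sharply enough that the binary search over mixing weights returns $\chi$-accurate cost differences.
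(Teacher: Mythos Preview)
Your overall decomposition into exploration and commit rounds, the clean-event setup, and the LP-stability-plus-mixing strategy for the commit phase all match the paper's proof closely. There is, however, one genuine gap in the commit-phase argument.

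You write that mixing the approximately-IC LP mechanism with $\gammavec^{\bvec}\in\Gamma$ (strictness margin $\rho$) ``restores exact IC.'' This step would fail as stated: Assumption~\ref{ass:inc} gives $\gammavec_i^{b_i}$ a strict margin $\rho$ only against \emph{action} deviations (Eq.~\eqref{eq:incstrict}), while against pure \emph{signal-reporting} deviations the margin is zero (Eq.~\eqref{eq:incprop} is a weak inequality). So if the LP solution $\tilde\gammavec^t$ violates IC in the direction ``follow the recommended action but misreport the signal,'' no convex combination with $\gammavec_i^{b_i}$ alone can absorb that violation. The paper handles this by a \emph{two-level} mixing (Eq.~\eqref{eq:gammat}): it first builds an ad-hoc uncorrelated scoring rule $\hat\gammavec_i^{b_i}$ from the estimated posteriors $\xivec$ (Eq.~\eqref{eq:gammastrictlyproper}) that is \emph{strictly} proper for signal reporting with margin of order $\ell$ (Lemma~\ref{le:lemmastrictlyproper}), mixes $\gammavec_i^{b_i}$ with $\hat\gammavec_i^{b_i}$ via a coefficient $\beta$ to obtain a rule that is strictly IC in \emph{both} directions with margin of order $\rho\ell$ (Lemma~\ref{le:lemmastrictinc}), and only then mixes that with $\tilde\gammavec^t$ via $\alpha$. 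This is precisely where the $1/\ell$ in the final bound originates---not from an LP-sensitivity term as you suggest in step (a), but from the fact that the strict-IC margin available for the outer mixing is $\rho\ell/65$ rather than $\rho$, forcing $1-\alpha$ to be of order $\lambda/(\rho\ell)$. You correctly flag in your ``Main obstacle'' paragraph that posterior separation is needed for signal truthfulness, but that insight must be built into the mechanism construction itself, not only into the exploration-length analysis.
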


The upper bound on $R^T$ presents a term $\max\{T^{2/3},\kappa\}$. For $T$ sufficiently large, $\kappa$ --that depends on the instance and \emph{logarithmically} on $T$-- is dominated by $T^{2/3}$ and we recover the $\widetilde{\mc O}(T^{2/3})$ bound\footnote{We remark that our algorithm does \emph{not} need to know $\ell$ and $\iota$ in advance, but implicitly estimates them during the execution.}.
	\section{Estimation of the probability distributions}\label{sec:estprob}
The estimation phase is devoted to the estimation of the joint probabilities $\prob\left(\cdot, \cdot\vert\bvec\right)$ and of the posteriors $\prob^{(i)}\left(\cdot\vert s_i, b_i\right)$ induced by action-signal couples. Let $\mc T_p\subseteq T$ be the set of rounds devoted to $\algoprob$. Furthermore, for $\bvec\in\B$, $b_i\in B_i$ and $s_i\in S_i$, let $\mc T_p(\bvec) = \left\{t\in\mc T_p\mid \bvec^t = \bvec\right\}$ and $\mc T_p^{(i)}(b_i, s_i) = \left\{t\in\mc T_p\mid \tilde s_i^t = s_i,\, b_i^t = b_i\right\}$. For $K=6 |\B|T|\Si|nm$, we introduce estimators $\zetavec_{\bvec}\in\Delta(\Si\times\Theta)$ and $\xivec^{(i)}_{b_i, s_i}\in\Delta(\Theta)$ with their confidence bounds $\nu_{\bvec}$ and $\varrho^{(i)}_{b_i,s_i}$, defined as\footnote{In this work, we denote as $\mathbbm{1}[\cdot]$ the indicator function.}:
\begin{equation}
	\zeta_{\bvec}[\svec, \theta] = \frac{1}{|\mc T_p(\bvec)|}\sum_{t\in\mc T_p(\bvec)}\hspace{-0.3cm}\mathbbm{1}\left[\tilde\svec^t=\svec, \theta^t=\theta\right],\quad\nu_{\bvec} = \sqrt{\frac{\ln(2K/\delta)}{2|\mc T_p(\bvec)|}}, \,\,\forall\bvec,\svec,\theta\label{eq:estjoint}
\end{equation} 
and
\begin{equation}
	\xi^{(i)}_{b_i, s_i}[\theta] = \frac{1}{|\mc T_p^{(i)}(b_i, s_i)|}\sum_{t\in\mc T_p^{(i)}(b_i, s_i)}\hspace{-0.5cm}\mathbbm{1}\left[\theta^t=\theta\right]\,\quad\varrho^{(i)}_{b_i, s_i}=\sqrt{\frac{\ln(2K/\delta)}{2|\mc T_p^{(i)}(b_i, s_i)|}}\quad\forall i, b_i, s_i, \theta\label{eq:estpost}.
\end{equation}

\begin{wrapfigure}[19]{R}{.45\textwidth}
	\begin{minipage}{.45\textwidth}
		\vspace{-1.2cm}
		\begin{algorithm}[H]
			\caption{\algoprob}\label{alg:estprob}
			\begin{algorithmic}
				\REQUIRE $N_1, \Gamma, \delta$
				\STATE $\mc T_p(\bvec)\gets\{\varnothing\}\quad\forall\bvec$
				\STATE $\mc T_p^{(i)}(b_i, s_i)\gets\{\varnothing\}\quad\forall i,b_i,s_i$
				\STATE $\pivec[\svec, a] = 1/d\quad\forall \svec,a$
				\FOR{$\bvec\in\B$}
				\WHILE{$|\mc T(\bvec)|< N_1\,\vee\, \bar\varrho > \frac{\underline d}{13m}$}
				\STATE Select $(\gammavec^{\bvec}, \pivec)$ and observe $\tilde\svec^t,\theta^t$
				\STATE $\mc T_p(\bvec)\gets \mc T_p(\bvec)\cup\left\{t\right\}$
				\STATE $\mc T_p^{(i)}(b_i, \tilde s_i^t)\gets T_p^{(i)}(b_i, \tilde s_i^t) \cup \left\{t\right\}$ $\forall i$
				\STATE Update $(\zeta_{\bvec}, \nu_{\bvec})$ as in Eq. \eqref{eq:estjoint}
				\STATE Update $(\xi^{i}_{b_i, \tilde s_i^t},\varrho^{(i)}_{b_i, \tilde s_i^t})$ as in Eq. \eqref{eq:estpost}  $\forall i$
				\STATE $\bar\varrho\gets \max_{i, b_i, s_i}\varrho^{(i)}_{b_i, s_i}$
				\STATE $\underline d\gets \min_{i, s_i, s_i^\prime} ||\xivec^{(i)}_{b_i, s_i} - \xivec^{(i)}_{b_i, s_i^\prime} ||_2^2$
				\ENDWHILE
				\ENDFOR
				\RETURN $(\zetavec_{\bvec})_{\bvec},\, \nu,\,(\xivec^{(i)}_{b_i, s_i})_{i, b_i, s_i},\,\varrho$
			\end{algorithmic}
		\end{algorithm}
	\end{minipage}
\end{wrapfigure}

The procedure for obtaining such estimators is described in Algorithm~\ref{alg:estprob}. It leverages the knowledge of the scoring rules in Assumption~\ref{ass:inc} to guarantee a sufficient number of samples for each probability distribution that we want to estimate. In particular, the algorithm iterates over all $\bvec\in\B$ and commits to scoring rule $\gammavec^{\bvec}\in\Gamma$ for at least $N_1$ rounds and until a specific condition is met. Committing to $\gammavec^{\bvec}$ guarantees that the agents are incentivized to play $\bvec$ (Eq. \ref{eq:incstrict}) and that they report the received signal (Eq. \ref{eq:incprop}). This ensures that the feedback received is reliable for estimating both probability distributions. 
The condition $\bar\varrho \leq \underline{d}/13m$ on the confidence bounds guarantees that we have collected enough samples to estimate each posterior distribution. The required precision depends on the instance parameters $\iota$ and $\ell$ and will be fundamental to design approximately optimal mechanism that are IC (see Section~\ref{sec:commit}).
%
%
To formalize the guarantees of Algorithm \ref{alg:estprob}, we introduce the \emph{clean event} $\E_p$.
\begin{definition}[Clean event for probability estimation]
	Let $\kappa \defeq \frac{289}{2}m^2 \ln(2K/\delta)\frac{1}{\iota^2\ell^2}$. Let ${\nu\defeq\max_{\bvec}\nu_{\bvec}}$ and $\varrho\defeq\max_{i,b_i,s_i}\varrho^{(i)}_{b_i,s_i}$The clean event $\E_p$ holds if, for all $t\in\mc T_p$ it holds that: 
	\[
		|\zeta_{\bvec}[\svec, \theta] - \prob\left(\svec, \theta\vert\bvec\right)| \leq \nu \quad\forall\bvec, \svec,\theta, \quad\quad
		|\xi^{(i)}_{b_i, s_i}[\theta] - \prob^{(i)}\left(\theta\vert b_i, s_i\right)| \leq \varrho \quad\forall i,b_i,s_i,\theta
	\]
	and if, whenever $|\mc T_p(\bvec)|\geq \kappa$, it holds that
	\[
		|\mc T_p^{(i)}(b_i, s_i)| \geq \frac{1}{2}\iota |\mc T_p(\bvec)| \quad\forall i\in\N,\,\forall \bvec\in\B,\,\forall s_i\in S_i,
	\]
\end{definition}
Using standard concentration arguments, it is possible to show the following.
\begin{restatable}{lemma}{lemmaclean}\label{le:lemmacleanprob}
	The clean event $\E_p$ holds with probability at least $1-\frac{\delta}{2}$.
\end{restatable}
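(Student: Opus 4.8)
The plan is to establish the clean event $\E_p$ by bounding the probability of its complement via a union bound over three families of ``bad'' events, each controlled by a standard concentration inequality. Recall $\E_p$ consists of three assertions: (a) uniform closeness of the joint estimators $\zeta_{\bvec}[\svec,\theta]$ to $\prob(\svec,\theta\vert\bvec)$ within $\nu_{\bvec}$; (b) uniform closeness of the posterior estimators $\xi^{(i)}_{b_i,s_i}[\theta]$ to $\prob^{(i)}(\theta\vert b_i,s_i)$ within $\varrho^{(i)}_{b_i,s_i}$; and (c) the sample-count guarantee $|\mc T_p^{(i)}(b_i,s_i)|\geq \tfrac12\iota|\mc T_p(\bvec)|$ whenever $|\mc T_p(\bvec)|\geq\kappa$. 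First I would observe that, because Algorithm~\ref{alg:estprob} commits to $\gammavec^{\bvec}\in\Gamma$ during rounds in $\mc T_p(\bvec)$, Assumption~\ref{ass:inc} (Eqs.~\eqref{eq:incstrict} and~\eqref{eq:incprop}) guarantees the agents truthfully play $\bvec$ and truthfully report their signals; hence in these rounds $\tilde\svec^t$ and $\theta^t$ are genuine i.i.d.\ draws from $\prob(\cdot,\cdot\vert\bvec)$, and likewise each $\tilde s_i^t$ is a genuine draw governing $\prob^{(i)}(\cdot\vert b_i,s_i)$. This is the structural fact that legitimizes treating the empirical frequencies as sample means of i.i.d.\ Bernoulli variables.

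Next I would handle (a): for each fixed $\bvec$, each fixed $(\svec,\theta)$, and each prefix length $\tau$ of $\mc T_p(\bvec)$, the quantity $\zeta_{\bvec}[\svec,\theta]$ based on $\tau$ samples is an average of $\tau$ i.i.d.\ $\{0,1\}$ variables with mean $\prob(\svec,\theta\vert\bvec)$, so Hoeffding gives $\prob(|\zeta_{\bvec}[\svec,\theta]-\prob(\svec,\theta\vert\bvec)|>\sqrt{\ln(2K/\delta)/(2\tau)})\leq \delta/K$. Since $\mc T_p(\bvec)$ runs for at most $T$ rounds, $|\B||\Si|m$ triples $(\bvec,\svec,\theta)$ exist, and a further factor of $T$ accounts for the random stopping time (equivalently, take a union over all prefix lengths $\tau\le T$), the total bad probability for (a) is at most $|\B|\cdot T\cdot|\Si|\cdot m\cdot(\delta/K)$; with $K=6|\B|T|\Si|nm$ this is $\delta/6$. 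An identical argument for (b), using $\xi^{(i)}_{b_i,s_i}[\theta]$ as an average of $|\mc T_p^{(i)}(b_i,s_i)|$ i.i.d.\ Bernoulli($\prob^{(i)}(\theta\vert b_i,s_i)$)) variables and union-bounding over the $n k l m$ tuples $(i,b_i,s_i,\theta)$ and over prefix lengths, again costs at most $\delta/6$. For (c), I would fix $\bvec$, $i$, $s_i$ and note that conditioned on being in round $t\in\mc T_p(\bvec)$ the indicator $\mathbbm{1}[\tilde s_i^t=s_i]$ is Bernoulli with mean $\prob^{(i)}(s_i\vert b_i)\geq\iota$; then $|\mc T_p^{(i)}(b_i,s_i)|$ is a sum of such indicators over $|\mc T_p(\bvec)|$ rounds, and a multiplicative (lower-tail) Chernoff bound shows that when $|\mc T_p(\bvec)|\geq\kappa$ — where $\kappa$ is chosen exactly large enough that $\kappa\iota$ exceeds the Chernoff threshold given $\ln(2K/\delta)$ — we have $|\mc T_p^{(i)}(b_i,s_i)|<\tfrac12\iota|\mc T_p(\bvec)|$ with probability at most $\delta/K$ per prefix; union-bounding over $(i,\bvec,s_i)$ and prefix lengths again gives at most $\delta/6$. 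Summing the three contributions yields $\prob(\neg\E_p)\leq\delta/2$, i.e.\ $\E_p$ holds with probability at least $1-\delta/2$.

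The main obstacle I expect is the careful bookkeeping around the random stopping time: the number of rounds $|\mc T_p(\bvec)|$ is itself data-dependent (the \textbf{while} loop exits when $\bar\varrho\le\underline d/13m$), so one cannot directly apply a single Hoeffding bound at the final $\tau$. The clean resolution is a uniform-over-time union bound: declare the bad event to be ``there exists a prefix length $\tau\in\{1,\dots,T\}$ at which the deviation exceeds $\sqrt{\ln(2K/\delta)/(2\tau)}$,'' pay the extra factor of $T$ (already absorbed into $K$), and then the bound holds simultaneously at whatever (possibly random) $\tau=|\mc T_p(\bvec)|$ the algorithm stops at. A secondary subtlety is verifying that the value of $\kappa$ stated in the definition — $\kappa=\tfrac{289}{2}m^2\ln(2K/\delta)\iota^{-2}\ell^{-2}$ — is indeed what the Chernoff computation in (c) demands; but since (c) as stated only needs $|\mc T_p(\bvec)|\geq\kappa\Rightarrow$ high-probability count, any $\kappa$ at least $\Theta(\iota^{-1}\ln(K/\delta))$ suffices, and the stated $\kappa$ (sized for the downstream posterior-accuracy requirement of Section~\ref{sec:commit}) is comfortably larger, so the implication goes through. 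Everything else is routine: linearity of expectation, $\prob^{(i)}(s_i\vert b_i)\ge\iota$ by definition of $\iota$, and the fact that the per-tuple, per-prefix failure probabilities were each set to $\delta/K$ precisely so the global union bound closes at $\delta/2$.
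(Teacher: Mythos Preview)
Your proposal is correct and follows essentially the same approach as the paper: Hoeffding for the joint estimators, Hoeffding for the posterior estimators, a multiplicative Chernoff lower-tail bound for the sample counts, and a union bound with $K=6|\B|T|\Si|nm$. You are in fact more explicit than the paper about the random stopping time (the paper's ``$\forall t\in\cup_{\bvec}\mc T_p(\bvec)$'' clause is its version of your uniform-over-prefixes union), about the role of Assumption~\ref{ass:inc} in ensuring the samples are genuine, and about verifying that the stated $\kappa$ is large enough for the Chernoff step, but the argument is the same.
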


Before concluding this section, we point out that the number of samples $|\mc T_p(\bvec)|$ needed for each $\bvec\in\B$  to have $\bar\varrho \leq \underline{d}/13m$ depends on the value of the parameters $\ell$ and $\iota$. Indeed, the smaller are the minimum signal probability $\iota$ and the minimum posteriors distance $\ell$, the higher is the number of samples needed to have an accurate estimation and to satisfy the above condition. However, setting $N_1=T^{2/3}$, the number of samples needed to satisfy the two terminating conditions becomes dominated by $N_1$ in non-degenerate instances in which $T$ is sufficiently large. Formally,
\begin{restatable}{lemma}{lemmasamples}\label{le:lemmasamples}
	Assume the clean event $\E_p$ holds. Then, at the end of the execution of \algoprob, for each $\bvec\in\B$ it holds that
	\(
		|\mc T_p(\bvec)| \leq \max\left\{N_1, \kappa\right\}.
	\)
\end{restatable}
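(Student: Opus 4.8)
The claim is that, on the clean event $\E_p$, the while-loop for each fixed $\bvec$ terminates once $|\mc T_p(\bvec)|$ reaches $\max\{N_1,\kappa\}$, i.e., the terminating condition $\bar\varrho \le \underline d/(13m)$ is met as soon as $|\mc T_p(\bvec)| \ge \kappa$ (the $N_1$ part of the bound is trivially forced by the first terminating condition $|\mc T(\bvec)| \ge N_1$). So the real content is: \emph{whenever $|\mc T_p(\bvec)| \ge \kappa$, on $\E_p$ we have $\bar\varrho \le \underline d/(13m)$.} The plan is to lower-bound $\underline d$ and upper-bound $\bar\varrho$ separately, each in terms of $\iota$, $\ell$, $m$, and $|\mc T_p(\bvec)|$, and then combine.

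First I would bound $\bar\varrho$. By definition $\bar\varrho = \max_{i,b_i,s_i}\varrho^{(i)}_{b_i,s_i} = \max \sqrt{\ln(2K/\delta)/(2|\mc T_p^{(i)}(b_i,s_i)|)}$. On $\E_p$, since $|\mc T_p(\bvec)| \ge \kappa$, the second property of the clean event gives $|\mc T_p^{(i)}(b_i,s_i)| \ge \tfrac12 \iota |\mc T_p(\bvec)|$ for every relevant triple, so $\bar\varrho \le \sqrt{\ln(2K/\delta)/(\iota |\mc T_p(\bvec)|)}$. Next I would lower-bound $\underline d = \min_{i,s_i,s_i'} \|\xivec^{(i)}_{b_i,s_i} - \xivec^{(i)}_{b_i,s_i'}\|_2^2$. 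By the triangle inequality (in $\ell_2$) and the fact that $\|\xivec^{(i)}_{b_i,s_i} - \prob^{(i)}(\cdot\vert b_i,s_i)\|_2 \le \sqrt{m}\,\varrho \le \sqrt{m}\,\bar\varrho$ on $\E_p$ (the $\ell_2$ norm of an $m$-vector whose coordinates are each at most $\varrho$ is at most $\sqrt m\varrho$), we get $\|\xivec^{(i)}_{b_i,s_i} - \xivec^{(i)}_{b_i,s_i'}\|_2 \ge \|\prob^{(i)}(\cdot\vert b_i,s_i) - \prob^{(i)}(\cdot\vert b_i,s_i')\|_2 - 2\sqrt m\,\bar\varrho \ge \sqrt{\ell} - 2\sqrt m\,\bar\varrho$, using the definition of $\ell$ as the minimum squared $\ell_2$ distance between posteriors. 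Squaring, $\underline d \ge (\sqrt\ell - 2\sqrt m\,\bar\varrho)^2$, provided the bracket is nonnegative.

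Then I would plug in $|\mc T_p(\bvec)| = \kappa = \tfrac{289}{2} m^2 \ln(2K/\delta)/(\iota^2\ell^2)$ to see that $\bar\varrho \le \sqrt{\ln(2K/\delta)/(\iota\kappa)} = \sqrt{\tfrac{2}{289}}\cdot \tfrac{\iota\ell}{m\sqrt m}$; in particular $\bar\varrho$ is small, and one checks $2\sqrt m\,\bar\varrho \le \tfrac12\sqrt\ell$ (since $\iota \le 1$ and $\ell \le$ a fixed constant — here $\ell$ is at most $2$ because it is a squared distance between probability vectors), so the bracket is nonnegative and $\sqrt{\underline d} \ge \sqrt\ell - 2\sqrt m\,\bar\varrho \ge \tfrac12\sqrt\ell$, hence $\underline d \ge \ell/4$. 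It then remains to verify the chain $\bar\varrho \le \underline d/(13m)$: using $\underline d \ge \ell/4$ it suffices that $\bar\varrho \le \ell/(52 m)$, and since $|\mc T_p(\bvec)| \ge \kappa$ gives $\bar\varrho \le \sqrt{\tfrac{2}{289}}\,\iota\ell/m^{3/2} \le \tfrac{1}{12}\,\ell/m$ (as $\iota\le1$, $m\ge1$), this holds with room to spare — the constant $289/2 = (17)^2/2$ in $\kappa$ is precisely calibrated so that $\sqrt{2/289} = 1/\sqrt{289/2}$ clears the $1/13$ threshold after the triangle-inequality slack is absorbed. Since the terminating condition is satisfied at $|\mc T_p(\bvec)| = \kappa$ (and the loop only runs while it is \emph{violated}), and it also requires $|\mc T_p(\bvec)| \ge N_1$, the loop stops at $\max\{N_1,\kappa\}$ at the latest.

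\textbf{Main obstacle.} The delicate part is the bookkeeping of constants so that the triangle-inequality slack ($2\sqrt m\,\bar\varrho$ subtracted twice, plus squaring) is genuinely dominated, i.e., checking that the particular constant $\tfrac{289}{2}$ in the definition of $\kappa$ makes all the inequalities go through simultaneously — in particular that the bracket $\sqrt\ell - 2\sqrt m\,\bar\varrho$ stays positive and that after squaring we still beat the $1/(13m)$ factor. I would also need to be slightly careful about the edge case where some $|\mc T_p(\bvec)|$ never reaches $\kappa$ during the loop: but that cannot happen, because the first terminating clause already forces at least $N_1$ iterations and the loop keeps running (collecting more samples, so $|\mc T_p(\bvec)|$ strictly increases each round) until \emph{both} clauses hold, so by the argument above it must stop by round $\max\{N_1,\kappa\}$.
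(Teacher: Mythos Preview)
Your strategy is exactly the paper's: on $\E_p$, once $|\mc T_p(\bvec)|\ge\kappa$ use the clean-event guarantee $|\mc T_p^{(i)}(b_i,s_i)|\ge\tfrac12\iota|\mc T_p(\bvec)|$ to upper-bound $\bar\varrho$, and use the concentration of $\xivec$ around the true posteriors to lower-bound $\underline d$ in terms of $\ell$. But the constants in your execution do not close, for two concrete reasons.

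First, the arithmetic computing $\bar\varrho$ is wrong. Plugging $\kappa=\tfrac{289}{2}m^2\ln(2K/\delta)/(\iota^2\ell^2)$ into $\sqrt{\ln(2K/\delta)/(\iota\kappa)}$ gives
\[
\bar\varrho \ \le\ \sqrt{\frac{2\iota\ell^2}{289m^2}}\ =\ \frac{\sqrt{2\iota}\,\ell}{17m},
\]
not $\sqrt{2/289}\cdot \iota\ell/m^{3/2}$; you replaced $\sqrt\iota$ by $\iota$ and $m$ by $m^{3/2}$, both of which shrink the bound artificially. Second, and more importantly, coarsening to $\underline d\ge \ell/4$ is too lossy: that forces you to need $\bar\varrho\le \ell/(52m)$, but the correct upper bound above only gives $\bar\varrho\le \ell/(17m)$ (taking $\iota\le 1/2$ so $\sqrt{2\iota}\le 1$), and $1/17>1/52$, so the claim ``holds with room to spare'' is false. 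The paper avoids this loss by \emph{not} bounding $\underline d$ and $\bar\varrho$ separately: it shows $\underline d\ge \ell - 4m\bar\varrho$ (coordinatewise triangle inequality, then using $\bar\varrho\le 1$), adds and subtracts $13m\bar\varrho$, and checks directly that $\underline d - 13m\bar\varrho \ge \ell - 17m\bar\varrho \ge \ell(1-\sqrt{2\iota})\ge 0$. Your $\ell_2$-triangle-inequality route can be rescued the same way: keep the sharper $\underline d \ge \ell - 4\sqrt{m\ell}\,\bar\varrho$ (don't square down to $\ell/4$) and compare to $13m\bar\varrho$; using $\ell\le m$ one again lands on exactly $17m\bar\varrho\le \ell$, which the definition of $\kappa$ is calibrated to deliver.
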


	\section{Estimation of the cost differences}\label{sec:estcost}
The second phase aims at obtaining high-confidence bounds for the cost differences $C_i(b_i, b_i^\prime)$ for $i\in\N$, $b_i,b_i^\prime\in B_i$. For each agent $i\in\N$,  the algorithm explores each pair $b_i, b_i^\prime\in B_i$ and executes a binary search (BS) routine in order to estimate $C_i(b_i, b_i^\prime)$, leveraging the knowledge of the uncorrelated scoring rules $\gammavec_i^{b_i},\gammavec_i^{b_i^\prime}\in\Gamma_i$. In particular,  playing convex combinations of the two scoring rules for $N_2$ rounds,  the algorithm finds two scoring rules $\gammavec_i$ and $\gammavec_i^\prime$ such that (i) $||\gammavec_i - \gammavec_i^\prime||_{\infty}\leq M/2^{N_2}$, (ii) $\gammavec_i$ incentivizes $b_i$ over $b_i^\prime$, and (iii) $\gammavec_i^\prime$ incentivizes $b_i^\prime$ over $b_i$. 
Then, the algorithm estimates for $N_3$ rounds the expected payments received by agent $i$ under scoring rules $\gammavec_i$ and $\gammavec_i^\prime$ and uses such estimates, together with the bound on $||\gammavec_i - \gammavec_i^\prime||_{\infty}$, to obtain the estimator $\Lambda_i(b_i, b_i^\prime)$ and the confidence bound $\chi_i[b_i, b_i^\prime]$. However, it might happen that the BS routine fails to find such scoring rules. Due to space constraints, the study of those cases, as well as a more thorough description of \algocost, are deferred to Appendix \ref{sec:app_estcost}. To formalize the theoretical guarantees of $\algocost$, we need the following clean event.
\begin{definition}[Clean event for cost estimation]
	Let $\chi=\max_{i,b_i,b_i^\prime}\chi_i[b_i,b_i^\prime]$. The clean event $\E_c$ holds if at the end of the execution of $\algocost$:
	\[
		\Lambda_i(b_i, b_i^\prime) - \chi \leq C_i(b_i, b_i^\prime) \leq \Lambda_i(b_i, b_i^\prime) + \chi\quad\forall i\in\N,\,\forall b_i,b_i^\prime\in B_i,
	\]
\end{definition}
Then, it is possible to provide a lower bound on the probability with which $\E_c$ is verified.
\begin{restatable}{lemma}{lemmacleancost}\label{le:lemmacleancost}
	The clean event $\E_c$ holds with probability at least $1-\frac{\delta}{2}$.
\end{restatable}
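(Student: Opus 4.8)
The plan is to decompose the clean event $\E_c$ into the conjunction of several simpler high-probability events, one for each of the $n \cdot k^2$ triples $(i, b_i, b_i')$ that $\algocost$ processes, and then take a union bound. For a fixed triple, the estimator $\Lambda_i(b_i,b_i')$ is built in two stages, and the error $|\Lambda_i(b_i,b_i') - C_i(b_i,b_i')|$ splits into two contributions that I would bound separately: (i) a \emph{discretization error} coming from the fact that the binary search only guarantees $\|\gammavec_i - \gammavec_i'\|_\infty \le M/2^{N_2}$ rather than $\gammavec_i = \gammavec_i'$, and (ii) a \emph{statistical error} coming from estimating the two expected payments $\sum_{s_i,\theta}\prob^{(i)}(s_i,\theta\vert b_i)\gamma_i[\varphi^\circ_i(s_i),\theta]$ and the analogous quantity under $\gammavec_i'$ from only $N_3$ samples each. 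The key identity to exploit is that when $\gammavec_i$ incentivizes $b_i$ over $b_i'$ and $\gammavec_i'$ incentivizes $b_i'$ over $b_i$ (properties (ii)--(iii) in the description), the cost difference $C_i(b_i,b_i')$ is sandwiched between the payment gap under $\gammavec_i$ and the payment gap under $\gammavec_i'$; since these two scoring rules are $M/2^{N_2}$-close in $\ell_\infty$, the sandwich has width $O(l m \, M/2^{N_2})$, which is where the $N_2 = \log T$ choice makes the discretization error negligible.

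Concretely, the first step is to recall/restate from the (deferred) description of $\algocost$ that, on the event that the binary search succeeds, the two returned scoring rules satisfy the three listed properties. The second step is to write, for the successful case, the deterministic sandwich $A' - \mathrm{err} \le C_i(b_i,b_i') \le A + \mathrm{err}$ where $A, A'$ are the true expected-payment differences under $\gammavec_i,\gammavec_i'$ and $\mathrm{err} = O(lmM/2^{N_2})$ bounds $|A - A'|$ using $\|\gammavec_i - \gammavec_i'\|_\infty \le M/2^{N_2}$ together with the fact that each payment difference is a sum over at most $lm$ terms each bounded by $M$. The third step is the statistical part: the empirical payment differences used to form $\Lambda_i(b_i,b_i')$ are averages of $N_3$ i.i.d.\ bounded (in $[-M,M]$) random variables, so a Hoeffding bound gives deviation at most $M\sqrt{2\ln(\cdot)/N_3}$ with the appropriate failure probability; setting $\chi_i[b_i,b_i']$ to be the sum of this Hoeffding radius and the discretization term $\mathrm{err}$ yields the two-sided inclusion $\Lambda_i(b_i,b_i') - \chi \le C_i(b_i,b_i') \le \Lambda_i(b_i,b_i') + \chi$ for that triple. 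The fourth step handles the failure case of the binary search: one must argue (referring to the Appendix analysis) that a BS failure is itself informative — it pins down $C_i(b_i,b_i')$ into a small interval directly — so that the returned $(\Lambda_i, \chi_i)$ is still valid; I would fold this into the same per-triple event. The final step is the union bound: with $K$ (or a comparable quantity) chosen exactly to absorb the $n k^2$ triples and the $N_3$-round Hoeffding applications, each per-triple failure probability is at most $\frac{\delta}{2 n k^2}$, so $\E_c$ fails with probability at most $\delta/2$.

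The main obstacle is step four — the careful bookkeeping of what happens when the binary search does \emph{not} terminate with both incentivization properties holding, i.e.\ arguing that the non-truthful behavior of the agent under the intermediate convex-combination scoring rules is still analyzable and still yields a valid confidence interval for $C_i(b_i,b_i')$. This requires reasoning about the agent's best response as a function of the mixing coefficient (a single-follower Stackelberg best response, piecewise constant in the coefficient), and showing the crossover point of the agent's preferred action localizes the cost difference; this is precisely the "non-trivial analysis of the agents' behavior" alluded to in the introduction and is deferred to Appendix~\ref{sec:app_estcost}. Everything else is a routine combination of Hoeffding's inequality, a triangle-inequality bound on the discretization error, and a union bound over a polynomial number of events.
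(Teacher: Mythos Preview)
Your steps 1--3 and 5 track the paper's argument closely: the paper defines an auxiliary clean event $\bar\E_d$ that asserts Hoeffding concentration for every empirical payment computed in a payment-estimation phase, shows $\bar\E_d$ holds with probability at least $1-\delta/2$ by a union bound over at most $nk^2$ pairs (Lemma~\ref{le:lemmaauxcleancost}), and then, on $\bar\E_d$, combines the deterministic $\ell_\infty$ bound $\|\gammavec_i-\gammavec_i'\|_\infty\le M/2^{N_2}$ (Lemma~\ref{le:lemmadistbs}) with a Lipschitz-type bound $|F_i^\circ(\gammavec_i\vert b)-F_i^\circ(\gammavec_i'\vert b)|\le\|\gammavec_i-\gammavec_i'\|_\infty$ (Lemma~\ref{le:lemmaclosepayment}) to sandwich $C_i(b_i,b_i')$ (Lemma~\ref{le:lemmacostsafterpay}). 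One small remark: your discretization error $O(lm\,M/2^{N_2})$ is looser than needed; since $F_i^\circ$ is an expectation over a probability distribution, the $\ell_\infty$ perturbation of the scoring rule propagates with constant $1$, not $lm$.

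Your step 4, however, misidentifies what happens when the binary search ``fails''. The event you must handle is that at some midpoint $\bar\gammavec_i$ the agent's best response is a \emph{third} action $\tilde b_i\notin\{b_i,b_i'\}$. This does \emph{not} pin down $C_i(b_i,b_i')$ directly, and there is no single crossover point between $b_i$ and $b_i'$ to localize --- the segment between $\gammavec_i$ and $\gammavec_i'$ may pass through several best-response regions for actions other than $b_i,b_i'$. What the paper actually does (the ``split phase'' in Appendix~\ref{sec:app_estcost}) is recurse: it launches two new binary searches, one for $C_i(b_i,\tilde b_i)$ on the segment $[\gammavec_i,\bar\gammavec_i]$ and one for $C_i(\tilde b_i,b_i')$ on $[\bar\gammavec_i,\gammavec_i']$, and then sets $\Lambda_i(b_i,b_i')=\Lambda_i(b_i,\tilde b_i)+\Lambda_i(\tilde b_i,b_i')$ and $\chi_i[b_i,b_i']=\chi_i[b_i,\tilde b_i]+\chi_i[\tilde b_i,b_i']$, using the telescoping identity $C_i(b_i,b_i')=C_i(b_i,\tilde b_i)+C_i(\tilde b_i,b_i')$. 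The piecewise-constant best-response structure you mention is used, but only to bound the \emph{number} of such splits (at most $kl^2$ per initial pair, Lemma~\ref{le:lemmabrregions}), which controls both the round count and the blowup in $\chi$; it is not used to localize $C_i(b_i,b_i')$ in one shot. Your plan as written would not go through without this recursive/telescoping mechanism.
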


Furthermore, to conclude this section, we show how the number of rounds used by $\algocost$ varies as a function of $N_2$ and $N_3$. 

\begin{restatable}{lemma}{lemmaroundcosts}\label{le:lemmaroundcosts}
	Let $\mc T_{d}$ be the set of rounds devoted to the execution of $\algocost$. Then it holds that
	\(
		|\mc T_d| \leq nk^3l^2(N_2 + N_3).
	\) 
\end{restatable}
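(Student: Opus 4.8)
The plan is to bound the number of rounds consumed by \algocost{} by carefully accounting for the nested loops in its execution. The outer structure iterates over agents $i\in\N$ and over ordered pairs $(b_i,b_i')\in B_i\times B_i$; for each such triple the algorithm runs a binary search phase followed by an estimation phase. First I would note that there are $n$ agents, and for each agent there are $k^2$ ordered pairs of actions (or $\binom{k}{2}$ unordered pairs, but bounding by $k^2$ is safe and matches the statement up to constants absorbed in the $k^3$ term). Actually, re-examining the claimed bound $nk^3l^2(N_2+N_3)$, the extra factor $k l^2$ beyond the obvious $n k^2$ suggests that the per-pair cost is not simply $N_2+N_3$ but carries additional dependence; I would reconcile this by recalling that each ``round'' of the binary search or of the estimation phase may itself require committing to a scoring rule for a number of sub-rounds that scales with the signal-space size $l$ and possibly an additional action factor, or that the failure cases of the BS routine (deferred to the appendix) trigger extra exploration that is bounded by $O(kl^2)$ additional rounds per pair.

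Concretely, the key steps in order: (1) fix $i\in\N$ and a pair $(b_i,b_i')$; invoke the description of \algocost{} from Appendix~\ref{sec:app_estcost} to see that the binary search phase lasts exactly $N_2$ iterations, each iteration committing to a convex combination $\gammavec_i$ of $\gammavec_i^{b_i},\gammavec_i^{b_i'}$ for a bounded number of rounds — I would extract from the appendix the precise per-iteration round count and verify it is $O(1)$ or $O(l)$; (2) the subsequent estimation phase commits to $\gammavec_i$ and $\gammavec_i'$ for $N_3$ rounds each (so $2N_3$, absorbed by constants), and if the BS routine ``fails'' the fallback branch uses at most $O(kl^2)$ extra rounds, which I would locate in the appendix; (3) sum over the $k^2$ pairs and the $n$ agents to get $n k^2 \cdot O(l^2)(N_2+N_3) = O(nk^3l^2(N_2+N_3))$ once the stray factors are folded in — here I'd be careful that one of the factors of $k$ or $l^2$ comes from per-pair overhead rather than the pair count itself, matching the exact exponents in the statement; (4) conclude $|\mc T_d|\le nk^3l^2(N_2+N_3)$.

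The main obstacle will be reconstructing the exact per-pair round budget of \algocost{}, since the algorithm's full specification — in particular the handling of the cases where the binary search fails to produce scoring rules satisfying conditions (i)--(iii) — is deferred to Appendix~\ref{sec:app_estcost} and is not visible in this excerpt. The clean-event Lemmas~\ref{le:lemmacleancost} already presuppose that \algocost{} terminates, so termination is not in question; what must be done is a tight worst-case count that produces exactly the exponents $k^3$ and $l^2$ rather than, say, $k^2 l$ or $k^4$. I would therefore treat the appendix's pseudocode as the ground truth, identify each \textbf{for}/\textbf{while} loop, attach to each its maximum trip count (the BS loop: $N_2$; the estimation loop: $N_3$; the outer loops: $n$ and $k^2$; any inner repair loop: $O(kl^2/k^2)=O(l^2/k)$ or similar so the product lands on $k^3 l^2$), and multiply. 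The remaining work is purely bookkeeping: no concentration inequality or game-theoretic argument is needed, only that every committed scoring rule occupies at least one round and the loop bounds are as stated.
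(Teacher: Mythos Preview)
Your outer-loop accounting is right, but you misidentify the source of the extra $kl^2$ factor. It is not ``$O(kl^2)$ extra rounds'' added when the binary search fails; it is a \emph{multiplicative} bound on how many separate search-plus-estimation phases can be spawned for a single pair $(b_i,b_i')$. During the search the agent may best-respond with a third action $\tilde b_i\notin\{b_i,b_i'\}$; the algorithm then enters a \emph{split phase} and recursively launches two fresh \textsc{BS} calls on the sub-segments $[\gammavec_i,\bar\gammavec_i]$ and $[\bar\gammavec_i,\gammavec_i']$. These calls may themselves split, so the total number of leaf calls---each costing up to $N_2$ search rounds plus $2N_3$ estimation rounds---is a priori not bounded by any loop counter you can read off the pseudocode.

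The missing ingredient is Lemma~\ref{le:lemmabrregions}: along the one-parameter family $\alpha\gammavec_i^{b_i}+(1-\alpha)\gammavec_i^{b_i'}$, the interval $[0,1]$ decomposes into at most $kl^2$ convex best-response regions, and a split can only occur when the probed midpoint lands in a region different from both endpoints. This caps the number of leaf \textsc{BS} calls per pair at $kl^2$, whence $|\mc T_d|\le n\cdot k^2\cdot kl^2\,(N_2+N_3)$. The $kl^2$ bound is not bookkeeping: it is proved by showing that for each fixed action and received signal the optimal report partitions $[0,1]$ into at most $l$ convex pieces, then intersecting the $l$ signal-wise partitions via the combinatorial Lemma~\ref{le:lemmaaccessorypart} to get at most $l^2-l+1$ pieces per action, hence at most $kl^2$ overall. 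Your plan of ``reading off loop bounds and multiplying'' would not recover this argument; without it the recursion depth is uncontrolled and your per-pair count of $N_2+2N_3$ plus an additive fallback does not hold.
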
 

	\section{Commit phase}\label{sec:commit}
\begin{wrapfigure}[12]{R}{.5\textwidth}
	\begin{minipage}{.5\textwidth}
		\vspace{-1.4cm}
		\begin{algorithm}[H]
			\caption{\algocommit}\label{alg:commit}
			\begin{algorithmic}
				\REQUIRE $\zetavec, \nu, \xivec, \varrho, \Lambda, \chi, \Gamma, \rho$
				\STATE \LComment{Construction of IC mechanism}
				\STATE $\varepsilon\gets 2M|\Si|m\nu + \chi$
				\STATE $(\tilde\xvec^t, \tilde\yvec^t, \tilde\muvec^t, \tilde\zvec^t)\gets$ Opt. solution to $\LP(\zetavec, \Lambda, \varepsilon)$
				\STATE $(\muvec^t, \tilde\gammavec^t, \pivec^t)\gets\textsc{GetMechanism}(\tilde\xvec^t, \tilde\yvec^t, \tilde\muvec^t)$
				\STATE Define $\gammavec_i^t$ as in Eq. \eqref{eq:gammat} $\forall i\in\N$. 
				\STATE \LComment{Commit rounds}
				\WHILE{$t\leq T$}
				\STATE Commit to mechanism $(\muvec^t, \gammavec^t, \pivec^t)$
				\ENDWHILE
			\end{algorithmic}
		\end{algorithm}
	\end{minipage}
\end{wrapfigure}

In the commit phase, the algorithm exploits the estimations of the probability distributions and the pairwise cost differences. During this phase, the learner selects a sequence of mechanisms $(\muvec^t, \gammavec^t, \pivec^t)\in\C$ that pursues a twofold objective: the minimization of the regret $R^T$ and the satisfaction of the IC constraints. 
To find a regret minimizing mechanism, we first obtain a mechanism $(\muvec^t, \tilde\gammavec^t, \pivec^t)$ from an optimal solution to $\LP(\zetavec, \Lambda, \varepsilon)$ through the function \textsc{GetMechanism} as specified by Theorem \ref{th:thlp}, where $\varepsilon=2M|\Si|m\nu + \chi$. The parameters are chosen to guarantee that, assuming clean events $\E_c$ and $\E_p$ hold, the optimal mechanism $(\muvec^\star, \gammavec^\star, \pivec^\star)$ is in the feasibility set. Then, since the objective function of $\LP(\zetavec, \Lambda, \varepsilon)$ and the one of $\LP(\prob, C, 0)$ have close values, it follows that --assuming that all the agents behave truthfully-- mechanism $(\muvec^t, \tilde\gammavec^t, \pivec^t)$ guarantees a vanishing per-round regret w.r.t. $(\muvec^\star, \gammavec^\star, \pivec^\star)$. However, the following lemma shows that $(\muvec^t, \tilde\gammavec^t, \pivec^t)$ is not IC and hence the agents are not incentivized to behave truthfully.%
\begin{restatable}{lemma}{lemmadevopt}\label{le:lemmadevopt}
	Assume clean events $\E_c$ and $\E_p$ hold. Let $(\muvec, \gammavec, \pivec)$ be an optimal solution to $\LP(\zetavec, \Lambda, \varepsilon)$, where $\varepsilon= 2M|\Si|m\nu + \chi$. Then, letting $\lambda = 2M|\Si|m\left(k+1\right)(\nu +\chi)$, it holds that:
	\[
	F_i(\muvec, \gammavec) - F_i^{\phi, \varphi}(\muvec, \gammavec) \geq \sum_{\bvec\in\B}\tilde\mu[\bvec]C_i(b_i, \phi(b_i)) - \lambda\quad\quad\forall i\in\N,\,\forall (\phi,\varphi)\in\Phi_i.
	\]
\end{restatable}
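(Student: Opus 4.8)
The plan is to show that an optimal mechanism $(\muvec^\star, \gammavec^\star, \pivec^\star)$ of Problem~\eqref{eq:genopt} is feasible for $\LP(\zetavec, \Lambda, \varepsilon)$ with $\varepsilon = 2M|\Si|m\nu + \chi$, and then to transfer an approximate-IC guarantee back from any LP solution. The feasibility step uses the clean events: replacing the true $\prob$ by the estimate $\zetavec_{\bvec}$ perturbs each quantity $f_i(\xvec_i\mid b_i, b_i', s_i, s_i', \cdot)$ by at most $M|\Si|m\nu$ (since $\|\zetavec_{\bvec} - \prob(\cdot,\cdot\mid\bvec)\|_\infty\le\nu$ entrywise on a domain of size $|\Si|m$ and the $x_i$ entries are in $[0,M]$ up to the $\mu$ scaling; care is needed with the exact normalization, which I'd handle by bounding $\sum x_i \le M\mu$), and replacing $C_i$ by $\Lambda_i$ costs an extra $\chi$ per cost-difference term; summing a deviation on the action side ($f_i(\cdot\mid b_i, s_i)$) and on the reporting side (via the auxiliary $z_i$) gives the constant $\varepsilon$. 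So the LP with this $\varepsilon$ is feasible; fix $(\muvec,\gammavec,\pivec)$ to be an optimal LP solution recovered via \textsc{GetMechanism} as in Theorem~\ref{th:thlp}, so that the LP constraints \eqref{eq:noprof}--\eqref{eq:auxub} hold with the estimated probabilities $\zetavec'$ and estimated costs $\Lambda$.

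Next I would translate those LP constraints — which are stated in terms of the ``played action'' decomposition $f_i(\xvec_i\mid b_i, b_i', s_i, s_i', \cdot)$ and thus already encode single-step action deviations $\phi$ with $\phi(b_i)=b_i'$ plus arbitrary reporting — into a bound on $F_i(\muvec,\gammavec) - F_i^{\phi,\varphi}(\muvec,\gammavec)$ for a \emph{general} pair $(\phi,\varphi)\in\Phi_i$. The key observation (already flagged in the text preceding the LP) is that a general deviation decomposes over the recommended actions $b_i$: for each $b_i$, agent $i$ deviates to $\phi(b_i)$ and reports $\varphi(b_i,\cdot)$, and constraint \eqref{eq:noprof} with $b_i' = \phi(b_i)$, combined with the upper bound \eqref{eq:auxub} on $z_i[b_i,\phi(b_i),s_i]$ over all $s_i'$ (in particular $s_i' = \varphi(b_i,s_i)$), controls exactly this per-$b_i$ contribution up to slack $\varepsilon$. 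Summing over $b_i\in B_i$ — there are $k$ of them — yields a slack of $k\varepsilon$ when everything is measured under $\zetavec'$ and $\Lambda$.

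The final step is to undo the estimation error, i.e.\ pass from ``$\zetavec', \Lambda$'' back to ``$\prob, C$'' in the inequality just obtained. Here $F_i$ and $F_i^{\phi,\varphi}$ are defined with the true $\prob$, and the RHS involves the true $C_i(b_i,\phi(b_i))$. Re-running the same perturbation estimates as in the feasibility step: each of $F_i$ and $F_i^{\phi,\varphi}$ moves by at most $M|\Si|m\nu$ when $\prob$ is swapped for $\zetavec'$ (again using $\sum_{\svec,\theta} x_i[\bvec,\svec,\theta]\le M\mu[\bvec]$ and summing the $\nu$-errors over $\svec,\theta$, then over $\bvec$ against the probability weights $\mu[\bvec]$), and the cost term moves by at most $\sum_{\bvec}\mu[\bvec]\chi \le \chi$ (using $\sum_\bvec \mu[\bvec]=1$ after fixing $b_i$). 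Collecting: $k\varepsilon$ from summing \eqref{eq:noprof}, plus $2M|\Si|m\nu$ from the two payment terms, plus $\chi$ from the cost term. Substituting $\varepsilon = 2M|\Si|m\nu + \chi$ gives total slack $k(2M|\Si|m\nu + \chi) + 2M|\Si|m\nu + \chi = 2M|\Si|m(k+1)\nu + (k+1)\chi \le 2M|\Si|m(k+1)(\nu+\chi) = \lambda$, as claimed.

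I expect the main obstacle to be the bookkeeping in the perturbation bounds: getting the constants exactly right requires being careful about (a) the distinction between the $x_i$ variables (products $\mu\gamma_i$, bounded via \eqref{eq:constrlp2} by $M\mu[\bvec]$) and the recovered $\gamma_i$, (b) whether the $\nu$-error is an $\ell_\infty$ bound that must be summed over the $|\Si|\cdot m$ outcomes, and (c) ensuring the LP-feasibility argument and the final ``undo'' argument use consistent normalizations so the two $2M|\Si|m\nu$ terms and the two $\chi$ terms line up to exactly reproduce $\lambda$. The structural/combinatorial part — decomposing an arbitrary $(\phi,\varphi)$ into $k$ single-action-plus-reporting deviations and matching them to constraints \eqref{eq:noprof}--\eqref{eq:auxub} — is conceptually the heart of the argument but is routine once the decomposition is set up correctly.
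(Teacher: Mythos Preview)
Your proposal is correct and follows essentially the same route as the paper: lower-bound $F_i$ and upper-bound $F_i^{\phi,\varphi}$ in terms of the LP quantities $f_i(\xvec_i\mid\cdot,\zetavec)$ and $z_i$ (incurring $2M|\Si|m\nu$), invoke constraint~\eqref{eq:noprof} for each $b_i$ with $b_i'=\phi(b_i)$ (incurring $k\varepsilon$), then convert $\Lambda_i$ back to $C_i$ (incurring one more $\chi$); your final arithmetic $k\varepsilon + 2M|\Si|m\nu + \chi = 2M|\Si|m(k+1)\nu + (k+1)\chi \le \lambda$ is exactly the paper's. One remark: the feasibility step you place first---showing that an optimal mechanism of Problem~\eqref{eq:genopt} is feasible for $\LP(\zetavec,\Lambda,\varepsilon)$---is not needed for this lemma (it is a separate result, used later in the regret bound); Lemma~\ref{le:lemmadevopt} only concerns the approximate-IC of the LP solution and does not rely on anything about $(\muvec^\star,\gammavec^\star,\pivec^\star)$.
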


We recall that, in light of the discussion carried out in Section~\ref{sec:uncorr}, committing to a non-IC mechanism yields an unpredictable response of the agents which can induce a constant per-round regret in the worst case. Thus, we provide a modification of $(\muvec^t, \tilde\gammavec^t, \pivec^t)$ that makes the mechanism IC, while maintaining vanishing per-round regret w.r.t.~the optimal mechanism. To do so, we exploit the posterior estimates $\xivec$ obtained during the estimation phase, as well as the scoring rules $\Gamma$ described in Assumption \ref{ass:inc}. In particular, let $\hat\ell = \min_{i, b_i, s_i, s_i^\prime}||\xivec^{(i)}_{b_i, s_i} - \xivec^{(i)}_{b_i, s_i^\prime}||_2^2$, $\bar\ell = \hat\ell + 4m\varrho$ and $\underline\ell=\hat\ell - 4m\varrho$. We define two coefficients $\alpha\coloneqq =(\rho\underline\ell)/(\rho\bar\ell + 65\lambda)$ and $\beta\coloneqq (45 + \bar\ell)/(18\rho + 45 + \bar\ell)$. Moreover, for each agent $i$ we define the uncorrelated scoring rules $(\hat \gammavec_i^{b_i})_{b_i\in B_i}$ such that
\begin{equation}\label{eq:gammastrictlyproper}
	\hat \gamma_i^{b_i}[s_i, \theta] = \xi^{(i)}_{b_i, s_i}[\theta] + H_i - \frac{1}{2}||\xivec^{(i)}_{b_i, s_i}||_2^2,\quad\forall s_i\in S_i,\,\forall\theta\in\Theta,
\end{equation}
 where $H_i = \max_{i, b_i, s_i}\frac{1}{2}||\xivec^{(i)}_{b_i, s_i}||_2^2$, and the correlated scoring rule $\gammavec_i^t$ such that:
\begin{equation}\label{eq:gammat}
	\gamma_i^t[\bvec, \svec, \theta] = \alpha \tilde\gamma_i^t[\bvec, \svec, \theta] + (1-\alpha)\left[\beta \gamma_i^{b_i}[s_i, \theta] + (1-\beta)\hat\gamma^{b_i}_i[s_i,\theta]\right]\,\,\,\forall\bvec\in\B\,\forall\svec\in\Si,\,\forall\theta\in\Theta.
\end{equation}
The correlated scoring rule $\gammavec_i^t$ is a convex combination of the correlated scoring rule $\tilde\gammavec_i^t$ and two uncorrelated scoring rules $\gamma_i^{b_i}$ and $\hat\gamma_i^{b_i}$. The latter two compensate the violation of the IC constraints of $\tilde\gammavec_i^t$, as shown in Lemma~\ref{le:lemmadevopt}. In particular, scoring rules $\hat\gamma_i^{b_i}$ are designed to \emph{strictly} incentivize agents' truthful reporting.~\footnote{We remark that also scoring rules in $\Gamma$ incentivize such truthful reporting (see Eq.~\eqref{eq:incprop}), but not in a strict way. Thus, scoring rules $\hat\gamma_i^{b_i}$ is necessary to compensate the IC violations of $\tilde\gammavec^t$.}.  We conclude proving that $(\muvec^t, \gammavec^t, \pivec^t)$ is indeed an IC mechanism.

\begin{restatable}{lemma}{lemmaic}\label{le:lemmaic}
	Assume clean events $\E_c$, $\E_p$ hold. If mechanism $(\muvec^t, \gammavec^t, \pivec^t)$ is chosen according to Algorithm \ref{alg:commit}, then it is IC, \emph{i.e.,} if satisfies Equation \eqref{constr:ic} of optimization  problem \eqref{eq:genopt}.
\end{restatable}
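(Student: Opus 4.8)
The plan is to verify constraint \eqref{constr:ic} for the mechanism $(\muvec^t, \gammavec^t, \pivec^t)$ by exploiting the linearity of the incentive-compatibility slack in the scoring rule. Observe that for any agent $i$ and any deviation $(\phi,\varphi)\in\Phi_i$, the quantity $F_i(\muvec^t, \gammavec) - F_i^{\phi,\varphi}(\muvec^t, \gammavec) - \sum_{\bvec}\mu^t[\bvec]C_i(b_i,\phi(b_i))$ is linear in $\gammavec_i$ (the recommendation policy $\muvec^t$ and the probabilities $\prob$ are fixed). By \eqref{eq:gammat}, $\gammavec_i^t = \alpha\tilde\gammavec_i^t + (1-\alpha)\beta\gammavec_i^{b_i} + (1-\alpha)(1-\beta)\hat\gammavec_i^{b_i}$ (where the uncorrelated rules are read as correlated ones depending only on $s_i,\theta$). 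So the IC slack of $\gammavec_i^t$ decomposes as the convex combination of the IC slacks of the three constituent rules. The first I would like to bound below using Lemma~\ref{le:lemmadevopt}: the slack of $\tilde\gammavec_i^t$ is at least $-\lambda$. For the second, $\gammavec_i^{b_i}\in\Gamma_i$, Assumption~\ref{ass:inc} (Eq.~\eqref{eq:incstrict}) gives a slack of at least $\rho$ for action deviations, and Eq.~\eqref{eq:incprop} ensures that adding a signal misreport $\varphi$ on top can only hurt the agent; so the contribution is at least some positive quantity. For the third, $\hat\gammavec_i^{b_i}$ is constructed (Eq.~\eqref{eq:gammastrictlyproper}) to be the ``quadratic'' / spherical strictly proper scoring rule built from the posterior estimates $\xivec^{(i)}_{b_i,s_i}$, so it strictly incentivizes truthful signal reporting with a margin governed by $\hat\ell$ (hence, under $\E_p$, by $\underline\ell$).

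Next I would carry out the quantitative combination. The danger cases are deviations $(\phi,\varphi)$ with $\phi(b_i)=b_i$ but $\varphi\neq\mathrm{id}$ (pure misreporting) and deviations with $\phi(b_i)\neq b_i$. For pure misreporting, $\tilde\gammavec_i^t$ loses at most $\lambda$, $\gammavec_i^{b_i}$ loses nothing (only the properness part of \eqref{eq:incprop} is relevant, and it is nonnegative), and $\hat\gammavec_i^{b_i}$ gains a strictly positive margin proportional to $\underline\ell$ for any nontrivial $\varphi$; choosing $\beta$ and $\alpha$ as in the algorithm makes $(1-\alpha)(1-\beta)$ times this margin dominate $\alpha\lambda$. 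For action deviations, $\hat\gammavec_i^{b_i}$ only guarantees nonnegativity (it is an uncorrelated proper scoring rule, truthful for fixed $b_i$ but it carries no across-action margin beyond what its own payments provide — one must be careful here, and this is where the explicit form $H_i - \tfrac12\|\xivec\|^2$ and the boundedness of the payments matter), while $\gammavec_i^{b_i}$ contributes the margin $\rho$ from \eqref{eq:incstrict}, and the cost-difference error is absorbed because $\Lambda$ is only used inside the LP, not here — here the \emph{true} $C_i$ appears. So one needs $(1-\alpha)\beta\rho \geq \alpha\lambda$ up to lower-order terms; again the choices $\alpha = \rho\underline\ell/(\rho\bar\ell+65\lambda)$ and $\beta = (45+\bar\ell)/(18\rho+45+\bar\ell)$ are calibrated precisely so that both inequalities hold simultaneously. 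I would plug these in and check the arithmetic, using $\underline\ell \le \hat\ell \le \bar\ell$ and $\bar\ell - \underline\ell = 8m\varrho$ from the clean event $\E_p$.

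The main obstacle I expect is the bookkeeping around the strictly-proper rule $\hat\gammavec_i^{b_i}$: it is defined from \emph{estimated} posteriors $\xivec^{(i)}_{b_i,s_i}$, not the true ones, so its properness margin for the \emph{true} distribution $\prob^{(i)}(\cdot\mid b_i,s_i)$ is not exactly $\|\xivec^{(i)}_{b_i,s_i}-\xivec^{(i)}_{b_i,s_i'}\|_2^2$ but rather that quantity perturbed by the estimation error $\varrho$; this is exactly why the margin $\underline\ell = \hat\ell - 4m\varrho$ (a pessimistic version) appears in the definition of $\alpha$, and why the terminating condition $\bar\varrho \le \underline d/(13m)$ of Algorithm~\ref{alg:estprob} is needed to guarantee $\underline\ell > 0$. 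I would isolate a short sub-claim: \emph{under $\E_p$, for every $i,b_i$ and every $\varphi\neq\mathrm{id}$, the rule $\hat\gammavec_i^{b_i}$ yields an expected-payment loss of at least $\tfrac12\underline\ell\cdot(\text{fraction of signals misreported})$ against the true posteriors}, proved by the standard strict-properness computation for the quadratic scoring rule plus a Cauchy–Schwarz/triangle-inequality argument to pass from $\xivec$ to $\prob^{(i)}$. Given that sub-claim, and Lemma~\ref{le:lemmadevopt}, the rest is the convex-combination inequality and the explicit constant-chasing with $\alpha,\beta$, which is routine but must be done carefully to confirm the slack is $\ge 0$ (not merely $\ge -o(1)$), since IC must hold \emph{exactly}.
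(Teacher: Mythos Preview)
Your overall architecture matches the paper's: linearity of the IC slack in $\gammavec_i$, Lemma~\ref{le:lemmadevopt} for the $\tilde\gammavec^t$ part, Assumption~\ref{ass:inc} for the $\gammavec_i^{b_i}$ part, and a strict-properness computation for $\hat\gammavec_i^{b_i}$. The paper packages the latter two into a single intermediate lemma (showing that $\gammavec' \coloneqq \beta\gammavec_i^{b_i}+(1-\beta)\hat\gammavec_i^{b_i}$ has IC slack $\ge \rho\ell/65$ uniformly over $(\phi,\varphi)$), but this is only an organizational difference.

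There is, however, a concrete gap in your treatment of the action-deviation case. You write that for $\phi(b_i)\neq b_i$ the rule $\hat\gammavec_i^{b_i}$ ``only guarantees nonnegativity.'' It does not. The quadratic rule $\hat\gammavec_i^{b_i}$ is (approximately) proper with respect to the posteriors induced by $b_i$; it carries \emph{no} guarantee whatsoever once the agent plays $b_i'\neq b_i$ and receives a signal drawn from a different distribution. The IC slack of $\hat\gammavec_i^{b_i}$ against such a deviation can be negative, and in the paper it is only lower-bounded by $-\tfrac{5}{2}$, using $0\le\hat\gamma_i^{b_i}[s_i,\theta]\le \tfrac{3}{2}$ and $|C_i|\le 1$. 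Consequently your proposed balance ``$(1-\alpha)\beta\rho \ge \alpha\lambda$'' is not the right inequality: the role of $\beta$ is first to make $\beta\rho - (1-\beta)\cdot\tfrac{5}{2}$ positive (this is exactly why $\beta=(45+\bar\ell)/(18\rho+45+\bar\ell)$, which yields $\beta\rho-(1-\beta)\tfrac{5}{2}=\rho\bar\ell/(18\rho+45+\bar\ell)\ge \rho\ell/65$), and only \emph{then} is $\alpha$ chosen so that $(1-\alpha)\cdot\rho\ell/65 \ge \alpha\lambda$. In the pure-misreport case the same $\beta$ gives $(1-\beta)\cdot\ell/18 \ge \rho\ell/65$, so both cases land on the same uniform margin $\rho\ell/65$ for $\gammavec'$; the final step is then the single inequality $-\alpha\lambda + (1-\alpha)\rho\ell/65 \ge 0$, verified directly from $\alpha=\rho\underline\ell/(\rho\bar\ell+65\lambda)$ and $\ell\ge\underline\ell$. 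Your sub-claim about the properness margin of $\hat\gammavec_i^{b_i}$ is correct in spirit (the paper proves the margin is $\ge \ell/18$ using $\|\xivec_{b_i,s_i}-\xivec_{b_i,s_i'}\|_2^2$ together with the $\varrho$-accuracy and the stopping condition of Algorithm~\ref{alg:estprob}), but fixing the action-deviation accounting is essential for the arithmetic to close.
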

Lemma~\ref{le:lemmaic} provides the formal guarantees on the incentive compatibility of $(\muvec^t, \gammavec^t, \pivec^t)$. By noticing that the parameter $\alpha$ is chosen so to balance correctly the regret minimizing scoring rule $\tilde\gammavec^t$ and the other two uncorrelated scoring rules, we can recover sublinear regret during the commit phase. We refer the reader to the proof of Theorem~\ref{th:thgeneral} for the technical details on this aspect.
	
	\bibliographystyle{plainnat}
	\bibliography{biblio}
	
	\newpage
	\appendix
\section{More details on costs estimation}\label{sec:app_estcost}
\begin{algorithm}
	\caption{\algocost}\label{alg:estcost_app}
	\begin{algorithmic}
		\REQUIRE $N_2, N_3, \Gamma, \delta$
		\FOR{$i\in\N$}
		\FOR{$b_i, b_i^\prime\in B_i\times B_i^\prime$}
		\STATE $\gammavec_i\gets \gammavec_i^{b_i}\in\Gamma_i$
		\STATE $\gammavec_i^\prime\gets \gammavec_i^{b_i^\prime}\in\Gamma_i$
		\STATE $\Lambda_i(b_i, b_i^\prime), \chi_i[b_i, b_i^\prime]\gets\textsc{BS}(i, b_i, b_i^\prime, \gammavec_i, \gammavec_i^\prime, N_2, N_3, \delta)$
		\ENDFOR
		\ENDFOR
		\RETURN $\left(\Lambda_i\right)_{i\in\N},\,\max_{i,b_i,b_i^\prime} \chi_i[b_i,b_i^\prime]$
	\end{algorithmic}
\end{algorithm}

In this section we exhaustively describe the algorithm for cost estimation. The pseudocode for the main algorithm is presented in Algorithm~\ref{alg:estcost_app}. For each agent $i\in\N$, the algorithm iteratively explores each pair $b_i,b_i^\prime\in B_i$ to obtain an estimate $\Lambda_i(b_i, b_i^\prime)$ of the cost difference $C_i(b_i, b_i^\prime)$ as well as an high confidence bound $\chi_i[b_i, b_i^\prime]$ such that
\[
\Lambda_i(b_i, b_i^\prime) - \chi_i[b_i, b_i^\prime] \leq C_i(b_i, b_i^\prime) \leq \Lambda_i(b_i, b_i^\prime) + \chi_i[b_i, b_i^\prime].
\]
To this extent, we leverage the knowledge of scoring rules $\Gamma_i$ described in Assumption~\ref{ass:inc} and use them as starting points of a binary search routine that is described below. As input to the Algorithm, we provide the number of rounds $N_2$, $N_3$ that regulate the amount of rounds used by the binary search routine, the scoring rules $\Gamma$ described in Assumtpion~\ref{ass:inc} and the desired confidence level $\delta$. 

\subsection{Binary search}
\begin{algorithm}
	\caption{Binary Search (\textsc{BS})}\label{alg:bs}
	\begin{algorithmic}
		\REQUIRE $i, b_i, b_i^\prime, \gammavec_i, \gammavec^\prime_i, N_2, N_3,\delta$
		\STATE $\pi[\svec, a] = 1/d\quad\forall \svec\in\Si,\forall a\in\A$
		\STATE \LComment{Search phase}
		\FOR{$t \in [N_2]$}
		\STATE $\bar\gammavec_i\gets \frac{1}{2}\gammavec_i + \frac{1}{2}\gammavec_i^\prime$
		\STATE Commit to $(\bar\gammavec, \pivec)$ and observe $\tilde b_i^t$	
		\IF{$\tilde b_i^t = b_i$}
		\STATE $\gammavec_i \gets \bar\gammavec_i$
		\ELSIF{$\tilde b_i^t = b_i^\prime$}
		\STATE $\gammavec_i^\prime\gets \bar\gammavec_i$
		\ELSE 
		\STATE \LComment{Split phase}
		\STATE $ x_1, y_1 \gets \textsc{BS}(i, b_i, \tilde b_i^t, \gammavec, \bar\gammavec, N_2, N_3, \mc V, \Lambda_i, \chivec_i, \delta)$
		\STATE $x_2, y_2 \gets \textsc{BS}(i,\tilde b_i^t, b_i^\prime, \bar\gammavec, \gammavec^\prime, N_2, N_3, \mc V, \Lambda_i, \chivec_i, \delta)$
		\RETURN $x_1 + x_2, y_1 + y_2$
		\ENDIF	
		\ENDFOR
		\STATE \LComment{Payment estimation phase}
		\STATE Commit for $N_3$ rounds to $(\gammavec, \pivec)$ and observe $\gamma[\tilde s_i^t, \theta^t]$ for $t\in [N_3]$
		\STATE Commit for $N_3$ rounds to $(\gammavec^\prime, \pivec)$ and observe $\gamma^\prime[\tilde s_i^t, \theta^t]$ for $t\in [N_3]$
		\STATE $x_1, y_1\gets$ compute as in \Cref{eq:costestimators}
		\RETURN $x_1, y_1$
	\end{algorithmic}
\end{algorithm}
The pseudocode for $\textsc{BS}$ is presented in Algorithm~\ref{alg:bs}. For clarity, we fix an agent $i\in\N$ and two actions $b_i,b_i^\prime\in B_i$ and describe the execution of the BS algorithm for estimation of cost difference $C_i(b_i,b_i^\prime)$.  Intuitively, the binary search routine between the two actions $b_i,b_i^\prime\in B_i$ is structured in three distinct phases. The first phase is a \emph{search phase} that implements a search over the space of scoring rules to obtain two uncorrelated scoring rules $\gammavec_i$ $\gammavec_i^\prime: S_i\times\Theta\to [0,M]$ with appropriately bounded distance between them such that $\gammavec_i$ incentivizes\footnote{Recall that an uncorrelated scoring rule $\gammavec_i$ incentivizes an action $b_i$ if $b_i^\circ(\gammavec_i)=b_i$.} action $b_i$ and $\gammavec_i^\prime$ incentivizes action $b_i^\prime$. The second phase, \emph{i.e.,} the \emph{payment estimation phase}, estimates the expected payment received by agent $i$ when the principal commits to the two scoring rules $\gammavec_i$ and $\gammavec_i^\prime$ and uses such estimates to recover an high-confidence interval for the cost difference $C_i(b_i, b_i^\prime)$. The third phase is called \emph{split phase} and is needed to manage all those cases in which the scoring rules found during the search phase are not adequate to recover meaningful upper and lower bounds for the cost difference. 

The algorithm takes as input the agent $i$, the actions $b_i,b_i^\prime$, two uncorrelated scoring rules $\gammavec_i,\gammavec^\prime_i: S_i\times \Theta\to [0,M]$ that incentivize action $b_i$ and $b_i^\prime$, respectively, and the desired confidence level $\delta$. To formalize the characteristics of scoring rules $\gammavec_i,\gammavec_i^\prime$, we define the maximum expected payment received by agent $i$ when the principal commits to an uncorrelated scoring rule $\gammavec_i$ and agent $i$ plays action $b_i\in B_i$, using a best-response policy for signal reporting: 
\[
F_i^\circ(\gammavec_i\vert b_i) = \max_{\varphi: S_i\to S_i}\left\{\sum_{s_i\in S_i}\sum_{\theta\in\Theta}\prob^{(i)}\left(s_i,\theta\vert b_i\right)\gamma_i[\varphi(s_i), \theta]\right\}.
\]
Then, if $\gammavec_i$ incentivizes action $b_i$ and $\gammavec_i^\prime$ incentivizes action $b_i^\prime$, by noticing that ${C_i(b_i, b_i^\prime) = - C_i(b_i^\prime, b_i)}$, we get that
\begin{subequations}\label{eq:eqcostboundaux}
	\begin{align}
		F_i^\circ(\gammavec_i\vert b_i) - F_i^\circ(\gammavec_i\vert b_i^\prime) &\geq C_i(b_i, b_i^\prime)\label{eq:incb} \\
		F_i^\circ(\gammavec_i^\prime\vert b_i) - F_i^\circ(\gammavec_i^\prime\vert b_i^\prime) &\leq C_i(b_i, b_i^\prime)\label{eq:incbprime}.
	\end{align}
\end{subequations}
\Cref{eq:eqcostboundaux} provides a rather natural way of finding upper and lower bounds to $C_i(b_i, b_i^\prime)$. We will leverage this idea to obtain an estimator for $C_i(b_i,b_i^\prime)$. In the following we distinctly analyze each of the three phases of the algorithm.

\paragraph{Search phase.} 
The search phase for agent $i$ between two actions $b_i,b_i^\prime\in B_i$ aims at finding two uncorrelated scoring rules $\gammavec_i,\gammavec_i^\prime$ that satisfy Equations~\eqref{eq:incb}~and~\eqref{eq:incbprime}, while also guaranteeing that $||\gammavec_i - \gammavec_i^\prime||_{\infty}\leq \eta$ for some $\eta>0$. To do so, for $N_2$ rounds we interpolate between the two scoring rules and iteratively update them using the feedback received from the online interaction. In particular, let $\mc T_{d,s}^{(i)}(b_i, b_i^\prime)$ be the set of rounds used by the search phase, when we want to estimate the cost difference $C_i(b_i,b_i^\prime)$. Then, at each $t\in\mc T^{(i)}_{d,s}(b_i,b_i^\prime)$, the principal commits to scoring rule $\bar\gammavec_i = \frac{1}{2}\gammavec_i + \frac{1}{2}\gammavec_i^\prime$ and observes the action $\tilde b_i^t = b_i^\circ(\bar\gammavec_i)$ played by agent $i$. If $\tilde b_i^t = b_i$, the scoring rules $\bar\gammavec_i$ incentivizes action $b_i$ and thus, we proceed by updating $\gammavec_i=\bar\gammavec_i$. Similarly, if $\tilde b_i^t = b_i^\prime$, then the scoring rule $\bar\gammavec_i$ incentivizes action $b_i^\prime$ and we update $\gammavec_i^\prime = \bar\gammavec_i$. If, instead $\tilde b_i^t\neq b_i,b_i^\prime$, we enter the so-called \emph{split phase}, which, for clarity, we describe next. If we never enter the split phase, the search phase ends after $N_2$ rounds (\emph{i.e.,} when $|T^{(i)}_{d,s}(b_i,b_i^\prime)| = N_2$). At this point, scoring rules $\gammavec_i$, $\gammavec_i^\prime$ satisfy Equations~\eqref{eq:incb}~and~\eqref{eq:incbprime}, since the feedback received from the environment guarantees that $b_i^\circ(\gammavec_i) = b_i$ and $b_i^\circ(\gammavec_i^\prime) = b_i^\prime$. Furthermore, we can show the following bound on the distance between the two scoring rules. 
\begin{lemma}\label{le:lemmadistbs}
	For any agent $i\in\N$ and actions $b_i,b_i^\prime\in B_i$, assume the search phase is completed without entering into the split phase. Let $\gammavec_i, \gammavec_i^\prime$ be the scoring rules obtained at the end of the search phase. Then the following holds:
	\[
	|| \gammavec_i - \gammavec_i^\prime||_{\infty}\leq \frac{M}{2^{N_2}}.
	\]
\end{lemma}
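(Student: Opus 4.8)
\textbf{Proof plan for Lemma~\ref{le:lemmadistbs}.}
The plan is to track how the $\ell_\infty$-distance between the two maintained scoring rules $\gammavec_i$ and $\gammavec_i^\prime$ evolves over the $N_2$ iterations of the search phase, conditioned on never entering the split phase. At the very start, $\gammavec_i = \gammavec_i^{b_i}$ and $\gammavec_i^\prime = \gammavec_i^{b_i^\prime}$, both of which map into $[0,M]$, so the initial distance satisfies $\|\gammavec_i - \gammavec_i^\prime\|_\infty \leq M$. This is the base case.

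The key observation is that each iteration is an exact bisection step. At round $t$ the principal commits to $\bar\gammavec_i = \tfrac12\gammavec_i + \tfrac12\gammavec_i^\prime$, the exact midpoint of the current segment $[\gammavec_i, \gammavec_i^\prime]$ in scoring-rule space. Since we assume the split phase is never entered, the observed best response $\tilde b_i^t$ is either $b_i$ or $b_i^\prime$: in the first case we replace $\gammavec_i$ by $\bar\gammavec_i$, in the second we replace $\gammavec_i^\prime$ by $\bar\gammavec_i$. In either case the new pair spans exactly half of the previous segment, so the $\ell_\infty$-distance between them is exactly halved:
\[
\|\gammavec_i^{(t)} - \gammavec_i^{\prime(t)}\|_\infty = \tfrac12 \|\gammavec_i^{(t-1)} - \gammavec_i^{\prime(t-1)}\|_\infty .
\]
This holds because $\|\bar\gammavec_i - \gammavec_i\|_\infty = \|\tfrac12(\gammavec_i^\prime - \gammavec_i)\|_\infty = \tfrac12\|\gammavec_i^\prime - \gammavec_i\|_\infty$, and symmetrically for $\gammavec_i^\prime$.

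A straightforward induction over the $N_2$ rounds then gives $\|\gammavec_i - \gammavec_i^\prime\|_\infty \leq M/2^{N_2}$ at the end of the search phase, which is the claim. I do not expect any real obstacle here: the only thing to be careful about is making the bookkeeping precise about which of the two endpoints gets overwritten at each step and noting that in both branches the halving identity is the same, so the conclusion does not depend on the particular sequence of observed best responses (as long as it avoids the split-phase branch). The assumption that we never enter the split phase is exactly what guarantees $\tilde b_i^t \in \{b_i, b_i^\prime\}$ at every step, which is all the induction needs.
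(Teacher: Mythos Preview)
Your proposal is correct and follows essentially the same argument as the paper: bound the initial $\ell_\infty$-distance by $M$, observe that each bisection step exactly halves it (computing $\|\bar\gammavec_i - \gammavec_i\|_\infty = \tfrac12\|\gammavec_i - \gammavec_i'\|_\infty$), and conclude by induction over the $N_2$ rounds. The paper's proof is identical in structure, merely writing the halving identity for one branch ``without loss of generality'' rather than noting both branches explicitly as you do.
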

\begin{proof}
	Fix agent $i$ and actions $b_i,b_i^\prime\in B_i$. Let $d^t$ be the distance between scoring rules $\gammavec_i$ and $\gammavec_i^\prime$ at iteration $t\in T^{(i)}_{d,s}(b_i,b_i^\prime)$. With an abuse of notation, we denote as $d^0$ the distance between the two scoring rules at the beginning of the search phase. Trivially, it holds $d^0 \leq M$. Assume without loss of generality that $\tilde b_i^t = b_i^\prime$. Then, we have that
	\begin{equation}\label{eq:eqdistaux}
		d^{t+1} = ||\gammavec_i - \bar\gammavec_i||_\infty = ||\gammavec_i - \frac{1}{2}\gammavec_i - \frac{1}{2}\gammavec_i^\prime||_\infty = \frac{1}{2}||\gammavec_i - \gammavec_i^\prime||_\infty = \frac{1}{2}d^t.
	\end{equation}
	Thus, by recalling that if we do not enter the split phase, then $|T^{(i)}_{d,s}(b_i,b_i^\prime)|=N_2$, we can iteratively apply \Cref{eq:eqdistaux} to bound $d^{N_2}$ and obtain the result. 
\end{proof}

After the completion of the search phase, the algorithm enters into the payment estimation phase, in which we leverage knowledge of the scoring rules $\gammavec_i, \gammavec_i^\prime$ to recover an estimate of $C_i(b_i,b_i^\prime)$.

\paragraph{Payment estimation phase.}
The payment estimation phase starts after the completion of the search phase. In this phase we aim at finding estimates of the expected payments $F_i^\circ(\gammavec_i\vert b_i)$ and $F_i^\circ(\gammavec_i^\prime\vert b_i^\prime)$ and then use those to obtain an high-confidence bound for $C_i(b_i,b_i^\prime)$, leveraging both the properties of $\gammavec_i$ and $\gammavec_i^\prime$ mentioned above and \Cref{eq:eqcostboundaux}. In particular, notice that when the principal commits to uncorrelated scoring rule $\gammavec_i$, the payment $\gamma_i[\tilde s_i^t, \theta^t]$ is an unbiased estimate of $F_i^\circ(\gammavec_i\vert b_i)$ since: 
\begin{subequations}
	\begin{align*}
		\mathbb{E}[\gamma_i[\tilde s_i^t, \theta^t]] &= \sum_{s_i\in S_i}\sum_{\theta\in\Theta}\prob^{(i)}\left(s_i, \theta\vert b^\circ(\gammavec_i)\right)\gamma_i[\varphi_i^\circ(s_i\vert\gammavec_i), \theta] \\
		&= \max_{\varphi:S_i\to S_i}\left\{\sum_{s_i\in S_i}\sum_{\theta\in\Theta}\prob^{(i)}\left(s_i, \theta\vert b_i\right)\gamma_i[\varphi_i^\circ(s_i\vert\gammavec_i), \theta]\right\}\\
		&= F_i^\circ(\gammavec_i\vert b_i).
	\end{align*}
\end{subequations}
Similarly, when the principal commits to scoring rule $\gammavec_i^\prime$ it holds that $\mathbb{E}\left[\gamma_i^\prime[\tilde s_i^t, \theta^t]\right]=F_i^\circ(\gammavec_i^\prime\vert b_i^\prime)$. Thus, in order to obtain high confidence bounds for the expected payments $F_i^\circ(\gammavec_i\vert b_i)$ and $F_i^\circ(\gammavec_i^\prime\vert b_i^\prime)$, we commit to each one of the two scoring rules for $N_3$ rounds and obtain the following estimates,
\[
\hat F_i^\circ(\gammavec_i\vert b_i) = \frac{1}{N_3}\sum_{t\in \mc T^{(i)}_{d,p}(\gammavec_i)}\gamma_i[\tilde s_i^t, \theta^t]\quad\quad \hat F_i^\circ(\gammavec_i^\prime\vert b_i^\prime) = \frac{1}{N_3}\sum_{t\in \mc T^{(i)}_{d,p}(\gammavec_i^\prime)}\gamma_i^\prime[\tilde s_i^t, \theta^t],
\]
where $\mc T^{(i)}_{d,p}(\gammavec_i)$ denotes the set of round in which the principal committed to scoring rule $\gammavec_i$ during the payment estimation phase. Moreover, using standard concentration arguments, we can provide the following high-confidence bounds for the estimated $\hat F_i^\circ$:
\begin{lemma}\label{le:lemmaestpayment}
	For each $\gammavec_i: S_i\times\Theta\to [0,M]$, for each $\delta\in (0,1)$ and for each $K>0$, the following holds: 
	\[
	\prob\left(|\hat F_i^\circ(\gammavec_i\vert b_i) - F_i^\circ(\gammavec_i\vert b_i)|\leq M\sqrt{\frac{\ln (2K/\delta)}{2N_3}}\right)\geq 1-\frac{\delta}{K},
	\]
	where $b_i=b_i^\circ(\gammavec_i)$.
\end{lemma}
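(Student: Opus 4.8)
The plan is to prove a standard Hoeffding-style concentration bound for the empirical payment estimator $\hat F_i^\circ(\gammavec_i\vert b_i)$. The key observation, already established in the text preceding the statement, is that when the principal commits to the uncorrelated scoring rule $\gammavec_i$ (with $b_i = b_i^\circ(\gammavec_i)$), the per-round observed payment $\gamma_i[\tilde s_i^t, \theta^t]$ is an unbiased estimate of $F_i^\circ(\gammavec_i\vert b_i)$, since agent $i$ best-responds by playing $b_i^\circ(\gammavec_i) = b_i$ and reporting via $\varphi_i^\circ(\cdot\vert\gammavec_i)$. So $\hat F_i^\circ(\gammavec_i\vert b_i)$ is an average of $N_3$ i.i.d.\ random variables each with mean $F_i^\circ(\gammavec_i\vert b_i)$.

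First I would note that each summand $\gamma_i[\tilde s_i^t, \theta^t]$ lies in $[0,M]$, since by definition of the class of (uncorrelated) scoring rules every payment is bounded in $[0,M]$. The rounds in $\mc T^{(i)}_{d,p}(\gammavec_i)$ are the $N_3$ rounds in which the principal commits to $\gammavec_i$ during the payment estimation phase; across these rounds the signal-state pairs $(s_i^t,\theta^t)$ are drawn i.i.d.\ from $\prob^{(i)}(\cdot,\cdot\vert b_i)$ (the agent's action and reporting policy are fixed once $\gammavec_i$ is fixed), so the observed payments are i.i.d.\ copies of a single random variable $X$ with $\mathbb{E}[X] = F_i^\circ(\gammavec_i\vert b_i)$ and $X \in [0,M]$.

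Then I would apply Hoeffding's inequality to the average of $N_3$ independent $[0,M]$-valued random variables: for any $u > 0$,
\[
	\prob\left(\left|\hat F_i^\circ(\gammavec_i\vert b_i) - F_i^\circ(\gammavec_i\vert b_i)\right| > u\right) \leq 2\exp\left(-\frac{2 N_3 u^2}{M^2}\right).
\]
Setting $u = M\sqrt{\ln(2K/\delta)/(2N_3)}$ makes the right-hand side equal to $2\exp(-\ln(2K/\delta)) = \delta/K$, which rearranges to the claimed bound $\prob(|\hat F_i^\circ(\gammavec_i\vert b_i) - F_i^\circ(\gammavec_i\vert b_i)| \leq M\sqrt{\ln(2K/\delta)/(2N_3)}) \geq 1 - \delta/K$.

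I do not expect any serious obstacle here; the lemma is a textbook concentration argument. The only point requiring a little care — and worth spelling out explicitly — is the independence and identical-distribution claim for the observed payments: one must argue that fixing the uncorrelated scoring rule $\gammavec_i$ pins down the agent's best response (action $b_i^\circ(\gammavec_i)$ and reporting map $\varphi_i^\circ(\cdot\vert\gammavec_i)$, with ties broken deterministically in favor of the principal), so that the randomness across the $N_3$ rounds comes only from the fresh i.i.d.\ draws of $(s_i^t,\theta^t)$ from $\prob^{(i)}(\cdot,\cdot\vert b_i)$. Once that is noted, the rest is the direct Hoeffding computation above, and the same argument applied to $\gammavec_i^\prime$ gives the analogous bound for $\hat F_i^\circ(\gammavec_i^\prime\vert b_i^\prime)$.
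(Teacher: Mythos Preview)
Your proposal is correct and matches the paper's approach exactly: the paper's proof is a one-line appeal to Hoeffding's inequality using that $\gamma_i[\tilde s_i^t,\theta^t]\in[0,M]$ and $\mathbb{E}[\hat F_i^\circ(\gammavec_i\vert b_i)]=F_i^\circ(\gammavec_i\vert b_i)$. If anything, your write-up is more detailed than the paper's, since you explicitly justify the i.i.d.\ structure of the observed payments and carry out the choice of the deviation parameter.
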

\begin{proof}
	The result follows directly from Hoeffding's bound, noticing that for all $t\in\mc T^{(i)}_{d,p}(\gammavec_i)$, $0\leq \gamma_i[\tilde s_i^t, \theta^t] \leq M$, for all $t\in \mc T^{(i)}_{d,p}(\gammavec_i^\prime)$, $0\leq\gamma_i^\prime[\tilde s_i^t, \theta^t]\leq M$ and that 
	\[
	\mathbb{E}\left[\hat F_i^\circ(\gammavec_i\vert b_i)\right] = F_i^\circ(\gammavec_i\vert b_i),\quad\quad \mathbb{E}\left[\hat F_i^\circ(\gammavec_i^\prime\vert b_i^\prime)\right] = F_i^\circ(\gammavec_i^\prime\vert b_i^\prime).
	\]
\end{proof}

Starting from \Cref{eq:eqcostboundaux}, the last elements needed in order to obtain an high-confidence interval for the cost difference $C_i(b_i, b_i^\prime)$ are estimates for the expected payments $F_i^\circ(\gammavec_i\vert b_i^\prime)$ and  $F_i^\circ(\gammavec_i^\prime\vert b_i)$. We observe that, differently than $F_i^\circ(\gammavec_i\vert b_i)$ and  $F_i^\circ(\gammavec_i^\prime\vert b_i^\prime)$, it is not possible to estimate $F_i^\circ(\gammavec_i\vert b_i^\prime)$ and  $F_i^\circ(\gammavec_i^\prime\vert b_i)$ directly using the feedback collected from the online interaction, since every time the principal commits to scoring rule $\gammavec_i$, agent $i$ would respond by playing action $b_i$ rather than action $b_i^\prime$ (the same holds for scoring rule $\gammavec^\prime$ and action $b_i$). Nonetheless, it is possible to leverage the bound on the distance between $\gammavec_i$ and $\gammavec_i^\prime$ shown in Lemma~\ref{le:lemmadistbs} to address this issue. Formally, 
\begin{lemma}\label{le:lemmaclosepayment}
	Let $\gammavec_i,\gammavec_i^\prime:S_i\times\Theta\to [0,M]$ be such that $||\gammavec_i - \gammavec_i^\prime||_\infty\leq \eta$. Then, for each $b_i\in B_i$, the following holds
	\[
	|F_i^\circ(\gammavec\vert b_i) - F_i^\circ(\gammavec_i^\prime\vert b_i)| \leq \eta
	\]
\end{lemma}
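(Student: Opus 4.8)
The plan is to exploit the fact that $F_i^\circ(\cdot\vert b_i)$ is defined as a maximum over signal-reporting policies $\varphi: S_i\to S_i$ of a functional that is \emph{linear} in the scoring rule, and that a max of pointwise-close linear functions cannot change by more than the pointwise gap. Concretely, fix $b_i\in B_i$ and write, for any $\varphi:S_i\to S_i$,
\[
G_i(\gammavec_i,\varphi) \defeq \sum_{s_i\in S_i}\sum_{\theta\in\Theta}\prob^{(i)}\left(s_i,\theta\vert b_i\right)\gamma_i[\varphi(s_i),\theta],
\]
so that $F_i^\circ(\gammavec_i\vert b_i)=\max_\varphi G_i(\gammavec_i,\varphi)$ and similarly for $\gammavec_i'$. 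The first step is to bound $|G_i(\gammavec_i,\varphi)-G_i(\gammavec_i',\varphi)|$ for a \emph{fixed} $\varphi$: by the triangle inequality this is at most $\sum_{s_i,\theta}\prob^{(i)}\left(s_i,\theta\vert b_i\right)\,\bigl|\gamma_i[\varphi(s_i),\theta]-\gamma_i'[\varphi(s_i),\theta]\bigr|$, and since $\|\gammavec_i-\gammavec_i'\|_\infty\le\eta$ every term in the absolute value is at most $\eta$; the probabilities sum to one over $S_i\times\Theta$, so the whole expression is at most $\eta$.

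The second step upgrades this to the maxima. Let $\varphi^\star$ be a maximizer for $\gammavec_i$, i.e.\ $F_i^\circ(\gammavec_i\vert b_i)=G_i(\gammavec_i,\varphi^\star)$. Then
\[
F_i^\circ(\gammavec_i\vert b_i) = G_i(\gammavec_i,\varphi^\star) \le G_i(\gammavec_i',\varphi^\star) + \eta \le F_i^\circ(\gammavec_i'\vert b_i) + \eta,
\]
where the first inequality is Step~1 and the second is optimality of the max over $\varphi$ for $\gammavec_i'$. By symmetry (swapping the roles of $\gammavec_i$ and $\gammavec_i'$) we also get $F_i^\circ(\gammavec_i'\vert b_i)\le F_i^\circ(\gammavec_i\vert b_i)+\eta$, and combining the two yields $|F_i^\circ(\gammavec_i\vert b_i)-F_i^\circ(\gammavec_i'\vert b_i)|\le\eta$, as claimed.

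There is essentially no hard part here: the only thing to be careful about is that the maximizing $\varphi$ differs between the two scoring rules, which is exactly why one passes through a \emph{fixed} $\varphi$ in Step~1 before taking maxima in Step~2 — a standard ``$\sup$ is $1$-Lipschitz'' argument. One should also note that $\prob^{(i)}(s_i,\theta\vert b_i)$ here denotes the joint marginal $p(\theta)\,\prob^{(i)}(s_i\vert b_i,\theta)$, which indeed sums to $1$ over $(s_i,\theta)\in S_i\times\Theta$, so the weighting used in Step~1 is a genuine probability distribution; no other property of $\prob^{(i)}$ is needed.
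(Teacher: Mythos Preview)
Your proof is correct and follows essentially the same approach as the paper: both fix $b_i$, bound the gap at a fixed reporting policy using $\|\gammavec_i-\gammavec_i'\|_\infty\le\eta$ together with the fact that $\prob^{(i)}(s_i,\theta\mid b_i)$ sums to one, and then use optimality of the max to pass to $F_i^\circ$, with the reverse inequality obtained by symmetry. The only cosmetic difference is the order in which the two inequalities (closeness and optimality) are applied.
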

\begin{proof}
	Fix $b_i\in B_i$. Let $\bar\varphi,\,\bar\varphi^\prime$ be such that
	\begin{align*}
		\bar\varphi&\in arg\max_{\varphi: S_i\to S_i} \left\{\sum_{s_i\in S_i}\sum_{\theta\in\Theta}\prob^{(i)}\left(s_i,\theta\vert b_i\right)\gamma_i[\varphi(s_i), \theta]\right\} \\
		\bar\varphi^\prime&\in arg\max_{\varphi: S_i\to S_i} \left\{\sum_{s_i\in S_i}\sum_{\theta\in\Theta}\prob^{(i)}\left(s_i,\theta\vert b_i\right)\gamma^\prime_i[\varphi(s_i), \theta]\right\}.
	\end{align*}
	Then, we have that
	\begin{subequations}
		\begin{align*}
			F_i^\circ(\gammavec\vert b_i) &= \sum_{\theta\in\Theta}\prob^{(i)}\left(s_i,\theta\vert b_i\right)\gamma_i[\bar\varphi(s_i), \theta] \\
			&\geq \sum_{\theta\in\Theta}\prob^{(i)}\left(s_i,\theta\vert b_i\right)\gamma_i[\bar\varphi^\prime(s_i), \theta] \\
			& \geq \sum_{\theta\in\Theta}\prob^{(i)}\left(s_i,\theta\vert b_i\right)\gamma^\prime_i[\bar\varphi^\prime(s_i), \theta] - \eta \\
			&= F_i^\circ(\gammavec^\prime\vert b_i) - \eta,
		\end{align*}
	\end{subequations}
	and, similarly,
	\begin{subequations}
		\begin{align*}
			F_i^\circ(\gammavec^\prime\vert b_i) &= \sum_{\theta\in\Theta}\prob^{(i)}\left(s_i,\theta\vert b_i\right)\gamma^\prime_i[\bar\varphi^\prime(s_i), \theta] \\
			&\geq \sum_{\theta\in\Theta}\prob^{(i)}\left(s_i,\theta\vert b_i\right)\gamma_i[\bar\varphi(s_i), \theta] \\
			& \geq \sum_{\theta\in\Theta}\prob^{(i)}\left(s_i,\theta\vert b_i\right)\gamma_i[\bar\varphi^\prime(s_i), \theta] - \eta \\
			&= F_i^\circ(\gammavec\vert b_i) - \eta.
		\end{align*}
	\end{subequations}
	Combining the two inequalities, we get
	\[
	F_i(\gammavec^\prime\vert b_i) - \eta \leq F_i(\gammavec\vert b_i) \leq F_i(\gammavec^\prime\vert b_i) + \eta,
	\]
	which gives the result.
\end{proof}
In the following, for each agent $i\in\N$, we define as $\mc D_i\subseteq B_i\times B_i$ the set that contains all the couples $(b_i, b_i^\prime)$ for which the execution of the binary search reached the cost estimation phase. Furthermore, for each $i\in\N$ and $b_i,b_i^\prime\in\D_i$, let $\gammavec_{i,b_i}$ and $\gammavec_{i,b_i^\prime}$ be the scoring rules found at the end of the search phase, incentivizing actions $b_i$ and $b_i^\prime$, respectively.
To conclude the analysis of the payment estimation phase and to show how to leverage the results of \Cref{le:lemmadistbs}, \Cref{le:lemmaestpayment} and \Cref{le:lemmaclosepayment} to recover upper and lower confidence bounds for the expected payment $C_i(b_i,b_i^\prime)$, we introduce the auxiliary clean event $\bar\E_d$. 
\begin{definition}[Auxiliary clean event for cost estimation]
	The clean event $\bar\E_d$ holds if
	\[
	\begin{split}
		|\hat F_i^\circ(\gammavec_{i, b_i}\vert b_i) - F_i^\circ(\gammavec_{i,b_i}\vert b_i)|&\leq M\sqrt{\frac{\ln (4nk^2/\delta)}{2N_3}} \\
		|\hat F_i^\circ(\gammavec_{i, b_i^\prime}\vert b_i^\prime) - F_i^\circ(\gammavec_{i,b_i^\prime}\vert b_i^\prime)|&\leq M\sqrt{\frac{\ln (4nk^2/\delta)}{2N_3}}
	\end{split}\quad\quad \forall i\in\N,\,\forall (b_i,b_i^\prime)\in\D_i.
	\]
\end{definition} 
\begin{lemma}\label{le:lemmaauxcleancost}
	The clean event $\bar \E_d$ holds with probability at least $1-\frac{\delta}{2}$.
\end{lemma}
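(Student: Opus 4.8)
The plan is a routine union bound on top of Lemma~\ref{le:lemmaestpayment}. The event $\bar\E_d$ is the conjunction, over all agents $i\in\N$ and all pairs $(b_i,b_i')\in\D_i$, of two concentration statements: that the empirical payment $\hat F_i^\circ(\gammavec_{i,b_i}\mid b_i)$ lies within $M\sqrt{\ln(4nk^2/\delta)/(2N_3)}$ of $F_i^\circ(\gammavec_{i,b_i}\mid b_i)$, and likewise for $\gammavec_{i,b_i'}$ and $b_i'$. Each such empirical payment is the average of the $N_3$ i.i.d.\ observations $\gamma_i[\tilde s_i^t,\theta^t]$ collected during the payment-estimation phase, all bounded in $[0,M]$, and its expectation is exactly the corresponding true payment — this is precisely the setting of Lemma~\ref{le:lemmaestpayment}, because $\gammavec_{i,b_i}$ (resp.\ $\gammavec_{i,b_i'}$) incentivizes $b_i$ (resp.\ $b_i'$), so the agent's best response during those rounds is the intended action. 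Hence for any fixed scoring rule, Lemma~\ref{le:lemmaestpayment} with parameter $K$ certifies the deviation bound $M\sqrt{\ln(2K/\delta)/(2N_3)}$ with failure probability at most $\delta/K$.

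First I would address the fact that the scoring rules $\gammavec_{i,b_i}$ produced at the end of the search phase are themselves random. Since the $N_2$ search rounds and the $N_3$ payment-estimation rounds are disjoint, conditioning on the entire search phase fixes $\gammavec_{i,b_i}$ while leaving the $N_3$ estimation samples i.i.d.\ with the correct mean, so Lemma~\ref{le:lemmaestpayment} applies conditionally and, by the law of total probability, unconditionally with the same bound. Next I would count the relevant events: $\D_i\subseteq B_i\times B_i$ with $|B_i|=k$, so there are at most $nk^2$ pairs overall, and each pair contributes two payment estimators; choosing $K$ of order $4nk^2$ makes the certified threshold equal $M\sqrt{\ln(4nk^2/\delta)/(2N_3)}$, matching the definition of $\bar\E_d$, while the per-event failure probability is at most $\delta/(4nk^2)$. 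A union bound over the at most $2nk^2$ events then yields total failure probability at most $\delta/2$, which is the claim. (One may equally union-bound over all of $B_i\times B_i$ rather than the data-dependent set $\D_i$, removing any concern about $\D_i$ being random.)

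There is no genuine obstacle here; the only points requiring care are the conditioning argument that licenses treating the BS-produced scoring rules as fixed, and the bookkeeping that ties the choice of $K$ in Lemma~\ref{le:lemmaestpayment} to the $\ln(4nk^2/\delta)$ appearing in $\bar\E_d$ and to the number of pairs that can enter $\D_i$. The substantive work of the cost-estimation section lies in Lemmas~\ref{le:lemmadistbs}--\ref{le:lemmaclosepayment} and in bounding the depth of the split recursion, not in this concentration step.
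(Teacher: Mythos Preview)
Your proposal is correct and follows essentially the same approach as the paper: apply Lemma~\ref{le:lemmaestpayment} to each of the two payment estimators per pair $(b_i,b_i')$, then union-bound over the at most $nk^2$ pairs across all agents. Your additional remarks on conditioning (the search-phase scoring rules being random) and on union-bounding over all of $B_i\times B_i$ rather than the data-dependent $\D_i$ are sound refinements that the paper leaves implicit.
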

\begin{proof}
	By \Cref{le:lemmaestpayment}, we have that 
	\[
	\begin{split}
		\prob\left(|\hat F_i^\circ(\gammavec_{i, b_i}\vert b_i) - F_i^\circ(\gammavec_{i,b_i}\vert b_i)|\leq M\sqrt{\frac{\ln (4nk^2/\delta)}{2N_3}}\right) &\geq 1- \frac{\delta}{2nk^2} \\
		\prob\left(|\hat F_i^\circ(\gammavec_{i, b_i^\prime}\vert b_i^\prime) - F_i^\circ(\gammavec_{i,b_i^\prime}\vert b_i^\prime)|\leq M\sqrt{\frac{\ln (4nk^2/\delta)}{2N_3}}\right) &\geq 1- \frac{\delta}{2nk^2}
	\end{split}	\quad\quad \forall i\in\N,\,\forall (b_i,b_i^\prime)\in\D_i.
	\]
	The result follows from a union bound, noticing that 
	\[
	\sum_{i\in\N}\sum_{(b_i,b_i^\prime)\in\D_i} \frac{\delta}{2nk^2} \leq \frac{\delta}{2}.
	\]
\end{proof}
The introduction of the auxiliary clean event allows us to define, for all $i\in\N$ and for all pairs of actions $(b_i, b_i^\prime)\in \D_i$ for which the binary search algorithms reached the cost estimation phase, the following estimators and confidence bounds: 
\begin{subequations}\label{eq:costestimators}
	\begin{align}
		\Lambda_i(b_i, b_i^\prime) &= \hat F_i^\circ(\gammavec_{i, b_i}\vert b_i) - \hat F_i^\circ(\gammavec_{i,b_i^\prime}\vert b_i^\prime), \\
		\chi_i[b_i, b_i^\prime] &= 2M\sqrt{\frac{\ln (4nk^2/\delta)}{2N_3}} + \frac{M}{2^{N_2}}.
	\end{align}
\end{subequations}

We formally prove that, if event $\bar\E_d$ holds, then the above definitions yield, indeed, a valid high-confidence region for the pairwise cost differences. 
\begin{lemma}\label{le:lemmacostsafterpay}
	Assume clean event $\bar\E_d$ holds. Then the following holds:  
	\[
	|\Lambda_i(b_i, b_i^\prime) - C_i(b_i, b_i^\prime)| \leq \chi_i[b_i, b_i^\prime]\quad\quad\forall i\in\N,\,\forall (b_i, b_i^\prime)\in\D_i.
	\]
\end{lemma}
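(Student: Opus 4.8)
The plan is to combine the three structural lemmas just proved—Lemma~\ref{le:lemmadistbs}, Lemma~\ref{le:lemmaestpayment}, and Lemma~\ref{le:lemmaclosepayment}—with the auxiliary clean event $\bar\E_d$, and plug everything into the sandwich inequalities \eqref{eq:eqcostboundaux}. Fix $i\in\N$ and $(b_i,b_i^\prime)\in\D_i$, and abbreviate $\gammavec=\gammavec_{i,b_i}$, $\gammavec^\prime=\gammavec_{i,b_i^\prime}$, $\eta = M/2^{N_2}$, and $\beta_3 = M\sqrt{\ln(4nk^2/\delta)/(2N_3)}$. Since $(b_i,b_i^\prime)\in\D_i$, the search phase terminated normally, so $\gammavec$ incentivizes $b_i$ and $\gammavec^\prime$ incentivizes $b_i^\prime$, and by Lemma~\ref{le:lemmadistbs} we have $\|\gammavec-\gammavec^\prime\|_\infty\leq\eta$.

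First I would derive the \emph{upper} bound on $C_i(b_i,b_i^\prime)$. By \eqref{eq:incb}, $C_i(b_i,b_i^\prime)\leq F_i^\circ(\gammavec\vert b_i) - F_i^\circ(\gammavec\vert b_i^\prime)$. The first term I bound above by $\hat F_i^\circ(\gammavec\vert b_i) + \beta_3$ using $\bar\E_d$. The second term I bound below: by Lemma~\ref{le:lemmaclosepayment} applied with action $b_i^\prime$, $F_i^\circ(\gammavec\vert b_i^\prime)\geq F_i^\circ(\gammavec^\prime\vert b_i^\prime) - \eta$, and then by $\bar\E_d$ again, $F_i^\circ(\gammavec^\prime\vert b_i^\prime)\geq \hat F_i^\circ(\gammavec^\prime\vert b_i^\prime) - \beta_3$. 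Chaining these gives $C_i(b_i,b_i^\prime)\leq \hat F_i^\circ(\gammavec\vert b_i) - \hat F_i^\circ(\gammavec^\prime\vert b_i^\prime) + 2\beta_3 + \eta = \Lambda_i(b_i,b_i^\prime) + \chi_i[b_i,b_i^\prime]$, recalling the definitions in \eqref{eq:costestimators}. The \emph{lower} bound is symmetric: by \eqref{eq:incbprime}, $C_i(b_i,b_i^\prime)\geq F_i^\circ(\gammavec^\prime\vert b_i) - F_i^\circ(\gammavec^\prime\vert b_i^\prime)$; bound the second term above by $\hat F_i^\circ(\gammavec^\prime\vert b_i^\prime) + \beta_3$ via $\bar\E_d$, and the first term below using Lemma~\ref{le:lemmaclosepayment} with action $b_i$ (so $F_i^\circ(\gammavec^\prime\vert b_i)\geq F_i^\circ(\gammavec\vert b_i) - \eta$) followed by $\bar\E_d$ ($F_i^\circ(\gammavec\vert b_i)\geq \hat F_i^\circ(\gammavec\vert b_i) - \beta_3$). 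This yields $C_i(b_i,b_i^\prime)\geq \Lambda_i(b_i,b_i^\prime) - \chi_i[b_i,b_i^\prime]$. Together the two bounds give $|\Lambda_i(b_i,b_i^\prime) - C_i(b_i,b_i^\prime)|\leq\chi_i[b_i,b_i^\prime]$, and since $i$ and $(b_i,b_i^\prime)\in\D_i$ were arbitrary, the claim follows.

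There is no serious obstacle here—the lemma is essentially bookkeeping once the three ingredient lemmas are in place. The only point requiring a little care is making sure each of the four applications of Lemma~\ref{le:lemmaclosepayment} is invoked at the \emph{correct} action argument ($b_i^\prime$ in the upper-bound chain, $b_i$ in the lower-bound chain), since swapping them would produce quantities $F_i^\circ(\gammavec\vert b_i)$ versus $F_i^\circ(\gammavec^\prime\vert b_i^\prime)$ that are \emph{not} directly comparable. A secondary subtlety is that the $\eta$-gap from the search phase must cover precisely the mismatch between the scoring rule we committed to (and hence could estimate) and the scoring rule appearing in \eqref{eq:eqcostboundaux}; Lemma~\ref{le:lemmaclosepayment} handles exactly this, so it suffices to check the constant $\chi_i[b_i,b_i^\prime] = 2\beta_3 + \eta$ accounts for two uses of $\bar\E_d$ (one per estimated payment) plus one use of the distance bound, which it does.
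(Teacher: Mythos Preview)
Your proposal is correct and follows essentially the same route as the paper's own proof: start from the sandwich inequalities \eqref{eq:eqcostboundaux}, use Lemma~\ref{le:lemmaclosepayment} (with the distance bound from Lemma~\ref{le:lemmadistbs}) to replace the ``unobservable'' payment term by its counterpart under the other scoring rule, and then apply the concentration bounds from $\bar\E_d$ to both estimated payments. The only cosmetic difference is the order in which you invoke $\bar\E_d$ and Lemma~\ref{le:lemmaclosepayment} on each term, which is immaterial.
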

\begin{proof}
	Fix $i\in\N$ and $(b_i,b_i^\prime)\in\D$. By \Cref{eq:incb}, we have that
	\begin{subequations}
		\begin{align}
			C_i(b_i, b_i^\prime) &\leq F_i^\circ(\gammavec_{i, b_i}\vert b_i) - F_i^\circ(\gammavec_{i, b_i}\vert b_i^\prime) \\
			&\leq F_i^\circ(\gammavec_{i, b_i}\vert b_i) - F_i^\circ(\gammavec_{i, b_i^\prime}\vert b_i^\prime) + \frac{M}{2^{N_2}} \\
			&\leq \hat F_i^\circ(\gammavec_{i, b_i}\vert b_i) - \hat F_i^\circ(\gammavec_{i, b_i^\prime}\vert b_i^\prime) + \frac{M}{2^{N_2}} + 2M\sqrt{\frac{\ln (4nk^2/\delta)}{2N_3}}, 
		\end{align}
	\end{subequations}
	where the first equation follows from \Cref{le:lemmaclosepayment} and \Cref{le:lemmadistbs}, while the last equation follows from the definition of $\bar\E_d$. Similarly, from \Cref{eq:incbprime}, we get 
	\begin{subequations}
		\begin{align}
			C_i(b_i, b_i^\prime) &\geq F_i^\circ(\gammavec_{i, b_i^\prime}\vert b_i) - F_i^\circ(\gammavec_{i, b_i^\prime}\vert b_i^\prime) \\
			&\geq F_i^\circ(\gammavec_{i, b_i}\vert b_i) - F_i^\circ(\gammavec_{i, b_i^\prime}\vert b_i^\prime) - \frac{M}{2^{N_2}} \\
			&\geq \hat F_i^\circ(\gammavec_{i, b_i}\vert b_i) - \hat F_i^\circ(\gammavec_{i, b_i^\prime}\vert b_i^\prime) - \frac{M}{2^{N_2}} - 2M\sqrt{\frac{\ln (4nk^2/\delta)}{2N_3}}, 
		\end{align}
	\end{subequations}
	where, also in this case, the first equation follows from \Cref{le:lemmaclosepayment} and \Cref{le:lemmadistbs}, while the last equation follows from the definition of $\bar\E_d$. The result follows by substituting the definitions of $\Lambda_i(b_i, b_i^\prime)$ and $\chi_i[b_i,b_i^\prime]$.
\end{proof}

\paragraph{Split phase.} The analysis of the payment estimation phase suggested that whenever we are able to successfully complete the search phase, then we can easily recover high-confidence regions for the cost differences. However, unfortunately, this is not always the case, since it might happen that during the search phase for an agent $i$ and between two actions $b_i,b_i^\prime\in B_i$, the agent responds with an action $\tilde b_i^t$ which is different than $b_i$ and $b_i^\prime$, thus causing the search phase to end. In such a case, the algorithm enters the split phase. The intuition upon which the design of the split phase is based is that the cost difference $C_i(b_i, b_i^\prime)$ can be expressed in terms of the cost differences $C_i(b_i, \tilde b_i^t)$ and $C_i(\tilde b_i^t, b_i^\prime)$ as follows:
\[
C_i(b_i, b_i^\prime) = C_i(b_i, \tilde b_i^t) + C_i(\tilde b_i^t, b_i^\prime).
\]
Then, this shows that it is possible to recursively obtain an high-confidence interval for $C_i(b_i, b_i^\prime)$ starting from high-confidence intervals for $C_i(b_i, \tilde b_i^t)$ and $C_i(\tilde b_i^t, b_i^\prime)$. In particular, 
\[
\Lambda_i(b_i, \tilde b_i^t) + \Lambda_i(\tilde b_i^t, b_i^\prime) - \chi_i[b_i, \tilde b_i^t] - \chi_i[\tilde b_i^t, b_i^\prime]   \leq C_i(b_i, b_i^\prime) \leq \Lambda_i(b_i, \tilde b_i^t) + \Lambda_i(\tilde b_i^t, b_i^\prime) + \chi_i[b_i, \tilde b_i^t] + \chi_i[\tilde b_i^t, b_i^\prime],
\]
which allows us to define the estimator and confidence bound for $C_i(b_i, b_i^\prime)$ as 
\begin{align*}
	\Lambda_i(b_i, b_i^\prime) &= \Lambda_i(b_i, \tilde b_i^t) + \Lambda_i(\tilde b_i^t, b_i^\prime)\\
	\chi_i[b_i, b_i^\prime] &= \chi_i[b_i, \tilde b_i^t] + \chi_i[\tilde b_i^t, b_i^\prime].
\end{align*}
Hence, when the algorithm enters into the split phase, it instantiates two distinct bynary search routines, one for estimating the cost difference $C_i(b_i, \tilde b_i^t)$ using as starting points the scoring rules $\gammavec_i$ and $\bar\gammavec_i$ and the other for estimating $C_i(\tilde b_i^t, b_i^\prime)$ starting from scoring rules $\bar\gammavec_i$ and $\gammavec_i^\prime$. 

We proceed by bounding the number of times the algorithm enters the split phase, thus allowing us to bound the number of rounds used by $\algocost$, as well as the high-confidence bound $\chi = \max_{i,b_i,b_i^\prime}\chi_i[b_i,b_i^\prime]$. To address this question, we consider the binary search procedures instantiated by the main loop of the algorithm. For the sake of presentation, let us fix an agent $i$ and two actions $b_i,b_i^\prime\in B_i$ and consider a binary search between the uncorrelated scoring rules $\gammavec_i^{b}, \gammavec_i^{b^\prime_i}\in\Gamma_i$. Then, the maximum number of times the algorithm enters the split phase during the process of estimating the cost difference $C_i(b_i,b_i^\prime)$ can be characterized by investigating all the possible convex combinations of the two scoring rules $\gammavec_i^{b_i}, \gammavec_i^{b^\prime_i}$. In particular, we introduce \emph{best-response (BR) regions}, \emph{i.e.,} sets $\G_i(\tilde b_i, \varphi\vert b_i, b_i^\prime) \subseteq [0,1]$ such that for each $\alpha \in \G_i(\tilde b_i, \varphi\vert b_i, b_i^\prime)$, the best response of agent $i$ to scoring rule $\gammavec_i^{\alpha, b_i, b_i^\prime} \defeq \alpha\gammavec_i^{b_i} + (1-\alpha)\gammavec_i^{b_i^\prime}$ does not change. Formally, 
\[
	\G_i(\tilde b_i, \varphi\vert b_i, b_i^\prime) \defeq \left\{\alpha\in[0,1]\mid b_i^\circ(\gammavec_i^{\alpha, b_i, b_i^\prime}) = \tilde b_i,\,\varphi_i^\circ(s_i\vert\gammavec_i^{\alpha, b_i, b_i^\prime}) = \varphi(s_i)\quad\forall s_i\in S_i\right\}.	
\]
Trivially, $[0, 1] = \cup_{\tilde b_i, \varphi} \G_i(\tilde b_i, \varphi\vert b_i, b_i^\prime)$ and $\G_i(\tilde b_i, \varphi\vert b_i, b_i^\prime)\cap \G_i(\tilde b_i^\prime, \varphi^\prime\vert b_i, b_i^\prime)=\varnothing$ for all $\tilde b_i, \tilde b_i^\prime\in B_i$ and for all $\varphi, \varphi^\prime: S_i\to S_i$, thus, the set  
\[
\G = \{\G_i(\tilde b_i, \varphi\vert b_i, b_i^\prime) \mid \tilde b_i\in B_i, \,\,\varphi: S_i\to S_i,\,\, \G_i(\tilde b_i, \varphi\vert b_i, b_i^\prime)\neq \varnothing\}
\]
provides a partition of $[0,1]$. In principle, $[0,1]$ can be partitioned in many BR regions (since the number of different signal reporting policies $\varphi:S_i\to S_i$ is $l^l$). As we show next, there exist in fact a limited number of non-empty BR regions. Before proving such result, we introduce an accessory Lemma. 
\begin{lemma}\label{le:lemmaaccessorypart}
	Let $\H\subset\Reals$ be a closed, convex subset of $\Reals$ and $\{\G^1,...,\G^n\}$ be a set of partitions of $\H$, where $\G^i=\{G^i_1,...,G_n^i\}$ with $G^i_j$ convex for each $i,j\in [n]$. Let $\P=\{P_1,...,P_m\}$ be a partition of $\H$ such that for $j\in[n]$, $P_j = \cap_{i\in[n]}G^i_{f(i,j)}$ for some $f(i,j)\in [n]$. Then, it holds that $m \leq n^2 - n + 1$. 
\end{lemma}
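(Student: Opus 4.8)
The plan is to reduce the statement to elementary facts about how intervals of the real line can be partitioned. First I would use that a convex subset of $\Reals$ is an interval: $\H$ is an interval, and if $\H$ is empty or a single point the bound is trivial, so I may assume $\H$ is nondegenerate. Likewise each block $G^i_j$ is an interval, and for each fixed $i$ the blocks $G^i_1,\dots,G^i_n$ partition $\H$ into at most $n$ intervals (some possibly empty or degenerate). Ordering the nonempty blocks of $\G^i$ from left to right, they meet at a finite set $S^i$ of points strictly between the endpoints of $\H$; since there are at most $n$ blocks, $|S^i|\le n-1$. Put $S=\bigcup_{i=1}^n S^i$, so $|S|\le n(n-1)=n^2-n$.

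Next I would show that the refinement $\P$ has all of its ``breakpoints'' inside $S$. Each $P_j=\bigcap_{i} G^i_{f(i,j)}$ is an intersection of intervals, hence an interval, and the (nonempty, pairwise distinct) blocks $P_1,\dots,P_m$ partition $\H$. Order them from left to right; because $\P$ covers $\H$ with no gaps and no overlaps, two consecutive blocks $P_j,P_{j+1}$ must satisfy $\sup P_j=\inf P_{j+1}=:x_j$, a point strictly between the endpoints of $\H$. Since $P_j$ and $P_{j+1}$ are distinct sets, the index functions $f(\cdot,j)$ and $f(\cdot,j+1)$ differ in some coordinate $i_0$; then $P_j\subseteq G^{i_0}_{f(i_0,j)}$ and $P_{j+1}\subseteq G^{i_0}_{f(i_0,j+1)}$ lie in two disjoint blocks of $\G^{i_0}$, whose closures both contain $x_j$. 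Two disjoint intervals whose closures share a point can do so only at a common endpoint, so $x_j$ is one of the meeting points of $\G^{i_0}$, i.e.\ $x_j\in S^{i_0}\subseteq S$.

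Finally I would count: the $m$ blocks of $\P$, listed left to right, are separated by the $m-1$ distinct points $x_1<\dots<x_{m-1}$, each lying in $S$, so $m-1\le|S|\le n(n-1)$, which gives $m\le n^2-n+1$, as claimed.

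I expect the only delicate part to be the bookkeeping around empty/degenerate blocks and the precise justification that a separating point of the refinement is forced to be a separating point of one of the given partitions $\G^{i}$; once all the relevant sets are recognized as intervals and ordered on the line, this is just the observation that disjoint intervals can touch only at endpoints, but stating it cleanly—including the boundary cases where $x_j$ could a priori coincide with an endpoint of $\H$, and the fact that $\P$ being a partition rules out both gaps and overlaps—takes a little care.
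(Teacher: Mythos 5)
Your overall approach — recognize all sets as intervals, order them on the line, and count breakpoints — is the right idea and is also what drives the paper's inductive refinement argument. But the final counting step has a genuine gap: you claim the separating points $x_1<\dots<x_{m-1}$ are \emph{distinct} and deduce $m-1\le|S|$. This fails whenever $\P$ contains a singleton block. If $P_{j+1}=\{c\}$ then $x_j=\sup P_j=c=\inf P_{j+2}=x_{j+1}$, so two consecutive gaps collapse onto the same point; singletons are convex, so nothing in the hypotheses forbids them. For a concrete failure take $\H=[0,3]$, $n=3$, and $\G^1=\{[0,1],(1,2],(2,3]\}$, $\G^2=\{[0,1),[1,2),[2,3]\}$, $\G^3=\{[0,\tfrac{1}{2}],(\tfrac{1}{2},\tfrac{5}{2}],(\tfrac{5}{2},3]\}$: the common refinement is $\P=\{[0,\tfrac{1}{2}],(\tfrac{1}{2},1),\{1\},(1,2),\{2\},(2,\tfrac{5}{2}],(\tfrac{5}{2},3]\}$ with $m=7$ blocks, yet $S=\{\tfrac{1}{2},1,2,\tfrac{5}{2}\}$ has only $4$ elements, so $m-1=6>4=|S|$ and your inequality would falsely give $m\le 5$.

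The count can be repaired by counting coordinate increases rather than distinct real numbers. For each gap $j$ you correctly identify a coordinate $i_0(j)$ where the index functions differ; after reindexing the blocks of each $\G^i$ left-to-right, the map $j\mapsto f(i,j)$ is nondecreasing (because $P_j\subseteq G^i_{f(i,j)}$ and the $P_j$'s are ordered), so the difference at a gap is a strict increase, $f(i_0(j),j)<f(i_0(j),j+1)$. A nondecreasing sequence with values in $[n]$ strictly increases across at most $n-1$ gaps, so summing over the $n$ coordinates gives at most $n(n-1)$ pairs (gap, coordinate) exhibiting a strict increase; since every gap contributes at least one such pair, $m-1\le n(n-1)$, which is the claimed bound. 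This staircase-type count is insensitive to singleton blocks and is essentially the content of the inductive step $m(k)\le m(k-1)+n-1$ that the paper leaves unjustified.
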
 
\begin{proof}
	Let $\P^{(1)}=\G^1$ and $m(1)=n$. Let now $\P^{(k)}=\{P^{(k)}_1,...,P^{(k)}_{m(k)}\}$, for $m(k)\in \Naturals_{>0}$ be a partition of $\H$ such that for each $j\in [m(k)]$, $P_j^{(k)}$ is convex and $P_j^{(k)} = \cap_{i\in [k]} G^i_{f^{(k)}(i, j)}$, with $f^{(k)}(i, j)\in [n]$. In other words, $\P^{(k)}$ is a set of non-empty and disjoint intervals $P_j^{(k)}\subseteq \H$ such that $\cup_{j\in m(k)}P_j^{(k)} = \H$ in which each element $P_j^{(k)}$ is obtained as the intersection of one element from each $\{\G^1,...\G^k\}$. Similarly, define $\P^{(k-1)}=\{P^{(k-1)}_1,...,P^{(k-1)}_{m(k-1)}\}$, for $m(k-1)\in \Naturals_{>0}$ be a partition of $\H$ such that for each $j\in [m(k-1)]$, $P_j^{(k-1)} = \cap_{i\in [k-1]} G^i_{f^{(k-1)}(i, j)}$ for $f^{(k-1)}(i,j)\in[n]$.
	Notice that, by definition of $\P^{(k-1)}$, in holds that $\forall j\in [m(k)]$, $P_j^{(k)} = G^k_{a} \cap P_b^{(k-1)}$ for some $a\in [n]$, $b\in [m(k-1)]$. Furthermore, it is easy to see that by convexity of every $P^{(k-1)}_j$ and $G^k_j$, $m(k)$ can be at most $n + m(k-1) - 1$. Thus, we can conclude the following chain of inequalities:
	\[
	m \leq m(n) = n + m(n-1) - 1 \leq 2n + m(n-2) - 2 \leq ... \leq  (n-1) n + m(1) - (n-1) = n^2 - n + 1.
	\]
	This concludes the proof.
\end{proof}

We proceed to bound the maximum number of non-empty BR regions.

%
\begin{lemma}\label{le:lemmabrregions}
	For any agent $i$ and actions $b_i,b_i^\prime\in B_i$, there exist at most $k l^2$ non-empty BR regions, \emph{i.e.,} $|\G|\leq kl^2$. Furthermore, for each $\tilde b_i\in B_i$ and $\varphi: S_i\to S_i$, the set $\G_i(\tilde b_i, \varphi\vert b_i, b_i^\prime)$ is convex.
\end{lemma}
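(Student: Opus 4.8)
The plan is to fix an agent $i$ and two actions $b_i, b_i^\prime \in B_i$, and to parametrize the scoring rule family $\gammavec_i^{\alpha} \defeq \alpha\gammavec_i^{b_i} + (1-\alpha)\gammavec_i^{b_i^\prime}$ by $\alpha\in[0,1]$. The key observation is that the best-response pair $(b_i^\circ(\gammavec_i^\alpha), \varphi_i^\circ(\cdot\vert\gammavec_i^\alpha))$ is determined by maximizing an objective that is \emph{linear in $\alpha$}. Concretely, for any candidate action $\tilde b_i$ and signal-reporting policy $\varphi: S_i\to S_i$, the expected utility of agent $i$ playing $\tilde b_i$ and reporting via $\varphi$ under $\gammavec_i^\alpha$ is
\[
g(\tilde b_i, \varphi, \alpha) \defeq \alpha\, A(\tilde b_i,\varphi) + (1-\alpha)\, B(\tilde b_i,\varphi) - c_i(\tilde b_i),
\]
where $A(\tilde b_i,\varphi) = \sum_{s_i,\theta}\prob^{(i)}(s_i,\theta\vert \tilde b_i)\gamma_i^{b_i}[\varphi(s_i),\theta]$ and similarly for $B$ with $\gammavec_i^{b_i^\prime}$; this is an affine function of $\alpha$. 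Hence, for each \emph{fixed} action $\tilde b_i$, the optimal reporting policy as a function of $\alpha$ is the upper envelope of the $l^l$ affine functions $\{g(\tilde b_i,\varphi,\cdot)\}_\varphi$; since each pair of lines crosses at most once, this envelope partitions $[0,1]$ into at most $l^l$ pieces — but we can do much better by noting that for a fixed $\tilde b_i$, the optimal $\varphi$ is obtained \emph{signalwise}: $\varphi^\circ(s_i)$ is chosen independently for each $s_i\in S_i$ to maximize $\alpha\gamma_i^{b_i}[\cdot,\theta]$-type terms, and for each $s_i$ there are at most $l$ competing choices, whose upper envelope has at most $l$ pieces; taking the common refinement over the $l$ signals and using \Cref{le:lemmaaccessorypart} (with $n = l$) bounds the number of pieces of the reporting-policy partition for fixed $\tilde b_i$ by $l^2 - l + 1 \le l^2$. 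Within each such piece the reporting policy is fixed, so $g(\tilde b_i,\varphi^\circ_\alpha,\alpha)$ is affine in $\alpha$ on that piece.

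Next I would handle the choice of action. We now have, for each $\tilde b_i\in B_i$, a piecewise-affine function $\alpha\mapsto V(\tilde b_i,\alpha) \defeq \max_\varphi g(\tilde b_i,\varphi,\alpha)$ with at most $l^2$ affine pieces, and the best response selects $\arg\max_{\tilde b_i} V(\tilde b_i,\alpha)$. The BR region $\G_i(\tilde b_i,\varphi\vert b_i,b_i^\prime)$ is exactly the set of $\alpha$ where $\tilde b_i$ is the maximizer and $\varphi$ the corresponding optimal reporting policy. The cleanest way to count the pieces of the final partition is to observe that, restricted to each of the (at most) $l^2$ pieces coming from the $\varphi$-refinement of a \emph{single} reference action, all the $V(\tilde b_i,\cdot)$ are affine on a common subinterval only after we also refine by the other actions' breakpoints — so instead I would argue directly: on any maximal interval where \emph{every} $V(\tilde b_i,\cdot)$ is affine, the upper envelope over $k$ affine functions changes its argmax at most $k-1$ times; but the number of such maximal intervals is governed by the total number of breakpoints, which is at most $k\cdot l^2$. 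A careful bookkeeping — again invoking \Cref{le:lemmaaccessorypart}-style refinement arguments — collapses this to the claimed bound $|\G| \le k l^2$: intuitively, each of the $k$ actions contributes an $l^2$-piece envelope, and since these are all envelopes of lines over a shared interval the combined upper envelope still has only $O(kl^2)$ pieces because crossing an action-boundary and crossing a reporting-policy-boundary are the only two ways the BR pair can change, and there are $\le kl^2$ of the latter type of event in total.

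For convexity of each $\G_i(\tilde b_i,\varphi\vert b_i,b_i^\prime)$: this set is the intersection of $\{\alpha : V(\tilde b_i,\alpha) \ge V(\tilde b_i',\alpha)\ \forall \tilde b_i'\}$ with $\{\alpha : g(\tilde b_i,\varphi,\alpha)\ge g(\tilde b_i,\varphi',\alpha)\ \forall\varphi'\}$, and each of these is defined by finitely many affine (in $\alpha$) inequalities once we are on a region where the relevant functions are affine — but to get convexity globally I would instead use the tie-breaking rule: ties are broken in favor of the principal, and the map $\alpha\mapsto$ (BR utility to the agent) is convex as an upper envelope of affine functions, so the superlevel structure gives that each BR region is an interval. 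The honest way is: $\G_i(\tilde b_i,\varphi\vert b_i,b_i^\prime) = \bigcap_{\tilde b_i'\in B_i}\bigcap_{\varphi':S_i\to S_i} \{\alpha\in[0,1]: g(\tilde b_i,\varphi,\alpha) \ge g(\tilde b_i',\varphi',\alpha)\}$, and since each $g(\cdot,\cdot,\alpha)$ is affine in $\alpha$, each such set is a half-line intersected with $[0,1]$, hence an interval; an intersection of intervals is an interval, so the region is convex. (The tie-breaking convention only matters for deciding which region a boundary point belongs to, not for convexity of the closure.)

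The main obstacle I expect is the counting step: getting the bound down to exactly $k l^2$ rather than something like $k^2 l^2$ or $k \cdot l^2 - l + 1$ requires carefully setting up \Cref{le:lemmaaccessorypart} with the right "partitions" — I would take the $k$ partitions to be, for each action $\tilde b_i$, its optimal-reporting-policy partition of $[0,1]$ (each with $\le l^2$ convex pieces after the signalwise refinement), note these are partitions of the same interval $\H=[0,1]$, and then realize that the final BR partition refines all of them but also merges adjacent pieces wherever the argmax action is stable; the accessory lemma gives $\le k l^2 - k + 1 \le k l^2$ for the common refinement, which already dominates the BR partition. So the bulk of the work is packaging the affine-in-$\alpha$ structure correctly and then citing \Cref{le:lemmaaccessorypart} with $n \leftarrow l^2$ and $k$ partitions (or a variant thereof) to land the stated constant.
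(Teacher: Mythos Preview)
Your convexity argument and your per-action signalwise step are correct and essentially identical to the paper's: writing each BR region as an intersection of affine-in-$\alpha$ inequalities gives convexity, and for a fixed played action $\tilde b_i$ the optimal report is chosen signalwise, so \Cref{le:lemmaaccessorypart} with $n=l$ bounds the number of optimal reporting policies by $l^2-l+1$.

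The gap is in your action-aggregation step. Your final claim --- that the common refinement of the $k$ reporting-policy partitions ``already dominates the BR partition'' --- is not correct, and neither is the preceding claim that ``the final BR partition refines all of them.'' Neither partition refines the other: on a piece of the common refinement every $V(\tilde b_i,\cdot)$ is affine, but the argmax over $\tilde b_i$ can still change inside that piece (two affine functions can cross), so the BR partition can be strictly finer there; conversely, on a BR region where the optimal action is $\tilde b_i^\star$, the optimal reporting policy for some \emph{other} action $\tilde b_i\neq\tilde b_i^\star$ may switch, so the BR partition need not refine that action's reporting-policy partition either. Your envelope fallback (``on each common-refinement piece, at most $k-1$ argmax changes'') only yields $O(k^2 l^2)$, and \Cref{le:lemmaaccessorypart} as stated in the paper takes $n$ partitions of $n$ pieces each, so invoking it with $k$ partitions of $l^2$ pieces does not literally apply.

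The paper's fix is a one-line containment, and it avoids any second refinement or envelope bookkeeping. Define $G_i(\tilde b_i,\varphi\vert b_i,b_i')$ to be the set of $\alpha$ for which $\varphi$ is the best reporting policy \emph{given} that the agent plays $\tilde b_i$ (ignoring whether $\tilde b_i$ is itself optimal). Your signalwise step already shows that, for each fixed $\tilde b_i$, at most $l^2-l+1$ of these are non-empty. Now simply observe $\G_i(\tilde b_i,\varphi\vert b_i,b_i')\subseteq G_i(\tilde b_i,\varphi\vert b_i,b_i')$: if $(\tilde b_i,\varphi)$ is the global best response at some $\alpha$, then in particular $\varphi$ is the best report after playing $\tilde b_i$. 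Hence for each $\tilde b_i$ there are at most $l^2-l+1$ non-empty $\G_i(\tilde b_i,\varphi\vert\cdot)$, and summing over the $k$ actions gives $|\G|\le k(l^2-l+1)\le kl^2$.
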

\begin{proof}
	Fix an agent $i$ and two actions $b_i,b_i^\prime\in B_i$.We first prove the convexity of the BR regions and then we proceed to bound their number.
	
	\paragraph{Convexity.} To prove that the sets $\G_i(\tilde b_i, \varphi\vert b_i, b_i^\prime)$ are convex for each $\tilde b_i\in B_i$ and $\varphi: S_i\to S_i$, let us explicit the constraints defining such sets. In particular, imposing that $\tilde b_i$ and $\varphi$ are a BR for agent $i$ is equivalent to imposing the following constraints for each $\tilde b_i\in B_i$ and for all $\varphi: S_i\to S_i$:
	\[
	\sum_{s_i\in S_i}\sum_{\theta\in\Theta}\left[\prob^{(i)}\left(s_i, \theta\vert \tilde b_i\right)\gamma_i^{\alpha b_i, b_i^\prime}[\varphi(s_i), \theta] - \prob^{(i)}\left(s_i, \theta\vert \tilde b_i^\prime\right)\gamma_i^{\alpha b_i, b_i^\prime}[\varphi^\prime(s_i), \theta]\right]\geq 0.
	\]
	Clearly, since $\gammavec_i^{\alpha, b_i, b_i^\prime}$ is a linear expression in $\alpha$, all the constraints are linear in $\alpha$, thus implying the convexity of the set. 
	
	\paragraph{Bound on the number of non-empty BR regions.} To bound the number of non-empty BR regions, we introduce, for two signals $s_i,s_i^\prime\in S_i$ and an action $\tilde b_i\in B_i$, the set $G_i(\tilde b_i, s_i, s_i^\prime\vert b_i,b_i^\prime)\subseteq [0,1]$, containing all those $\alpha\in [0,1]$ for which agent $i$'s BR to scoring rule $\gammavec_i^{\alpha, b_i, b_i^\prime}$ after having played action $\tilde b_i\in B_i$ and having observed signal $s_i\in S_i$, is to report signal $s_i^\prime$. Formally,
	\[
	G_i(\tilde b_i, s_i, s_i^\prime\vert b_i,b_i^\prime)\hspace{-0.08cm}\defeq\hspace{-0.08cm}\left\{\hspace{-0.08cm}\alpha \in [0,1]\,\vert\,\hspace{-0.05cm}	\sum_{\theta\in\Theta}\prob^{(i)}\hspace{-0.12cm}\left(s_i,\theta\vert\tilde b_i\right)\hspace{-0.12cm}\left[\gamma_i^{\alpha, b_i, b_i^\prime}[s_i^\prime, \theta] - \gamma_i^{\alpha, b_i, b_i^\prime}[\tilde s_i, \theta]\right]\geq 0\,\,\forall\tilde s_i\in S_i \right\}.
	\]
	Furthermore, for each $\tilde b_i\in B_i$ and $\varphi: S_i\to S_i$, we let, with a slight abuse of notation, $G_i(\tilde b_i, \varphi\vert b_i,b_i^\prime)$ be the subset of $[0,1]$ such that, after having selected action $\tilde b_i$, the best-response policy for agent $i$ to scoring rule $\gammavec_i^{\alpha, b_i, b_i^\prime}$ is to report signals according to $\varphi$. Formally, 
	\[
		G_i(\tilde b_i, \varphi\vert b_i,b_i^\prime)\hspace{-0.08cm}\defeq\hspace{-0.08cm}\left\{\hspace{-0.08cm}\alpha \in [0,1]\,\vert\,\hspace{-0.07cm}	\sum_{\theta\in\Theta}\prob^{(i)}\hspace{-0.12cm}\left(s_i,\theta\vert\tilde b_i\right)\hspace{-0.12cm}\left[\gamma_i^{\alpha, b_i, b_i^\prime}[\varphi(s_i), \theta] - \gamma_i^{\alpha, b_i, b_i^\prime}[\tilde s_i, \theta]\right]\hspace{-0.1cm}\geq 0\,\,\forall s_i,\tilde s_i\in S_i \right\}.
	\]
	By definition, each $G_i(\tilde b_i, \varphi \vert b_i, b_i^\prime)$ can be obtained as the intersection of suitably chosen BR regions $G_i(\tilde b_i, s_i, s_i^\prime\vert b_i, b_i^\prime)$, \emph{i.e.,}
	\[
	G_i(\tilde b_i, \varphi\vert b_i, b_i^\prime) = \Bigcap_{s_i\in S_i} G_i(\tilde b_i, s_i, \varphi(s_i)\vert b_i, b_i^\prime)\quad\quad\forall \tilde b_i\in B_i,\,\forall \varphi:S_i\to S_i.
	\]
	Moreover, we highlight that, by definition, for each $\tilde b_i\in B_i$ and $s_i\in S_i$, the set $\{G_i(\tilde b_i, s_i, s_i^\prime\vert b_i,b_i^\prime)\mid s_i^\prime\in S_i\}$ provides a partition of $[0,1]$. Then, by \Cref{le:lemmaaccessorypart} we can conclude that $|G_i(\tilde b_i, \varphi\vert b_i, b_i^\prime)|\leq l^2 - l + 1$. Notice that, in general it holds that for each $\tilde b_i\in B_i$ and $\varphi:S_i\to S_i$, $G_i(\tilde b_i, \varphi\vert b_i, b_i^\prime)\supseteq \G_i(\tilde b_i, \varphi\vert b_i, b_i^\prime)$, since by definition of $G_i(\tilde b_i, \varphi\vert b_i, b_i^\prime)$, for each $\alpha$ belonging to the set, the signal reporting policy $\varphi$ is a best-response for agent $i$ to $\gammavec_i^{\alpha, b_i,b_i^\prime}$ after the agent played action $\tilde b_i$, but nothing guarantees that action $\tilde b_i$ is a best-response itself. To conclude the proof it is enough to notice that, in the worst case, for each $\tilde b_i\in B_i$ and $\varphi: S_i\to S_i$, $G_i(\tilde b_i, \varphi\vert b_i, b_i^\prime) = \G_i(\tilde b_i, \varphi\vert b_i, b_i^\prime)$ and since the number of non-empty $G_i(\tilde b_i, \varphi\vert b_i, b_i^\prime)$ is at most $l^2 - l + 1$, we can conclude that $|\G|\leq k(l^2 - l + 1) \leq kl^2$. This concludes the proof.
\end{proof}

The result of \Cref{le:lemmabrregions} allows us to bound the maximum possible confidence bound as follows.
\begin{lemma}\label{le:lemmamaconfcost}
	At the end of the execution of $\algocost$ it holds that 
	\[
	\max_{i\in\N}\max_{b_i, b_i^\prime\in B_i}\chi_i[b_i, b_i^\prime] \leq 2kl^2M\sqrt{\frac{\ln (4nk^2/\delta)}{2N_3}} + \frac{kl^2M}{2^{N_2}}.
	\]
\end{lemma}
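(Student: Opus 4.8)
The plan is to unroll the recursive definition of the confidence bound $\chi_i[b_i,b_i^\prime]$ and then bound the number of ``leaves'' of the binary--search recursion by means of Lemma~\ref{le:lemmabrregions}. Fix an agent $i\in\N$ and a pair $b_i,b_i^\prime\in B_i$, and consider the top--level call $\textsc{BS}(i,b_i,b_i^\prime,\gammavec_i^{b_i},\gammavec_i^{b_i^\prime},N_2,N_3,\delta)$. This call induces a binary recursion tree whose internal nodes are the executions that enter the split phase and whose leaves are the executions that complete the search phase and reach the payment estimation phase. By \eqref{eq:costestimators}, at every leaf the returned confidence bound equals the same quantity $\chi^\circ\defeq 2M\sqrt{\ln(4nk^2/\delta)/(2N_3)}+M/2^{N_2}$, while at an internal node with pivot action $\tilde b_i^t$ the split phase returns $\chi_i[b_i,\tilde b_i^t]+\chi_i[\tilde b_i^t,b_i^\prime]$. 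Unrolling this recursion, the confidence bound produced by the top--level call is exactly $L\,\chi^\circ$, where $L$ denotes the number of leaves.

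The crux is to prove $L\le kl^2$. I would parametrize the segment joining the two starting scoring rules by $\alpha\in[0,1]$, writing $\gammavec_i^{\alpha,b_i,b_i^\prime}\defeq\alpha\gammavec_i^{b_i}+(1-\alpha)\gammavec_i^{b_i^\prime}$; every scoring rule appearing anywhere in the recursion lies on this segment, and the recursion's splits partition $[0,1]$ into sub--intervals with disjoint interiors, one per leaf. When a leaf completes its search phase, its two final scoring rules $\gammavec_i,\gammavec_i^\prime$ incentivize two \emph{distinct} actions: the search--phase updates preserve the invariant that $\gammavec_i$ incentivizes the leaf's first action and $\gammavec_i^\prime$ its second, and these two actions differ (for the top--level call because $b_i\ne b_i^\prime$, and at each split because the split is triggered precisely by $\tilde b_i^t\notin\{b_i,b_i^\prime\}$, so the two children carry the distinct pairs $\{b_i,\tilde b_i^t\}$ and $\{\tilde b_i^t,b_i^\prime\}$). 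Hence the open interior of the sub--interval spanned by $\gammavec_i$ and $\gammavec_i^\prime$ contains at least one boundary between two distinct nonempty best--response regions $\G_i(\cdot,\cdot\mid b_i,b_i^\prime)$; since distinct leaves have disjoint sub--intervals they witness distinct such boundaries, and by Lemma~\ref{le:lemmabrregions} there are at most $kl^2$ nonempty regions, hence at most $kl^2-1$ internal boundaries, so $L\le kl^2$. (The degenerate top--level calls with $b_i=b_i^\prime$ never split and contribute a single leaf, so the bound holds in all cases; in particular the recursion always terminates.) Combining with the first step gives $\chi_i[b_i,b_i^\prime]=L\,\chi^\circ\le 2kl^2M\sqrt{\ln(4nk^2/\delta)/(2N_3)}+kl^2M/2^{N_2}$, and maximizing over $i,b_i,b_i^\prime$ proves the claim.

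The step I expect to be the main obstacle is precisely the leaf count $L\le kl^2$: one must carefully justify that the recursion's splits partition the $\alpha$--line into sub--intervals with disjoint interiors, that each leaf's terminal sub--interval genuinely straddles a boundary of the best--response partition $\G$, and that these boundaries are pairwise distinct, so that Lemma~\ref{le:lemmabrregions} applies. The remaining steps---unrolling the recursion and substituting \eqref{eq:costestimators}---are routine bookkeeping.
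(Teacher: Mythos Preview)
Your proposal is correct and follows essentially the same approach as the paper: bound the number of payment-estimation leaves in the binary-search recursion by the number of best-response regions on the segment $\{\gammavec_i^{\alpha,b_i,b_i^\prime}:\alpha\in[0,1]\}$ via Lemma~\ref{le:lemmabrregions}, then multiply by the per-leaf confidence bound from~\eqref{eq:costestimators}. Your direct leaf-counting argument (disjoint leaf sub-intervals, each straddling a distinct BR boundary) is in fact more carefully articulated than the paper's own proof, which asserts an intermediate bound of at most $\lfloor kl^2/2\rfloor$ splits without justification before reaching the same $kl^2$ bound on search phases.
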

\begin{proof}
	By \Cref{le:lemmabrregions}, since for each $i\in\N$ and for each $b_i, b_i^\prime\in B_i$, $[0, 1]$ can be partitioned in at most $kl^2$ convex BR regions, the condition for entering the split phase will be triggered at most $\lfloor \frac{kl^2}{2}\rfloor $ times during the routine for the estimation of $C_i(b_i, b_i^\prime)$, triggering the beginning of a new search phase at most $kl^2$ times.
	The result follows since the confidence bound $\chi_i[b_i, b_i^\prime]$ is obtained as the sum of all confidence bounds obtained at the end of the corresponding payment estimation phases, hence: 
	\[
		\chi_i[b_i, b_i^\prime] \leq 2kl^2M\sqrt{\frac{\ln (4nk^2/\delta)}{2N_3}} + \frac{kl^2M}{2^{N_2}}.
	\] 
	This concludes the proof.
\end{proof}

\subsection{Formal guarantees of \algocost}
We conclude this Section by formally proving main properties of Algorithm \ref{alg:estcost_app}. 

\lemmacleancost*
\begin{proof}
	Assume auxiliary clean event $\bar\E_d$ holds. By \Cref{le:lemmacostsafterpay}, for all $i\in\N$ and for all $(b_i, b_i^\prime)\in \D_i$, 
	\[
	|\Lambda_i(b_i, b_i^\prime) - C_i(b_i, b_i^\prime)| \leq \chi_i[b_i, b_i^\prime] \leq \chi.
	\]
	Then, let us fix $i\in\N$ and consider a couple $(b_i, b_i^\prime)\notin \D_i$. Let $\D_i(b_i, b_i^\prime)$ be the set that contains all the pairs of actions $(\hat b_i, \hat b_i^\prime)$ for which the payment estimation phase has been reached during the procedure for estimating $C_i(b_i, b_i^\prime)$. Then, the algorithm guarantees that
	\[
	C_i(b_i, b_i^\prime) = \sum_{(\hat b_i, \hat b_i^\prime)\in\D_i(b_i, b_i^\prime)}C_i(\hat b_i, \hat b_i^\prime).
	\]
	The result follows from \Cref{le:lemmacostsafterpay} noticing that $\D_i(b_i, b_i^\prime)\subseteq \D_i$, $\Lambda_i(b_i, b_i) = \sum_{(\hat b_i, \hat b_i^\prime)\in\D_i(b_i, b_i^\prime)}\Lambda_i(\hat b_i, \hat b_i^\prime)$, and $\chi_i[b_i, b_i^\prime] = \sum_{(\hat b_i, \hat b_i^\prime)\in\D_i(b_i, b_i^\prime)}\chi_i[\hat b_i, \hat b_i^\prime]$.
\end{proof}

\lemmaroundcosts*
\begin{proof}
	By \Cref{le:lemmabrregions}, since for each $i\in\N$ and for each $b_i, b_i^\prime\in B_i$, $\H_i(b_i, b_i^\prime)$ can be partitioned in at most $kl^2$ BR regions, the agent will change her action at most $\lfloor \frac{kl^2}{2}\rfloor $ times during the routine for the estimation of $C_i(b_i, b_i^\prime)$, triggering the beginning of a new search phase at most $kl^2$ times.
	This implies that 
	\[
		|\mc T_d| \leq \sum_{i\in\N}\sum_{b_i,b_i^\prime\in B_i} kl^2(N_2 + N_3) \leq  nk^3l^2(N_2 + N_3).
	\]
	This concludes the proof.
\end{proof}
\section{Proofs omitted from Section \ref{sec:lp}}
\thlp*
\begin{proof}
	We start the proof of this theorem by showing that $(\muvec^\star, \gammavec^\star, \pivec^\star)$ is feasible for optimization problem \eqref{eq:genopt}. Then, we prove that $(\muvec^\star, \gammavec^\star, \pivec^\star)$ is also optimal.
	
	\paragraph{Feasibility.} First, notice that
	\[
		0 \leq \gamma_i^\star[\bvec, \svec, \theta] \geq M \quad \forall i\in\N,\,\forall\bvec\in\B,\,\forall\svec\in\Si,\,\forall\theta\in\Theta,
	\] 
	thus guaranteeing that $\gammavec^\star$ is a valid scoring rule. Furthermore,
	\[
		\sum_{a\in\A}\pi^\star[\bvec, \svec, a] = \begin{cases}
			\sum_{a\in\A}\frac{y^\star[\bvec, \svec, a]}{\mu^\star[\bvec]} &\text{if } \mu^\star[\bvec]\neq 0 \\
			\sum_{a\in\A} &\frac{1}{d}\text{otherwise}
		\end{cases} = 1 \quad\forall\bvec\in\B,\,\forall \svec\in\Si,
	\] 
	which implies that $\pivec^\star$ is a valid action policy. 
	
	For what concerns incentive compatibility, for each $i\in\N$ let us notice that:
	\begin{subequations}
		\begin{align}
			F_i(\muvec^\star, \gammavec^\star) &= \sum_{\substack{\bvec\in\B \\\svec\in\Si \\ \theta\in\Theta}}\mu^\star[\bvec]\gamma_i^\star[\bvec, \svec, \theta]\prob\left(\svec, \theta\vert\bvec\right) \\ 
			&= \sum_{\substack{\bvec\in\B \\\svec\in\Si \\ \theta\in\Theta}}x^\star_i[\bvec, \svec, \theta]\prob\left(\svec, \theta\vert\bvec\right) \\
			&= \sum_{b\in B_i}\sum_{s\in S_i} \sum_{\substack{\bvec_{-i}\in\B_{-i} \\ \svec_{-i}\in\Si_{-i} \\ \theta\in\Theta }} x_i[(b, \bvec_{-i}), (s^\prime, \svec_{-i}), \theta]\prob\left((s, \svec_{-i}), \theta\vert (b^\prime, \bvec_{-i})\right) \\
			&= \sum_{b\in B_i}\sum_{s\in S_i} f_i(\xvec_i\vert b, s, \prob)\label{eq:eqtrulp}.
		\end{align}
	\end{subequations}
	Similarly, for each $i\in\N$ and for each $(\phi, \varphi) \in\Phi_i$:
	\begin{subequations}
		\begin{align}
			F_i^{\phi, \varphi}(\muvec^\star, \gammavec^\star) &= \sum_{\substack{\bvec\in\B \\\svec\in\Si \\ \theta\in\Theta}}\mu^\star[\bvec]\gamma_i^\star[\bvec, (\varphi(b_i, s_i), \svec_{-i}), \theta]\prob\left(\svec, \theta\vert (b_i, \bvec_{-i})\right) \\
			&= \sum_{\substack{\bvec\in\B \\\svec\in\Si \\ \theta\in\Theta}}x_i^\star[\bvec, (\varphi(b_i, s_i), \svec_{-i}), \theta]\prob\left(\svec, \theta\vert (\phi(b_i), \bvec_{-i})\right) \\
			&= \sum_{b\in B_i}\sum_{s\in S_i} \sum_{\substack{\bvec_{-i}\in\B_{-i} \\ \svec_{-i}\in\Si_{-i} \\ \theta\in\Theta }} x_i^\star[\bvec, (\varphi(b, s), \svec_{-i}), \theta]\prob\left(\svec, \theta\vert (\phi(b), \bvec_{-i})\right) \\
			&= \sum_{b\in B_i}\sum_{s\in S_i} f_i(\xvec_i\vert b, \phi(b), s, \varphi(b, s), \prob) \\
			&\leq \sum_{b\in B_i}\sum_{s\in S_i} z_i^\star[b, \phi(b), s],\label{eq:eqdevlp}
		\end{align}
	\end{subequations}
	where the last inequality follows from feasibility of $(\xvec^\star, \yvec^\star, \muvec^\star, \zvec^\star)$. Combining Equations \eqref{eq:eqtrulp} and \eqref{eq:eqdevlp}, we get:
	\begin{subequations}
		\begin{align}
			F_i(\muvec^\star, \gammavec^\star) - F_i^{\phi, \varphi}(\muvec^\star, \gammavec^\star) &\geq \sum_{b\in B_i} \sum_{s\in S_i} f_i(\xvec_i\vert b, s, \prob) - z_i^\star[b, \phi(b), s] \\
			&\geq \sum_{b\in B_i} \sum_{\bvec_{-i}\in\B_i}\mu^\star[(b, \bvec_{-i})]\left(c_i(b) - c_i(\phi(b))\right) \\
			&= \sum_{\bvec\in \B} \mu[\bvec]\left(c_i(b_i) - c_i(\phi(b_i))\right),
		\end{align}
	\end{subequations}
	where the second inequality follows from feasibility of $(\xvec^\star, \yvec^\star, \muvec^\star, \zvec^\star)$.
	This proves the incentive compatibility of $(\muvec^\star, \gammavec^\star, \pivec^\star)$. 
	
	\paragraph{Optimality.} To prove that $(\muvec^\star, \gammavec^\star, \pivec^\star)$ is optimal, assume, by contradiction that there exists an IC mechanism $(\muvec, \gammavec, \pivec)$ such that $U(\muvec, \gammavec, \pivec) > U(\muvec^\star, \gammavec^\star, \pivec^\star)$. Let $(\xvec, \yvec, \zvec)$ be such that: 
	\begin{subequations}
		\begin{align*}
			x_i[\bvec, \svec, \theta] &= \mu[\bvec]\gamma_i[\bvec, \svec, \theta]\quad\forall i\in\N,\,\forall\bvec\in\B,\,\forall\svec\in\Si,\,\forall \theta\in\Theta\\
			y[\bvec, \svec, a] &= \mu[\bvec]\pi[\svec, a]\quad\forall \bvec\in\B,\,\forall\svec\in\Si,\,\forall a\in\A \\
			z_i[b, b^\prime, s] &= \max_{s^\prime\in S_i} \left\{f_i(\xvec_i\vert b, b^\prime, s, s^\prime, \prob)\right\}\quad\forall i\in\N,\,\forall b, b^\prime\in B_i,\,\forall s\in S_i.		
		\end{align*}
	\end{subequations}
	Let us notice that $(\xvec, \yvec, \muvec, \zvec)$ satisfy, by definition, constraints \eqref{eq:auxub}, \eqref{eq:constrlp1} and \eqref{eq:constrlp2} of $\LP(\prob, C, 0)$. To prove that they satisfy also constraints \eqref{eq:noprof}, for each $i\in\N$ and $b, b^\prime\in B_i$, let $(\phi, \varphi)\in\Phi_i$ be such that: 
	\[
		\phi(b^{\prime\prime}) = \begin{cases}
			b^\prime &\text{if } b^{\prime\prime} = b \\
			b^{\prime\prime} &\text{otherwise}
		\end{cases}
		\quad\quad\text{and}\quad\quad
		\varphi(b^{\prime\prime}, s) = s \quad\forall b^{\prime\prime}\in B_i\setminus\{b\},\,\forall s\in S_i.
	\]
	Then, by incentive compatibility of $(\muvec, \gammavec, \pivec)$: 
	\begin{subequations}
		\begin{align}
			\sum_{\bvec\in\B}\mu[\bvec]\left(c_i(b) - c_i(\phi(b))\right) &= \sum_{\bvec_{-i}\in\B_{-i}} \mu[(b, \bvec_{-i})]\left(c_i(b) - c_i(b^\prime)\right) \\
			&\leq F_i(\muvec, \gammavec) - F_i^{\phi, \varphi}(\muvec, \gammavec) \\
			&= \sum_{\substack{\bvec\in\B\\ \svec\in\Si \\ \theta\in\Theta}} \mu[\bvec]\left(\gamma_i[\bvec, \svec, \theta]\prob\left(\svec, \theta\vert\bvec\right) - \gamma_i[\bvec, \varphi(b_i, s_i), \theta]\prob\left(\svec, \theta\vert(b_i, \phi(b_i))\right) \right) \\
			&= \sum_{\substack{\bvec\in\B\\ \svec\in\Si \\ \theta\in\Theta}} x_i[\bvec, \svec, \theta]\prob\left(\svec, \theta\vert\bvec\right) - x_i[\bvec, \varphi(b_i, s_i), \theta]\prob\left(\svec, \theta\vert(b_i, \phi(b_i))\right)\\
			&= \sum_{s\in S_i} \left[f_i(\xvec\vert b, s, \prob)  - f_i(\xvec_i\vert b, s, b^\prime, \varphi(b, s))\right] \\
			&\leq \sum_{s\in S_i} \left[f_i(\xvec\vert b, s, \prob)  - z_i[b, b^\prime, s]\right].
		\end{align}
	\end{subequations}
	Thus we can conclude that $(\xvec, \yvec, \muvec, \zvec)$ is feasible for $LP(\prob, C, 0)$. Moreover: 
	\begin{subequations}
		\begin{align}
			\sum_{\substack{\bvec\in\B\\ \svec\in\Si\\ a\in\A}} &\left[y[\bvec, \svec, a]\left(\sum_{\theta\in\Theta}\prob\left(\svec, \theta\vert\bvec\right)u(a, \theta)\right)\right] -\sum\limits_{i\in\N}\sum_{\substack{\bvec\in\B\\ \svec\in\Si \\ \theta\in\Theta}} x_i[\bvec, \svec, \theta] \prob\left(\svec, \theta\vert\bvec\right)\\
			&= \sum_{\substack{\bvec\in\B\\ \svec\in\Si\\ a\in\A}} \left[\mu[\bvec]\pi[\bvec, \svec, a]\left(\sum_{\theta\in\Theta}\prob\left(\svec, \theta\vert\bvec\right)u(a, \theta)\right)\right] -\sum\limits_{i\in\N}\sum_{\substack{\bvec\in\B\\ \svec\in\Si \\ \theta\in\Theta}} \mu[\bvec]\gamma_i[\bvec, \svec, \theta] \prob\left(\svec, \theta\vert\bvec\right)\\
			&= U(\muvec, \gammavec, \bvec) \\
			&> U(\muvec^\star, \gammavec^\star, \bvec^\star) \\
			&= \sum_{\substack{\bvec\in\B\\ \svec\in\Si\\ a\in\A}} \left[\mu^\star[\bvec]\pi^\star[\bvec, \svec, a]\left(\sum_{\theta\in\Theta}\prob\left(\svec, \theta\vert\bvec\right)u(a, \theta)\right)\right] -\sum\limits_{i\in\N}\sum_{\substack{\bvec\in\B\\ \svec\in\Si \\ \theta\in\Theta}} \mu^\star[\bvec]\gamma^\star_i[\bvec, \svec, \theta] \prob\left(\svec, \theta\vert\bvec\right)\\
			&= \sum_{\substack{\bvec\in\B\\ \svec\in\Si\\ a\in\A}} \left[y^\star[\bvec, \svec, a]\left(\sum_{\theta\in\Theta}\prob\left(\svec, \theta\vert\bvec\right)u(a, \theta)\right)\right] -\sum\limits_{i\in\N}\sum_{\substack{\bvec\in\B\\ \svec\in\Si \\ \theta\in\Theta}} x^\star_i[\bvec, \svec, \theta] \prob\left(\svec, \theta\vert\bvec\right),
		\end{align}
	\end{subequations}
	which contradicts the assumption on the optimality of $(\xvec^\star, \yvec^\star, \muvec^\star, \zvec^\star)$. This concludes the proof.
\end{proof}
\section{Proofs omitted from Section \ref{sec:uncorr}}
Before proving the formal results of Section \ref{sec:uncorr}, we introduce a useful preliminary result.
\begin{lemma}\label{le:optder}
	For any uncorrelated mechanism $(\gammavec, \pivec)\in\U$, there exists an equivalent uncorrelated mechanism $(\gammavec^\prime, \pivec^\prime)\in\U$ which is truthful, \emph{i.e.,} such that for any $i\in\N$ and $s_i\in S_i$, $\varphi_i^\circ(s_i\vert\gammavec_i^\prime) = s_i$ and such that
	\[
		U^\circ(\gammavec, \pivec) = U^\circ(\gammavec^\prime, \pivec^\prime). 
	\]
\end{lemma}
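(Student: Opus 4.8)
The plan is to run a revelation-principle style argument: absorb each agent's optimal relabeling of her observed signal into the scoring rule, and mirror it in the action policy. Concretely, given $(\gammavec,\pivec)\in\U$, let $(\bvec^\circ,\varphi^\circ)=(\bvec^\circ(\gammavec),\varphi^\circ(\cdot\vert\gammavec))$ be the (principal-favorably tie-broken) best-response profile it induces, and define $\gamma_i'[s_i,\theta]:=\gamma_i[\varphi_i^\circ(s_i\vert\gammavec_i),\theta]$ for every $i\in\N$, $s_i\in S_i$, $\theta\in\Theta$, together with $\pi'[\svec,a]:=\pi[\varphi^\circ(\svec\vert\gammavec),a]$. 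Both stay in the admissible ranges (payments remain in $[0,M]$, $\pivec'$ remains a valid action policy), so $(\gammavec',\pivec')\in\U$.

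First I would analyze each agent's incentives under $\gammavec'$ in isolation, which by the discussion in Section~\ref{sec:uncorr} is a single-follower Stackelberg problem. The key identity is that for any action $b_i$ and any reporting policy $\varphi\colon S_i\to S_i$, the agent's expected payoff from playing $(b_i,\varphi)$ under $\gamma_i'$ equals her payoff from playing $(b_i,\varphi_i^\circ(\varphi(\cdot)\vert\gammavec_i))$ under $\gamma_i$; since $\varphi_i^\circ(\varphi(\cdot)\vert\gammavec_i)$ is again a map $S_i\to S_i$, the agent cannot obtain under $\gamma_i'$ more than her original optimal value. Conversely, playing $(b_i^\circ(\gammavec_i),\mathrm{id})$ under $\gamma_i'$ reproduces exactly the payoff of the optimal pair $(b_i^\circ(\gammavec_i),\varphi_i^\circ(\cdot\vert\gammavec_i))$ under $\gamma_i$. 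Hence $(b_i^\circ(\gammavec_i),\mathrm{id})$ is an optimal response under $\gamma_i'$ — in particular truthful reporting is optimal — and the same identity shows that any optimal response $(\hat b_i,\hat\varphi_i)$ under $\gamma_i'$ corresponds, via post-composition with $\varphi_i^\circ(\cdot\vert\gammavec_i)$, to an optimal response under $\gamma_i$.

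Then I would compare principal utilities. Substituting the definitions of $\gammavec'$ and $\pivec'$ into $U^\circ(\gammavec',\pivec')$ evaluated at the profile $(\bvec^\circ,\mathrm{id})$ reproduces termwise the expression defining $U^\circ(\gammavec,\pivec)$ (same state distribution $\prob(\cdot,\cdot\vert\bvec^\circ)$, same induced action distribution, same payments), so that profile gives the principal exactly $U^\circ(\gammavec,\pivec)$. Conversely, by the correspondence from the previous step, for \emph{any} agent-optimal profile under $\gammavec'$ the principal's payoff equals her payoff under $(\gammavec,\pivec)$ at some agent-optimal profile of the original game, which by the principal-favorable tie-breaking is at most $U^\circ(\gammavec,\pivec)$. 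Therefore $(\bvec^\circ,\mathrm{id})$ is simultaneously agent-optimal and principal-optimal among agent-optimal profiles under $\gammavec'$, hence a legitimate tie-broken best response; with this selection $(\gammavec',\pivec')$ is truthful and $U^\circ(\gammavec',\pivec')=U^\circ(\gammavec,\pivec)$.

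The main obstacle I anticipate is precisely this last point about tie-breaking: arguing that truthful reporting is not merely \emph{an} optimal response for each agent under $\gammavec'$ but can be taken to be \emph{the} one selected under the ``ties broken in favor of the principal'' convention. This is what forces the two-sided comparison of principal utilities, in which the relation $\gamma_i'[s_i,\theta]=\gamma_i[\varphi_i^\circ(s_i\vert\gammavec_i),\theta]$ must be exploited both to lower-bound $U^\circ(\gammavec',\pivec')$ at the truthful profile and to upper-bound it at an arbitrary agent-optimal profile; the remaining steps are routine rewriting of the payoff sums and range checks.
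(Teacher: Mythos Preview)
Your proposal is correct and follows essentially the same approach as the paper: define $\gamma_i'[s_i,\theta]=\gamma_i[\varphi_i^\circ(s_i\vert\gammavec_i),\theta]$ and $\pi'[\svec,a]=\pi[\varphi^\circ(\svec\vert\gammavec),a]$, use the post-composition identity to show that $(b_i^\circ(\gammavec_i),\mathrm{id})$ is agent-optimal under $\gammavec_i'$, and then verify the utility equality by direct substitution. Your additional two-sided comparison to justify that the truthful profile is the one selected under principal-favorable tie-breaking is a point the paper's proof leaves implicit, so your treatment is, if anything, slightly more careful.
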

\begin{proof}
	Let $(\gammavec^\prime, \pivec^\prime)$ be such that
	\[
		\gamma_i^\prime[s_i, \theta] = \gamma_i[\varphi_i^\circ(s_i\vert\gammavec_i)]\quad\forall i\in\N,\,\forall s_i\in S_i,\,\forall\theta\in\Theta,
	\] and 
	\[
		\pi^\prime[\svec, a] = \pi[\varphi^\circ(\svec\vert\gammavec), a]\quad\forall\svec\in\Si,\,\forall a\in\A.
	\]
	First, we show that for any agent $i\in\N$, $(b_i^\circ(\gammavec_i), \varphi_i^H)$ is a best response to $\gammavec^\prime$, where $\varphi_i^H$ is the honest signal reporting function such that $\varphi_i^H(s_i)=s_i$ for all $s_i\in S_i$. Fix $i\in\N$. Assume, by contradiction, that there exists $b_i^\prime\in B_i$ and $\varphi_i: S_i\to S_i$ such that 
	\[
		\sum_{s_i\in S_i}\sum_{\theta\in\Theta}\prob^{(i)}\left(s_i, \theta\vert b_i^\prime\right)\gamma_i^\prime[\varphi_i(s_i), \theta] - c_i(b_i^\prime)  > \sum_{s_i\in S_i}\sum_{\theta\in\Theta}\prob^{(i)}\left(s_i, \theta\vert b_i^\circ(\gammavec_i)\right)\gamma_i^\prime[s_i, \theta] - c_i(b_i^\circ(\gammavec_i)),
	\]
	and let $\varphi_i^\prime: S_i\to S_i$ such that $\varphi_i^\prime(s_i) = \varphi^\circ(\varphi(s_i)\vert\gammavec_i)$. Then, we have that 
	\begin{subequations}
		\begin{align*}
			\sum_{s_i\in S_i}\sum_{\theta\in\Theta}&\prob^{(i)}\left(s_i, \theta\vert b_i^\circ(\gammavec_i)\right)\gamma_i[\varphi_i^\circ(s_i\vert\gammavec_i), \theta] - c_i(b_i^\circ(\gammavec_i)) \\
			 &= \sum_{s_i\in S_i}\sum_{\theta\in\Theta}\prob^{(i)}\left(s_i, \theta\vert b_i^\circ(\gammavec_i)\right)\gamma_i^\prime[s_i, \theta] - c_i(b_i^\circ(\gammavec_i))\\
			&< \sum_{s_i\in S_i}\sum_{\theta\in\Theta}\prob^{(i)}\left(s_i, \theta\vert b_i^\prime\right)\gamma_i^\prime[\varphi_i(s_i), \theta] - c_i(b_i^\prime)\\
			&= \sum_{s_i\in S_i}\sum_{\theta\in\Theta}\prob^{(i)}\left(s_i, \theta\vert b_i^\prime\right)\gamma_i[\varphi_i^\prime(s_i), \theta] - c_i(b_i^\prime),
		\end{align*}
	\end{subequations}
	which contradicts the assumption that $b_i^\circ(\gammavec_i)$, $\varphi_i^\circ(\cdot\vert\gammavec_i)$ is a best response of agent $i$ to scoring rule $\gammavec_i$.
	To conclude the proof, we show that $U^\circ(\gammavec, \pivec) = U^\circ(\gammavec^\prime, \pivec^\prime)$. Notice that
	\begin{subequations}
		\begin{align}
			U^\circ(\gammavec, \pivec) &= \sum_{\svec\in\Si}\sum_{\theta\in\Theta}\prob\left(\svec, \theta\vert \bvec^\circ(\gammavec)\right)\left[\left(\sum_{a\in\A} \pi[\varphi^\circ(\svec\vert\gammavec), a]u(a,\theta)\right) - \sum_{i\in\N}\gamma_i[\varphi_i^\circ(s_i\vert\gammavec_i), \theta]\right] \\
			&= \sum_{\svec\in\Si}\sum_{\theta\in\Theta}\prob\left(\svec, \theta\vert \bvec^\circ(\gammavec)\right)\left[\left(\sum_{a\in\A} \pi^\prime[\svec, a]u(a,\theta)\right) - \sum_{i\in\N}\gamma_i^\prime[s_i, \theta]\right] \\
			&= U^\circ(\gammavec^\prime, \pivec^\prime).
		\end{align}
	\end{subequations}
	This concludes the proof.
\end{proof}

Now we are ready to prove Theorem \ref{th:thoptcor}.
\thoptcor*
\begin{proof}
	Consider a game in which $\mc N = \left\{1, 2\right\}$, $B_i=\left\{\bone, \btwo\right\}$ and $S_i=\left\{\sone, \stwo\right\}$, $\forall i\in\mc N$. The action costs are specified such that $c_1(\bone) = c_2( \bone) = c_2(\btwo) = 0$, $c_1(\btwo) = K\leq 1/24$. The set of states of nature is $\Theta = \left\{\theta_1, \theta_2\right\}$.
	The joint probability distributions are defined as in the following tables: 
	\begin{table}[H]
		\parbox{.5\linewidth}{
			\centering
			\begin{tabular}{|c| c c c c|}				
				$\prob\left(\theta, \svec\vert  \bone\bone\right)$ & $ \sone\sone$ & $ \sone\stwo$ & $ \stwo\sone$ & $ \stwo\stwo$ \\
				\hline
				$\theta_1$ & $1/8$ & $1/8$ & $1/8$ & $1/8$ \\
				\hline
				$\theta_2$ & $1/8$ & $1/8$ & $1/8$ & $1/8$ \\
				\hline
			\end{tabular}
		} \quad
		\parbox{.5\linewidth}{
			\centering
			\begin{tabular}{|c| c c c c|}				
				$\prob\left(\theta, \svec\vert  \bone\btwo\right)$ &$ \sone\sone$ & $ \sone\stwo$ & $ \stwo\sone$ & $ \stwo\stwo$ \\
				\hline
				$\theta_1$ & $1/8$ & $1/8$ & $1/8$ & $1/8$ \\
				\hline
				$\theta_2$ & $1/8$ & $1/8$ & $1/8$ & $1/8$ \\
				\hline
			\end{tabular}
		}
		
		\vspace{1cm}
		
		\parbox{.5\linewidth}{
			\centering
			\begin{tabular}{|c| c c c c|}				
				$\prob\left(\theta, \svec\vert  \btwo\bone\right)$ &$ \sone\sone$ & $ \sone\stwo$ & $ \stwo\sone$ & $ \stwo\stwo$ \\
				\hline
				$\theta_1$ & $1/6$ & $1/6$ & $1/12$ & $1/12$ \\
				\hline
				$\theta_2$ & $1/12$ & $1/12$ & $1/6$ & $1/6$ \\
				\hline
			\end{tabular}
		} \quad
		\parbox{.5\linewidth}{
			\centering
			\begin{tabular}{|c| c c c c|}				
				$\prob\left(\theta, \svec\vert  \btwo\btwo\right)$ & $ \sone\sone$ & $ \sone\stwo$ & $ \stwo\sone$ & $ \stwo\stwo$ \\
				\hline
				$\theta_1$ & $1/3$ & $0$ & $0$ & $1/6$ \\
				\hline
				$\theta_2$ & $1/6$ & $0$ & $0$ & $1/3$ \\
				\hline
			\end{tabular}
		}
	\end{table}
	\paragraph{Optimal Uncorrelated Mechanism.}
	As a first step, we characterize the set of optimal uncorrelated mechanisms for this game. Since the cost of both actions for agent $2$ is $0$, we can trivially set $\gamma_2[s, \theta] = 0$ for each $s\in S_2$ and $\theta\in\Theta$ to minimize payments without affecting the incentive compatibility of the mechanism. As a direct implication of Lemma \ref{le:optder}, there exists an optimal uncorrelated mechanism which incentivizes truthful signal reporting, thus, in order to find an optimal uncorrelated mechanism, we can enumerate among the optimal uncorrelated mechanisms incentivizing each action profile while incentivizing each agent to report truthfully her information. Furthermore, it is easy to see that, when players report their information truthfully, signal profiles $(\bone \bone)$ and $(\bone \btwo)$ induce the same posteriors over the states of nature. The same holds for signal profile $(\btwo\bone)$ and $(\btwo\btwo)$. Thus, we have that: 
	\begin{align}
		\max_{\pivec} \sum_{\substack{a\in\A \\\svec\in\Si \\ \theta\in\Theta}}\pi[\svec, a]\prob\left(\svec, \theta\vert\bone\bone\right)u(a,\theta) = \max_{\pivec} \sum_{\substack{a\in\A \\\svec\in\Si \\ \theta\in\Theta}}\pi[\svec, a]\prob\left(\svec, \theta\vert\bone\btwo\right)u(a,\theta) = 1/2 \label{eq:max1}\\
		\max_{\pivec} \sum_{\substack{a\in\A \\\svec\in\Si \\ \theta\in\Theta}}\pi[\svec, a]\prob\left(\svec, \theta\vert\btwo\bone\right)u(a,\theta) = \max_{\pivec} \sum_{\substack{a\in\A \\\svec\in\Si \\ \theta\in\Theta}}\pi[\svec, a]\prob\left(\svec, \theta\vert\btwo\btwo\right)u(a,\theta) = 2/3 \label{eq:max2}.
	\end{align}
	Moreover, fixing the action profile $\bvec\in\B$ that we want to incentivize, for any scoring rule $\gammavec_1\in [0, M]^{B_i\times S_i\times\Theta}$, consider the incentive compatibility constraints (recall that we can trivially ignore the incentive compatibility constraints of agent $2$ by setting $\gammavec_2=\boldsymbol{0}$ as discussed above):
	\[
	\sum_{s\in S_1} \sum_{\theta\in\Theta} \gamma_1[s, \theta] \prob^{(1)}\left(s, \theta\vert b_1\right) - \gamma_1[\varphi(b_1, s), \theta] \prob^{(1)}\left(s, \theta\vert \phi(b_1)\right) \geq c_1(b_1) - c_1(\phi(b_1)) \quad\forall \phi, \varphi\in\Phi_1,
	\]
	while the expected utility of the principal when the agents behave honestly is
	\[
		U^\circ(\bvec) = \max_{\pivec}\left\{\sum_{\svec\in\Si}\sum_{\theta\in\Theta} \sum_{a\in\A} \pi[\svec, a]\prob\left(\svec, \theta\vert\bvec\right) u(a,\theta)\right\} - \sum_{s\in S_1}\sum_{\theta\in\Theta}\prob^{(1)}\left(s,\theta\vert b_1\right)\gamma_1[s, \theta].
	\]
	Noticing that, for any two action profiles $\bvec, \bvec^\prime\in \B$, if $b_1=b_1^\prime$ then the incentive compatibility does not change and that, by equations \eqref{eq:max1} and \eqref{eq:max2}, $U(\bone\bone) = U(\bone\btwo)$ and $U(\btwo\bone) = U(\btwo\bone)$, it follows that the problem can be reduced to a single-agent problem in which we have to decide whether to incentivize action $\bone$ or $\btwo$ to agent $1$. Hence, let us fix, without loss of generality, $\btwo$ as the action played by $2$. 
	
	Since action $\bone$ has a cost $c_1(\bone)=0$ for agent $1$, it can be trivially incentivized by setting $\gammavec_1 = \boldsymbol{0}$, guaranteeing a utility for action profile $(\bone, \btwo)$ of $U(\bone\btwo) = 1/2$. 
	
	Consider, instead, action profile $\btwo\btwo$. Given a scoring rule $\gammavec_1$, the expected payment received by agent $1$ is:
	\begin{align*}
		F_1(\btwo\btwo, \gammavec) &= \sum_{s\in S_1} \sum_{\theta\in\Theta} \gamma_1[s, \theta]\prob^{(1)}\left(s, \theta\vert\btwo\right) \\
		&= \frac{1}{3}\left[\gamma_1[\sone, \theta_1] + \gamma_1[\stwo, \theta_2] \right] + \frac{1}{6} \left[\gamma_1[\stwo, \theta_1] + \gamma_1[\sone, \theta_2] \right].
	\end{align*}
	Similarly, consider the deviation $(\phi, \varphi)\in\Phi_1$ such that $\phi(\btwo) = \bone$ and $\varphi(\btwo, s) = s$ for any $s\in S_1$. Then:
	\begin{align*}
		F_1^{\phi, \varphi}(\btwo\btwo, \gammavec) &= \sum_{s\in S_1} \sum_{\theta\in\Theta} \gamma_1[s, \theta]\prob^{(1)}\left(s, \theta\vert\bone\right) \\
		&= \frac{1}{4} \left[\gamma_1[\sone, \theta_1] + \gamma_1[\stwo, \theta_2] + \gamma_1[\stwo, \theta_1] + \gamma_1[\sone, \theta_2]\right].
	\end{align*}
	The associated incentive compatibility constraint is, therefore:
	\[
	\frac{1}{12}\left[\gamma_1[\sone, \theta_1] + \gamma_1[\stwo, \theta_2]\right] - \frac{1}{12}\left[\gamma_1[\sone, \theta_2] + \gamma_1[\stwo, \theta_1]\right] \geq K.
	\] 
	Repeating the same procedure for all possible deviation functions, we obtain the following incentive compatibility constraints: 
	\begin{align*}
		\frac{1}{12}\left[\gamma_1[\sone, \theta_1] + \gamma_1[\stwo, \theta_2]\right] - \frac{1}{12}\left[\gamma_1[\sone, \theta_2] + \gamma_1[\stwo, \theta_1]\right] &\geq K \\
		\frac{1}{3}\gamma_1[\stwo, \theta_2] + \frac{1}{6} \gamma_1[\stwo, \theta_1] - \frac{1}{6}\gamma_1[\sone, \theta_1] - \frac{1}{3}\gamma_1[\sone, \theta_2] &\geq K \\
		\frac{1}{3}\gamma_1[\sone, \theta_2] + \frac{1}{6} \gamma_1[\sone, \theta_1] - \frac{1}{6}\gamma_1[\stwo, \theta_1] - \frac{1}{3}\gamma_1[\stwo, \theta_2] &\geq K \\
		\frac{1}{6}\left[\gamma_1[\sone, \theta_1] + \gamma_1[\stwo, \theta_2] \right] - \frac{1}{6} \left[\gamma_1[\stwo, \theta_1] + \gamma_1[\sone, \theta_2] \right] &\geq 0.
	\end{align*}
	Thus, it is easy to check that in order to minimize the expected payment, while satisfying the above constraints it is enough to select a scoring rule such that
	\begin{align*}
		&\gamma_1[\sone, \theta_1] = \gamma_1[\stwo, \theta_2] = 6K \\
		& \gamma_1[\sone, \theta_2] = 0 \\
		& \gamma_1[\stwo, \theta_1] = 0.
	\end{align*}
	Since $K\leq 1/24$ by assumption, we have that the optimal utility achievable by using an uncorrelated mechanism is $2/3 - 4K$.

	\paragraph{Optimal Correlated Mechanism.}
	Consider the correlated mechanism $(\muvec, \gammavec, \pivec)$ defined such that $\mu[\btwo\btwo] = \varepsilon$ and $\mu[\btwo\bone]=1-\varepsilon$ for some $\varepsilon=12K/5=1/10$. Similarly to before, since all actions for agent $2$ have cost $0$, we can be incentive compatible by setting $\gammavec_2 = \boldsymbol{0}$. The scoring rule for agent $1$ is defined as:
	\begin{align*}
		\gamma_1[\bvec, \svec, \theta] &= 0 \quad\quad \forall \bvec\in\left\{\bone\bone, \bone\btwo,\btwo\bone\right\},\,\forall\svec\in \Si,\,\forall\theta\in\Theta \\
		\gamma_1[\btwo\btwo, \svec, \theta_1] &= 0\quad\quad\forall\svec\in\Si\setminus\{\sone\sone\} \\
		\gamma_1[\btwo\btwo, \svec, \theta_2] &= 0\quad\quad\forall\svec\in\Si\setminus\{\stwo\stwo\} \\
		\gamma_1[\btwo\btwo, \sone\sone, \theta_1] &= \gamma_1[\btwo\btwo, \stwo\stwo, \theta_2] = 1.
	\end{align*} 
	Let $\pivec$ be such that $\pivec \in arg\max_{\pivec^\prime}\left\{U(\muvec, \gammavec, \pivec^\prime)\right\}$ Simple calculations show that ${U(\muvec, \gammavec, \pivec) = 2/3 - 2\varepsilon /3}$. It is possible to show that the above mechanism is incentive compatible. In particular, the expected payment received by agent $1$ is: 
	\[
	F_1(\muvec, \gammavec) = \frac{2}{3}\varepsilon.
	\]
	Consider any two deviation functions $(\phi, \varphi)\in\Phi_i$ such that $\phi(\btwo) = \bone$. The expected payment received by agent $1$ when she deviates according to $(\phi, \varphi)$ is:
	\[
	F_i^{\phi, \varphi}(\muvec, \gammavec) = \varepsilon/4\quad\forall \varphi.
	\]
	Similarly, for any two deviation functions $(\phi, \varphi)\in\Phi_i$ such that $\phi(\btwo)=\btwo$, the expected payment is
	\[
	F_1^{\phi, \varphi}(\muvec, \gammavec) = \begin{cases}
		\varepsilon/3 &\text{if } 
		\varphi(\btwo, \sone)=\sone \wedge \varphi(\btwo, \stwo)=\sone \\
		2\varepsilon/3 & \text{if } \varphi(\btwo, \sone)=\sone \wedge \varphi(\btwo, \stwo)=\stwo \\
		0 &\text{if } \varphi(\btwo, \sone)=\stwo \wedge \varphi(\btwo, \stwo)=\sone \\
		\varepsilon/3 &\text{otherwise.}
	\end{cases}
	\]
	Thus, the incentive compatibility constraints that are yielded are: 
	\begin{align*}
		\frac{5\varepsilon}{12} &\geq K \\
		\frac{\varepsilon}{3} &\geq 0,
	\end{align*}
	which are trivially satisfied since, by assumption, $\varepsilon=12K/5>0$.
	Hence, the expected utility of the principal is 
	\[
	U(\muvec, \gammavec, \pivec) = \frac{2}{3} - \frac{2}{3}\varepsilon = \frac{2}{3} - \frac{2}{3}\frac{12}{5}{K} > \frac{2}{3} - 4K.
	\]
	Any optimal correlated mechanism $(\muvec^\star, \gammavec^\star, \pivec^\star)$ achieves a utility $U(\muvec^\star, \gammavec^\star, \pivec^\star)\geq U(\muvec, \gammavec, \pivec) > 2/3 - 4K$, hence we can conclude that in this game there does not exist an uncorrelated mechanism which is optimal for optimization problem \eqref{eq:genopt}. This concludes the proof
	
\end{proof}

\thoptuncor*
\begin{proof}
	Let $(\muvec^\star, \gammavec^\star, \pivec^\star)\in\C$ be an optimal correlated mechanism. For each $\bvec\in\B$, let $G(\bvec)$ be such that
	\[
		G(\bvec) = \sum_{\svec\in\Si}\sum_{\theta\in\Theta}\prob\left(\svec,\theta\vert\bvec\right)\left[\sum_{a\in\A}\pi^\star[\bvec, \svec, a]u(a,\theta) - \sum_{i\in\N}\gamma_i^\star[\bvec, \svec, \theta]\right].
	\]
	It is easy to see that it holds $U(\muvec^\star, \gammavec^\star, \pivec^\star) = \sum_{\bvec\in\B}\mu^\star[\bvec] G(\bvec)$. Let $\bar\bvec\in arg\max_{\bvec\in\B: \mu^\star[\bvec]>0}\left\{G(\bvec)\right\}$ be the action profile yielding optimal utility to the principal and define the uncorrelated scoring rule $\gammavec = (\gammavec_1,...,\gammavec_n)$ be such that
	\[
		\gamma_i[s_i, \theta] = \frac{1}{\sum_{\bvec_{-i}\in \B_{-i}}\mu^\star[(\bar b_i, \bvec_{-i})]}\sum_{\substack{\bvec_{-i}\in\B_{-i}\\ \svec_{-i}\in\Si_{-i}}}\hspace{-0.3cm} \mu^\star[(\bar b_i, \bvec_{-i})]\prod_{j\in\N\setminus \{i\}}\hspace{-0.3cm}\psi_j(s_j\vert b_j, \theta) \gamma_i^\star[(\bar b_i, \bvec_{-i}), (s_i, \svec_{-i}), \theta].
	\]
	Furthermore, let $\pivec$ be such that $\pivec[\svec, a] = \pivec^\star[\bar\bvec, \svec, a]$. It is possible to show that for each $i\in\N$, $(\hat b_i, \varphi_i^H)$ is a best response for agent $i$ to uncorrelated mechanism $(\gammavec, \pivec)$, where $\varphi_i^H: S_i\to S_i$ is the honest signal reporting policy, \emph{i.e.,} such that $\varphi_i^H(s_i) = s_i$ for all $s_i\in S_i$. In particular, notice that the expected payment received by agent $i$ when the principal commits to mechanism $(\muvec^\star, \gammavec^\star, \pivec^\star)$ can be written as
	\[
		F_i(\muvec^\star, \gammavec^\star) = \sum_{\substack{b_i\in B_i\\s_i\in S_i\\\theta\in\Theta}}\prob^{(i)}\left(s_i, \theta\vert b_i\right)\left(\sum_{\substack{\bvec_{-i}\in\B_{-i}\\ \svec_{-i}\in\Si_{-i}}}\hspace{-0.3cm} \mu^\star[(b_i, \bvec_{-i})]\prod_{j\in\N\setminus \{i\}}\hspace{-0.3cm}\psi_j(s_j\vert b_j, \theta) \gamma_i^\star[(b_i, \bvec_{-i}), (s_i, \svec_{-i}), \theta]\right)
	\]
	and similarly, for each $(\phi,\varphi)\in\Phi_i$,
	\[
		F^{\phi,\varphi}_i(\muvec^\star, \gammavec^\star) = \hspace{-0.25cm}\sum_{\substack{b_i\in B_i\\s_i\in S_i\\\theta\in\Theta}}\prob^{(i)}\left(s_i, \theta\vert \phi(b_i)\right)\left(\sum_{\substack{\bvec_{-i}\in\B_{-i}\\ \svec_{-i}\in\Si_{-i}}}\hspace{-0.3cm} \mu^\star[(b_i, \bvec_{-i})]\hspace{-0.3cm}\prod_{j\in\N\setminus \{i\}}\hspace{-0.3cm}\psi_j(s_j\vert b_j, \theta) \gamma_i^\star[(b_i, \bvec_{-i}), (\varphi(b_i, s_i), \svec_{-i}), \theta]\right).
	\]
	For any $b_i^\prime\in B_i$ and $\varphi_i^\prime: S_i\to S_i$, let $(\phi, \varphi)\in\Phi_i$ be the deviation functions such that $\phi(b_i)=b_i$ for all $b_i\in B_i\setminus\{\bar b_i\}$, $\phi(\bar b_i)=b_i^\prime$, $\varphi(b_i, s_i) = s_i$ for all $b_i\in B_i\setminus\{\bar b_i\}$ and $s_i\in S_i$, and $\varphi(\bar b_i, s_i)=\varphi_i^\prime(s_i)$ for all $s_i\in S_i$. Then, by substituting the definition of $\gammavec_i$ into the expressions for $F_i$ and $F_i^{\phi,\varphi}$, we can write the following:
	\begin{subequations}
		\begin{align}
			&F_i(\muvec^\star, \gammavec^\star) - F_i^{\phi, \varphi}(\muvec^\star, \gammavec^\star)  \notag \\
			&=\left(\sum_{\bvec_{-i}\in \B_{-i}}\hspace{-0.3cm}\mu^\star[(\bar b_i, \bvec_{-i})]\right)\sum_{s_i\in S_i}\sum_{\theta\in\Theta}\left[\prob^{(i)}\left(s_i, \theta\vert \bar b_i\right) \gamma_i[s_i, \theta] - \prob^{(i)}\left(s_i, \theta\vert  b_i^\prime\right) \gamma_i[\varphi^\prime_i(s_i), \theta]\right] \\
			&\geq \sum_{\bvec_{-i}\in \B_{-i}}\hspace{-0.3cm}\mu^\star[(\bar b_i, \bvec_{-i})]C_i(\bar b_i, b_i^\prime)\label{eq:eqincentiveopt},
		\end{align}
	\end{subequations}
	where \ref{eq:eqincentiveopt} follows from incentive compatibility of $(\muvec^\star, \pivec^\star, \gammavec^\star)$. By dividing for $\sum_{\bvec_{-i}\in \B_{-i}}\mu^\star[(\bar b_i, \bvec_{-i})]$, we get
	\[
		\sum_{s_i\in S_i}\sum_{\theta\in\Theta}\left[\prob^{(i)}\left(s_i, \theta\vert \bar b_i\right) \gamma_i[s_i, \theta] - \prob^{(i)}\left(s_i, \theta\vert  b_i^\prime\right) \gamma_i[\varphi^\prime_i(s_i), \theta]\right] \geq C_i(b_i, b_i^\prime),
	\]
	which proves that $(\bar b_i, \varphi_i^H)$ is a best response of agent $i$ to mechanism $(\gammavec, \pivec)\in\U$. The result follows by noting that $U^\circ(\gammavec, \pivec) = G(\bar\bvec) \geq U(\muvec^\star, \gammavec^\star, \pivec^\star)$.
\end{proof}
\section{Proofs omitted from Section \ref{sec:estprob}}
\lemmaclean*
\begin{proof}
	By Hoeffding bound, we have that 
	\[
	\prob\left(	|\zeta_{\bvec}[\svec, \theta] - \prob\left(\svec, \theta\vert\bvec\right)| \leq \nu\right) \geq 1 - \frac{\delta}{K},
	\]
	$\forall \bvec\in\B,\,\forall \svec\in\Si,\,\forall\theta\in\Theta,\,\forall t\in \cup_{\bvec} \mc T_p(\bvec)$, and similarly,
	\[
	\prob\left(|\xi^{(i)}_{b_i, s_i}[\theta] - \prob^{(i)}\left(\theta\vert b_i, s_i\right)| \leq \varrho\right) \geq 1 - \frac{\delta}{K},
	\]
	$\forall i\in\N,\,\forall b_i\in B_i,\,\forall s_i\in S_i\,\forall\theta\in\Theta,\,\forall t\in \cup_{\bvec} \mc T_p(\bvec)$.
	
	Moreover, let $\bvec$ be such that $|\mc T_p(\bvec)|\geq\kappa$. Notice that for any $i\in\N$ and $s_i\in S_i$,
	\[
	|\mc T_p^{(i)}(b_i, s_i)| = \sum_{\substack{\bvec^\prime\in\B :\\b_i^\prime=b_i}}\sum_{t\in \mc T_p(\bvec^\prime)}\mathbbm{1}[\tilde s_i^t = s_i].
	\] 
	Therefore, $|\mc T_p^{(i)}(b_i, s_i)|$ can be seen as the sum of a sequence of Bernoulli random variables with parameter $\prob^{(i)}\left(s_i\vert b_i\right)$. Additionally,
	\begin{equation}\label{eq:auxchern}
		\mathbb{E}\left[|\mc T_p^{(i)}(b_i, s_i)|\right] = \sum_{\substack{\bvec^\prime\in\B :\\b_i^\prime=b_i}}\sum_{t\in \mc T_p(\bvec^\prime)} \prob^{(i)}\left(s_i\vert b_i\right) \geq \iota\sum_{\substack{\bvec^\prime\in\B :\\b_i^\prime=b_i}}|\mc T_p(\bvec^\prime)| \geq \iota |\mc T_p(\bvec)|.
	\end{equation}
	By applying Chernoff bound together with Equation\eqref{eq:auxchern}, we get that $\forall \bvec\in\B$, $\forall i\in\N$, $\forall s_i\in S_i$, $\forall t\in \cup_{\bvec} \mc T_p(\bvec)$,
	\begin{align*}
		\prob\left(|\mc T_p^{(i)}(b_i, s_i)| \geq \frac{1}{2}\iota |\mc T_p(\bvec)| \right) &\geq 1 - \text{exp}\left(-\frac{\iota^2 |\mc T_p(\bvec)|}{2 }\right) \\
		&\geq 1 - \text{exp}\left(-\frac{\iota^2}{2}\kappa\right) \\
		&\geq 1 - \text{exp}\left(-\ln(2K/\delta) \frac{289 m^2}{4\ell^2}\right) \\
		&\geq 1 - \frac{\delta}{K}
	\end{align*}
	The Lemma follows by applying a union bound with $K=6|\B| T|\Si|nm$.
	
\end{proof}

\lemmasamples*
\begin{proof}
	The prove this Lemma, we show that for any $\bvec\in\B$, if $|\mc T_p(\bvec)|=\max\left\{N_i,  \kappa \right\}$, then the condition $|\mc T_p(\bvec)| < N_1\,\vee\, \bar\varrho > \frac{\underline d}{13 m}$ is not satisfied, and hence the while loop in Algorithm \ref{alg:estprob} terminates. To prove this, we show that if $|\mc T(\bvec)| = \max\left\{N_i, \kappa \right\}$, then $\bar\varrho \leq \frac{\underline d}{13 m}$.
	
	Fix $\bvec\in\B$. Since the clean event $\E_p$ holds, and since, by assumption, $|\mc T_p(\bvec)| \geq \kappa$ we have that 
	\begin{equation}
		|\mc T_p^{(i)}\left(b_i, s_i\right)| \geq \frac{1}{2}\iota |\mc T_p(\bvec)| \geq \frac{1}{2}\iota\kappa\quad\forall i\in\N,\,\forall s_i\in S_i.
	\end{equation}
	Furthermore, let $(i, s_i, s_i^\prime)\in arg\min_{i\in\N, s_i, s_i^\prime\in S_i} ||\xivec^{(i)}_{b_i, s_i} - \xivec^{(i)}_{b_i, s_i^\prime}||_2^2$. By definition, we have that $\underline d = ||\xivec^{(i)}_{b_i, s_i} - \xivec^{(i)}_{b_i, s_i^\prime}||_2^2$. By definition of $\bar\varrho$ and by assumption that the clean event holds, for all $\theta\in\Theta$ we have that
	\[
	|\prob\left(\theta\vert b_i, s_i\right) - \prob\left(\theta\vert b_i, s_i^\prime\right)| \leq |\xi^{(i)}_{b_i, s_i}[\theta] - \xi^{(i)}_{b_i, s_i^\prime}[\theta]| + 2 \bar\varrho,
	\]
	and thus
	\[
	\sum_{\theta\in\Theta} 	|\prob\left(\theta\vert b_i, s_i\right) - \prob\left(\theta\vert b_i, s_i^\prime\right)|^2 \leq ||\xivec^{(i)}_{b_i, s_i} - \xivec^{(i)}_{b_i, s_i^\prime}||_2^2 + 4 m\bar\varrho^2 = \underline d + 4m\bar\varrho^2.
	\]
	Rearranging, we get the following chain of inequalities: 
	\begin{subequations}
		\begin{align}
			\underline d &\geq \sum_{\theta\in\Theta} 	|\prob\left(\theta\vert b_i, s_i\right) - \prob\left(\theta\vert b_i, s_i^\prime\right)|^2 - 4m\bar\varrho^2 \\
			&\geq \ell - 4m\bar\varrho + 13m \bar\varrho - 13m \bar\varrho \label{eq:c1rounds}\\
			&= \ell - 17 m \bar\varrho + 13m\bar\varrho \\
			&= \ell - 17m\sqrt{\frac{\ln(2K/\delta)}{2|\mc T_p^{(i)}(b_i, s_i)|}} + 13m\bar\varrho\label{eq:c2rounds} \\
			&\geq \ell - 17m\sqrt{\frac{\ln(2K/\delta)}{\iota \kappa}} + 13m\bar\varrho \\
			&= \ell - 17m\sqrt{\frac{2\iota\ell^2}{289 m^2}} + 13m\bar\varrho \\
			&\geq \ell\left(1-\sqrt{2\iota}\right) + 13m\bar\varrho \\
			&\geq  13m\bar\varrho,
		\end{align}
	\end{subequations}
	where Equation \eqref{eq:c1rounds} follows from the definition of $\ell$ and from the fact that $\bar\varrho \leq 1$ whenever $|\mc T_p(\bvec)|\geq \kappa$ and Equation \eqref{eq:c2rounds} follows by definition of $\bar\varrho$. Rearranging, we get $\bar\varrho \leq \frac{\underline d}{13m}$, which concludes the proof.
\end{proof}
\section{Proofs omitted from Section \ref{sec:commit}}
\begin{lemma}\label{le:lemmafeasible}
	Let $(\xvec^\star, \yvec^\star, \muvec^\star, \zvec^\star)$ be an optimal solution to $\LP(\prob, C, 0)$. Then, if clean events $\E_p$ and $\E_d$ hold, there exists $\zvec^\prime$ such that $(\xvec^\star, \yvec^\star, \muvec^\star, \zvec^\prime)$ is a feasible solution for $\LP(\zetavec, \Lambda, 2 M |\Si|m \nu + \chi)$.
\end{lemma}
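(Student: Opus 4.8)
The plan is to keep $(\xvec^\star,\yvec^\star,\muvec^\star)$ unchanged and only re-define the auxiliary variables $\zvec$, since $\LP(\zetavec,\Lambda,2M|\Si|m\nu+\chi)$ differs from $\LP(\prob,C,0)$ only in (a) evaluating the payment functionals $f_i$ against the estimated distributions $\zetavec$ instead of $\prob$, (b) using the estimated cost differences $\Lambda$ instead of $C$, and (c) carrying the slack $\varepsilon=2M|\Si|m\nu+\chi$. Concretely, I would set
\[
z_i'[b_i,b_i',s_i]\;:=\;\max_{s_i'\in S_i} f_i(\xvec_i^\star\vert b_i,b_i',s_i,s_i',\zetavec)\qquad\forall i\in\N,\ \forall b_i,b_i'\in B_i,\ \forall s_i\in S_i,
\]
which makes constraint \eqref{eq:auxub} hold by construction, and keep $\muvec=\muvec^\star$, $\xvec=\xvec^\star$, $\yvec=\yvec^\star$. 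Nonnegativity of $\zvec'$ is immediate because $\xvec^\star\succeq0$ and each $\zetavec_{\bvec}$ is a distribution, so each $f_i(\xvec_i^\star\vert\cdots,\zetavec)\ge0$; constraints \eqref{eq:constrlp1}--\eqref{eq:constrlp2} hold verbatim since they involve neither $\prob$ nor $C$. Hence the only thing left to verify is constraint \eqref{eq:noprof}.

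For this, the key observation is that $f_i$ is linear in its distribution argument, so for any fixed $i,b_i,b_i',s_i,s_i'$, using clean event $\E_p$ (which gives $|\zeta_{\bvec}[\svec,\theta]-\prob(\svec,\theta\vert\bvec)|\le\nu$) together with constraint \eqref{eq:constrlp2} of $\LP(\prob,C,0)$ (which gives $x_i^\star[\bvec,\svec,\theta]\le M\mu^\star[\bvec]$),
\[
\bigl|f_i(\xvec_i^\star\vert b_i,b_i',s_i,s_i',\zetavec)-f_i(\xvec_i^\star\vert b_i,b_i',s_i,s_i',\prob)\bigr|\;\le\;\nu M\,|\Si_{-i}|\,m\sum_{\bvec_{-i}\in\B_{-i}}\mu^\star[(b_i,\bvec_{-i})].
\]
Applying this to the honest term $f_i(\xvec_i^\star\vert b_i,s_i,\cdot)=f_i(\xvec_i^\star\vert b_i,b_i,s_i,s_i,\cdot)$, and applying it inside the maximization defining $z_i'$ together with $\max_{s_i'}f_i(\xvec_i^\star\vert b_i,b_i',s_i,s_i',\prob)\le z_i^\star[b_i,b_i',s_i]$ (from \eqref{eq:auxub} for the original solution), then summing over $s_i\in S_i$ so the factor $|S_i|\,|\Si_{-i}|$ collapses to $|\Si|$, yields
\[
\sum_{s_i\in S_i}\!\Bigl(f_i(\xvec_i^\star\vert b_i,s_i,\zetavec)-z_i'[b_i,b_i',s_i]\Bigr)\;\ge\;\sum_{s_i\in S_i}\!\Bigl(f_i(\xvec_i^\star\vert b_i,s_i,\prob)-z_i^\star[b_i,b_i',s_i]\Bigr)\;-\;2\nu M|\Si|m\sum_{\bvec_{-i}\in\B_{-i}}\mu^\star[(b_i,\bvec_{-i})].
\]

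To finish, I would invoke feasibility of $(\xvec^\star,\yvec^\star,\muvec^\star,\zvec^\star)$ for constraint \eqref{eq:noprof} of $\LP(\prob,C,0)$ with $\varepsilon=0$, which lower-bounds the first sum on the right by $\sum_{\bvec_{-i}\in\B_{-i}}\mu^\star[(b_i,\bvec_{-i})]C_i(b_i,b_i')$, and then the cost clean event $\E_c$, which gives $C_i(b_i,b_i')\ge\Lambda_i(b_i,b_i')-\chi$; since $\muvec^\star\succeq0$ and $\sum_{\bvec_{-i}}\mu^\star[(b_i,\bvec_{-i})]\le1$ with $\nu,\chi\ge0$, the two probability-weighted error terms collapse into a single additive slack of $2M|\Si|m\nu+\chi$, which is exactly constraint \eqref{eq:noprof} for $\LP(\zetavec,\Lambda,2M|\Si|m\nu+\chi)$. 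The main obstacle is purely the bookkeeping in the middle step: correctly tracking the $|\Si|$, $m$, and $M$ factors accumulated as the estimation error $\nu$ propagates through the multilinear functionals $f_i$ and through the maximum defining $\zvec'$, and keeping the per-pair mass $\sum_{\bvec_{-i}}\mu^\star[(b_i,\bvec_{-i})]$ consistent on both sides so that it can be bounded by $1$ at the end.
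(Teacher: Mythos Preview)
Your proposal is correct and follows essentially the same approach as the paper: keep $(\xvec^\star,\yvec^\star,\muvec^\star)$ fixed, redefine only $\zvec$, and propagate the $\nu$-error through the linear functionals $f_i$ together with the $\chi$-error on the cost differences to absorb everything into the slack $\varepsilon$. The only cosmetic difference is in the choice of $\zvec'$: the paper sets $z_i'[b_i,b_i',s_i]=z_i^\star[b_i,b_i',s_i]+M|\Si_{-i}|m\nu$ (a uniform additive shift) and then verifies \eqref{eq:auxub}, whereas you set $z_i'$ to the pointwise maximum of $f_i(\cdot,\zetavec)$ so that \eqref{eq:auxub} holds by construction and then bound $z_i'\le z_i^\star+\text{(same error)}$ when checking \eqref{eq:noprof}; the resulting inequalities are identical.
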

\begin{proof}
	Let $\zvec^\prime$ be such that $z_i^\prime[b_i, b_i^\prime, s_i] = z_i^\star[b_i, b_i^\prime, s_i] + M|\Si_{-i}|m\nu$, for all $i\in\N$, $b_i, b_i^\prime\in B_i$ and $s_i\in S_i$. Trivially, $(\xvec^\star, \yvec^\star, \muvec^\star, \zvec^\prime)$ satisfy constraints \eqref{eq:constrlp1} and \eqref{eq:constrlp2} of $\LP(\zetavec, \Lambda, 2 M |\Si|m \nu + \chi)$. Consider now constraint \eqref{eq:auxub}. By feasibility of $(\xvec^\star, \yvec^\star, \muvec^\star, \zvec^\star)$, we have that
	\begin{subequations}
		\begin{align}
			z_i^\star[b_i, b_i^\prime, s_i] &\geq f_i(\xvec_i^\star\vert b_i, b_i^\prime, s_i, s_i^\prime, \prob) \\
			&= \sum_{\substack{\bvec_{-i}\in\B_i \\ \svec_{-i}\in \Si_{-i}\\\theta\in\Theta}} x_i^\star[(b_i, \bvec_{-i}), (s_iì\prime, \svec_{-i}), \theta]\prob\left((s_i, \svec_{-i}), \theta\vert (b_i^\prime, \bvec_{-i})\right) \\
			&\geq \sum_{\substack{\bvec_{-i}\in\B_i \\ \svec_{-i}\in \Si_{-i}\\\theta\in\Theta}} x_i^\star[(b_i, \bvec_{-i}), (s_i^\prime, \svec_{-i}), \theta]\left(\zeta_{(b_i^\prime, \bvec_{-i})}[(s_i, \svec_{-i}), \theta] - \nu\right)\\
			&\geq f_i(\xvec_i^\star\vert b_i, b_i^\prime, s_i, s_i^\prime, \zetavec) - M|\Si_{-i}|m\nu.
		\end{align}
	\end{subequations}
	Rearranging and substituting the definition of $\zvec^\prime$, we get $z_i^\prime[b_i, b_i^\prime, s_i]\geq f_i(\xvec_i^\star\vert b_i, b_i^\prime, s_i, s_i^\prime, \zetavec)$, thus proving that constraint \eqref{eq:auxub} is satisfied.
	
	Let us consider constraint \eqref{eq:noprof}. By feasibility of $(\xvec^\star, \yvec^\star, \muvec^\star, \zvec^\star)$, for any $i\in\N$ and $b_i,b_i^\prime\in B_i$, the following holds:
	\begin{subequations}
		\begin{align}
			\sum\limits_{\bvec_{-i}\in\B_{-i}}&\mu^\star[(b_i, \bvec_{-i})]C_i(b_i, b_i^\prime) \leq \sum\limits_{\substack{s\in S_i}} \left[f_i(\xvec_i^\star\vert b_i, s_i, \prob)- z^\star_i[b_i, b_i^\prime, s]\right] \\
			&= \sum_{\substack{\bvec_{-i}\in\B_i \\ \svec\in \Si\\\theta\in\Theta}}\left[x_i^\star[(b_i, \bvec_{-i}, \svec, \theta)]\prob\left(\svec, \theta\vert(b_i, \bvec_{-i})\right)\right] - \sum_{s_i\in S_i}z_i^\star[b_i, b_i^\prime, s_i] \\
			&\leq \sum_{\substack{\bvec_{-i}\in\B_i \\ \svec\in \Si\\\theta\in\Theta}}\left[x_i^\star[(b_i, \bvec_{-i}, \svec, \theta)]\left(\zeta_{(b_i, \bvec_{-i})}[\svec, \theta] + \nu\right)\right] - \sum_{s_i\in S_i}z_i^\star[b_i, b_i^\prime, s_i] \\
			&= M|\Si|m\nu + \sum_{s_i\in S_i}\left[f_i(\xvec_i^\star\vert b_i, s_i, \zetavec) - z_i^\prime[b_i, b_i^\prime, s_i] + M|\Si_{-i}|m\nu\right] \\
			&= 2M|\Si|m\nu + \sum_{s_i\in S_i}\left[f_i(\xvec_i^\star\vert b_i, s_i, \zetavec) - z_i^\prime[b_i, b_i^\prime, s_i]\right].
		\end{align}	
	\end{subequations}
	Furthermore, notice that
	\begin{subequations}
		\begin{align}
			\sum\limits_{\bvec_{-i}\in\B_{-i}}\mu^\star[(b_i, \bvec_{-i})]C_i(b_i, b_i^\prime)&\geq \sum\limits_{\bvec_{-i}\in\B_{-i}}\mu^\star[(b_i, \bvec_{-i})]\left(\Lambda_i(b_i, b_i^\prime) - \chi\right)\\
			&\geq \sum\limits_{\bvec_{-i}\in\B_{-i}}\left[\mu^\star[(b_i, \bvec_{-i})]\Lambda_i(b_i, b_i^\prime)\right] - \chi.
		\end{align}
	\end{subequations}
	Putting all together, we get
	\[
	\sum_{s_i\in S_i}\left[f_i(\xvec_i^\star\vert b_i, s_i, \zetavec) - z_i^\prime[b_i, b_i^\prime, s_i]\right] \geq \sum\limits_{\bvec_{-i}\in\B_{-i}}\left[\mu^\star[(b_i, \bvec_{-i})]\Lambda_i(b_i, b_i^\prime)\right] - \left(\chi + 2M|\Si|m\nu\right),
	\]
	which gives the desired result.
\end{proof}

\lemmadevopt*

\begin{proof}
	Let $(\tilde\xvec^t, \tilde\yvec^t, \tilde\muvec^t, \tilde\zvec^t)$ be the vectors from which mechanism $(\tilde\muvec^t, \tilde\gammavec^t, \tilde\pivec^t)$ is obtained, \emph{i.e.,} an optimal solution to $\LP(\zetavec, \Lambda, \varepsilon)$. Fix an agent $i\in\N$ and a deviation $(\phi, \varphi)\in\Phi_i$. Notice that: 
	\begin{subequations}
		\begin{align}
			F_i(\tilde\muvec^t, \tilde\gammavec^t) &= \sum_{\substack{\bvec\in\B \\\svec\in\Si \\ \theta\in\Theta}}\tilde\mu^t[\bvec]\tilde\gamma_i^t[\bvec, \svec, \theta]\prob\left(\svec, \theta\vert\bvec\right) \\
			&\geq \sum_{\substack{\bvec\in\B \\\svec\in\Si \\ \theta\in\Theta}}\left[\tilde\mu^t[\bvec]\tilde\gamma_i^t[\bvec, \svec, \theta]\zeta_{\bvec}[\svec, \theta]\right] - M|\Si|m\nu \\
			&= \sum_{\substack{\bvec\in\B \\\svec\in\Si \\ \theta\in\Theta}}\left[\tilde x_i^t[\bvec, \svec, \theta]\zeta_{\bvec}[\svec, \theta]\right] - M|\Si|m\nu\\
			&= \sum_{b_i\in B_i}\sum_{s_i\in S_i} f_i(\tilde \xvec_i^t\vert b_i, s_i, \zetavec) - M|\Si|m\nu.
		\end{align}
	\end{subequations}
	Similarly, 
	\begin{subequations}
		\begin{align}
			F_i^{\phi, \varphi}(\tilde\muvec^t, \tilde\gammavec^t) &= \sum_{\substack{\bvec\in\B \\\svec\in\Si \\ \theta\in\Theta}}\tilde\mu^t[\bvec]\tilde\gamma_i^t[\bvec, (\varphi(b_i, s_i), \svec_{-i}), \theta]\prob\left(\svec, \theta\vert(\phi(b_i), \bvec_{-i})\right)\\
			&\leq \sum_{\substack{\bvec\in\B \\\svec\in\Si \\ \theta\in\Theta}}\left[\tilde\mu^t[\bvec]\tilde\gamma_i^t[\bvec, (\varphi(b_i, s_i), \svec_{-i}), \theta]\zeta_{(\phi(b_i), \bvec_{-i})}[\svec, \theta]\right] + M|\Si|m\nu \\
			&= \sum_{\substack{\bvec\in\B \\\svec\in\Si \\ \theta\in\Theta}}\left[\tilde x_i^t[\bvec, (\varphi(b_i, s_i), \svec_{-i}), \theta]\zeta_{(\phi(b_i), \bvec_{-i})}[\svec, \theta]\right] + M|\Si|m\nu \\
			&= \sum_{b_i\in B_i} \sum_{s_i\in S_i} \left[f_i(\tilde\xvec_i^t\vert b_i, \phi(b_i), s_i, \varphi(b_i, s_i))\right]  + M|\Si|m\nu \\
			&\leq \sum_{b_i\in B_i}\sum_{s_i\in S_i} \left[\sum_{s_i\in S_i} \tilde z_i^t[b_i, \phi(b_i), s_i]\right] + M|\Si|m\nu.
		\end{align}
	\end{subequations}
	Thus, by feasibility of $(\tilde\xvec^t, \tilde\yvec^t, \tilde\muvec^t, \tilde\zvec^t)$, we have that: 
	\begin{subequations}
		\begin{align}
			F_i(\tilde\muvec^t, \tilde\gammavec^t) - F_i^{\phi, \varphi}(\tilde\muvec^t, \tilde\gammavec^t) &\geq \sum_{b_i\in B_i}\left[\sum_{s_i\in S_i} f_i(\tilde \xvec_i^t\vert b_i, s_i, \zetavec) - \tilde z_i^t[b_i, \phi(b_i), s_i]\right] -2M|\Si|m\nu \\
			&\geq \sum_{\bvec\in\B}\tilde\mu^t[\bvec]\Lambda_i(b_i, \phi(b_i)) - 2M|\Si|m\left(k+1\right)\nu - k\chi \\
			&\geq \sum_{\bvec\in\B}\tilde\mu^t[\bvec]C_i(b_i, \phi(b_i)) - 2M|\Si|m\left(k+1\right)\nu - (k+1)\chi \\
			&\geq  \sum_{\bvec\in\B}\tilde\mu^t[\bvec]C_i(b_i, \phi(b_i)) - 2M|\Si|m\left(k+1\right)(\nu +\chi).
		\end{align}
	\end{subequations}
	This concludes the proof.
\end{proof}

\begin{lemma}\label{le:lemmaauxpost}
	Assume clean events $\E_p$ and $\E_d$ hold. Then, at the end of the execution of \algoprob, it holds that
	\[
	\sum_{\theta\in\Theta}\left[\prob^{(i)}\left(\theta\vert b_i, s_i\right) - \prob^{(i)}\left(\theta\vert b_i, s_i^\prime\right)\right]^2 \geq 9 m \varrho\quad\forall i\in\N,\,b_i\in B_i,\,s_i\in S_i.
	\]
\end{lemma}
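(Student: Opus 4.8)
The plan is to derive the claim by combining the termination guarantee of \algoprob{} with the clean event $\E_p$, transferring a lower bound on the $\ell_2$-distance of the \emph{estimated} posteriors to the \emph{true} posteriors. Write $\xivec^{(i)}_{b_i,s_i}$ for the final estimators and $\varrho=\bar\varrho$ for its final value. The first (and crucial) step is to show that at the end of \algoprob{} one has, for every $i\in\N$, $b_i\in B_i$ and $s_i,s_i'\in S_i$,
\[
\|\xivec^{(i)}_{b_i,s_i}-\xivec^{(i)}_{b_i,s_i'}\|_2^2 \ge 13m\varrho .
\]
To obtain this, fix a pair $(i,b_i)$ and let $\bvec$ be the \emph{last} recommendation profile in $\B=\times_{j}B_j$ processed by the outer loop whose $i$-th component equals $b_i$ (such a profile exists). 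After $\bvec$'s inner \texttt{while} loop terminates, the estimator $\xivec^{(i)}_{b_i,\cdot}$ is never updated again, since the counters $\mc T_p^{(i)}(b_i,\cdot)$ grow only while processing profiles with $i$-th component $b_i$; hence its final value coincides with its value at that moment. When the loop terminated, its exit condition forced $\bar\varrho\le \underline d/(13m)\le \|\xivec^{(i)}_{b_i,s_i}-\xivec^{(i)}_{b_i,s_i'}\|_2^2/(13m)$, the last step because $(s_i,s_i')$ is among the pairs realizing the minimum $\underline d$. Finally, $\bar\varrho$ is nonincreasing along the whole run (all counters only grow), so the final $\varrho$ is at most its value at that moment, giving the displayed inequality (a variant of the computation in the proof of Lemma~\ref{le:lemmasamples}).

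Next, under $\E_p$ the final estimators satisfy $|\xi^{(i)}_{b_i,s_i}[\theta]-\prob^{(i)}(\theta\vert b_i,s_i)|\le\varrho$ for all $i,b_i,s_i,\theta$. Fix $i,b_i,s_i,s_i'$ and set, for each $\theta\in\Theta$, $a_\theta=\xi^{(i)}_{b_i,s_i}[\theta]-\xi^{(i)}_{b_i,s_i'}[\theta]$ and $b_\theta=\prob^{(i)}(\theta\vert b_i,s_i)-\prob^{(i)}(\theta\vert b_i,s_i')$, so $|a_\theta-b_\theta|\le 2\varrho$. Then $b_\theta^2=(a_\theta-(a_\theta-b_\theta))^2\ge a_\theta^2-2|a_\theta|\,|a_\theta-b_\theta|\ge a_\theta^2-4\varrho|a_\theta|$; summing over $\theta$ and using $\sum_\theta|a_\theta|=\|\xivec^{(i)}_{b_i,s_i}-\xivec^{(i)}_{b_i,s_i'}\|_1\le 2$ (difference of two elements of $\Delta(\Theta)$) gives
\[
\sum_{\theta\in\Theta}b_\theta^2 \ \ge\ \sum_{\theta\in\Theta}a_\theta^2-8\varrho \ =\ \|\xivec^{(i)}_{b_i,s_i}-\xivec^{(i)}_{b_i,s_i'}\|_2^2-8\varrho \ \ge\ 13m\varrho-8\varrho .
\]

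To finish, note $m\ge 2$: since $\ell>0$ there are at least two states (otherwise every posterior is the unique point mass and $\ell=0$), so $8\varrho\le 4m\varrho$ and the right-hand side is at least $9m\varrho$, which is exactly $\sum_{\theta}\bigl(\prob^{(i)}(\theta\vert b_i,s_i)-\prob^{(i)}(\theta\vert b_i,s_i')\bigr)^2\ge 9m\varrho$. I expect the only delicate point to be the first step: making precise that the inner-loop exit condition — which in the pseudocode compares $\bar\varrho$ against a minimum taken only over the signal pairs of the \emph{current} profile's actions — nonetheless certifies the bound for \emph{all} $(i,b_i)$, via the observation that per-pair estimators freeze once the relevant profiles are exhausted and that $\bar\varrho$ is monotone along the run. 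The remaining steps are elementary manipulations.
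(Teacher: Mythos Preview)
Your proposal is correct and follows the same two-step skeleton as the paper: (i) deduce $\|\xivec^{(i)}_{b_i,s_i}-\xivec^{(i)}_{b_i,s_i'}\|_2^2\ge 13m\varrho$ from the exit condition of \algoprob, and (ii) transfer this to the true posteriors via the clean event. Where you differ is in the execution of both steps. For (i), the paper simply invokes $\bar\varrho\le\underline d/13m$ without addressing why this certifies the bound for \emph{every} pair $(i,b_i)$ rather than only those appearing in the last processed profile; your freezing-plus-monotonicity argument fills precisely this gap. For (ii), the paper claims $\sum_\theta(|b_\theta|+2\varrho)^2\le \sum_\theta b_\theta^2+4m\varrho^2$ (``by triangle inequality'') and then uses $\varrho\le 1$ to reach $9m\varrho$; you instead expand $b_\theta^2\ge a_\theta^2-4\varrho|a_\theta|$, bound $\sum_\theta|a_\theta|\le 2$, and close with $m\ge 2$. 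Your version is the more careful one here: the paper's displayed inequality in step (ii) drops the nonnegative cross term $4\varrho\sum_\theta|b_\theta|$ and does not hold as written, whereas your bound is valid.
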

\begin{proof}
	
	Fix $i\in\N$, $b_i\in B_i$ and $s_i\in S_i$. Notice that 
	\[
	|\xi^{(i)}_{b_i, s_i}[\theta] - \xi^{(i)}_{b_i, s_i^\prime}[\theta]| \leq |\prob^{(i)}\left(\theta\vert b_i, s_i\right) - \prob^{(i)}\left(\theta\vert b_i, s_i^\prime\right)| + 2\varrho,
	\]	
	and thus
	\begin{subequations}
		\begin{align}
			||\xivec^{(i)}_{b_i, s_i} - \xivec^{(i)}_{b_i, s_i^\prime}||_2^2 &\leq \sum_{\theta\in\Theta}\left[|\prob^{(i)}\left(\theta\vert b_i, s_i\right) - \prob^{(i)}\left(\theta\vert b_i, s_i^\prime\right)| + 2\varrho\right]^2 \\
			&\leq 4m\varrho^2 + \sum_{\theta\in\Theta}\left[\prob^{(i)}\left(\theta\vert b_i, s_i\right) - \prob^{(i)}\left(\theta\vert b_i, s_i^\prime\right)\right]^2,
		\end{align}
	\end{subequations}
	where the second inequality follows from triangle inequality. Hence, rearranging, we get
	\[
	\sum_{\theta\in\Theta}\left[\prob^{(i)}\left(\theta\vert b_i, s_i\right) - \prob^{(i)}\left(\theta\vert b_i, s_i^\prime\right)\right]^2 \geq ||\xivec^{(i)}_{b_i, s_i} - \xivec^{(i)}_{b_i, s_i^\prime}||_2^2 - 4m\varrho^2 \geq 13m\varrho - 4m\varrho^2 \geq 9m\varrho,
	\]
	where the third inequality follows from $\varrho \leq 1$ since $\overline\varrho \leq \underline d / 13m$. This concludes the proof.
\end{proof}

\begin{lemma}\label{le:lemmastrictlyproper}
	For each $i\in\N$ and $b_i\in B_i$, let $\hat\gammavec_i^{b_i}$ be an uncorrelated scoring rule defined according to Equation \eqref{eq:gammastrictlyproper}. Then, assuming clean events $\E_p$ and $\E_d$ are verified, the following holds 
	\[ 
	\sum_{\theta\in\Theta}\prob^{(i)}\left(\theta\vert s_i, b_i\right)\left[\hat\gamma_i^{b_i}[s_i, \theta] - \hat\gamma_i^{b_i}[s_i^\prime, \theta]\right] \geq \frac{\ell}{18}\quad\forall i\in\N,\,\forall b_i\in B_i,\,\forall s_i, s_i^\prime\in S_i.
	\]
\end{lemma}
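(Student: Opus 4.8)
The plan is to recognize that, for a fixed agent $i$ and recommended action $b_i$, the scoring rule $\hat\gammavec_i^{b_i}$ of Equation~\eqref{eq:gammastrictlyproper} is a shifted quadratic (Brier) scoring rule calibrated on the estimated posteriors $\xivec^{(i)}_{b_i,\cdot}$, so the quantity to be bounded reduces to a difference of squared Euclidean errors. Fix $i\in\N$, $b_i\in B_i$ and $s_i\neq s_i'$ in $S_i$ (the case $s_i=s_i'$ being trivial for the strict-properness claim), and abbreviate $p\defeq\prob^{(i)}(\cdot\vert b_i,s_i)$, $p'\defeq\prob^{(i)}(\cdot\vert b_i,s_i')$, $q\defeq\xivec^{(i)}_{b_i,s_i}$, $q'\defeq\xivec^{(i)}_{b_i,s_i'}$, all viewed as vectors in $\Reals^m$. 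The first step is the algebraic identity
\[
\sum_{\theta\in\Theta}\prob^{(i)}\left(\theta\vert s_i,b_i\right)\left[\hat\gamma_i^{b_i}[s_i,\theta]-\hat\gamma_i^{b_i}[s_i',\theta]\right]=\tfrac12||q'-p||_2^2-\tfrac12||q-p||_2^2 ,
\]
obtained by plugging in~\eqref{eq:gammastrictlyproper} (the additive constant $H_i$ cancels in the difference) and completing the square in $\langle p,q-q'\rangle-\tfrac12||q||_2^2+\tfrac12||q'||_2^2$. Hence it suffices to lower bound the right-hand side by $\ell/18$.

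Next I would control the two error terms using the clean event $\E_p$ and Lemma~\ref{le:lemmaauxpost}. From $\E_p$, $|\xi^{(i)}_{b_i,s_i}[\theta]-p[\theta]|\le\varrho$ for all $\theta$, so $||q-p||_2^2\le m\varrho^2\le m\varrho$ (using $\varrho\le1$, which holds because the terminating condition of \algoprob\ forces $\varrho=\bar\varrho\le\underline d/(13m)\le1$), and likewise $||q'-p'||_2\le\sqrt m\,\varrho$. Lemma~\ref{le:lemmaauxpost} gives $||p-p'||_2^2\ge 9m\varrho$, equivalently $\sqrt{m\varrho}\le\tfrac13||p-p'||_2$; since $\varrho\le1$ implies $\sqrt m\,\varrho\le\sqrt{m\varrho}$, this yields both $||q-p||_2^2\le m\varrho\le\tfrac19||p-p'||_2^2$ and $||q'-p'||_2\le\tfrac13||p-p'||_2$. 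A triangle inequality then gives $||q'-p||_2\ge||p'-p||_2-||q'-p'||_2\ge\tfrac23||p'-p||_2$, hence $||q'-p||_2^2\ge\tfrac49||p'-p||_2^2$.

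Combining the two estimates,
\[
\tfrac12||q'-p||_2^2-\tfrac12||q-p||_2^2\ \ge\ \tfrac12\Big(\tfrac49-\tfrac19\Big)||p-p'||_2^2=\tfrac16||p-p'||_2^2\ \ge\ \tfrac{\ell}{6}\ \ge\ \tfrac{\ell}{18},
\]
where the penultimate inequality is the definition of $\ell$ as the minimum squared distance between the posteriors induced by two distinct signals. I expect the only delicate point to be arranging the chain of triangle inequalities with constants tight enough that the $\varrho$-dependent errors are absorbed into $||p-p'||_2^2$ — this is precisely where Lemma~\ref{le:lemmaauxpost} (and through it the clean events $\E_p$, $\E_d$, and $\varrho\le1$) is indispensable — while the opening identity and the $m\varrho^2\le m\varrho$ step are routine.
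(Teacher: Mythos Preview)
Your proof is correct and in fact obtains the stronger bound $\ell/6$. The route differs from the paper's: you use the exact Brier-score identity
\[
\sum_{\theta}p[\theta]\big(\hat\gamma_i^{b_i}[s_i,\theta]-\hat\gamma_i^{b_i}[s_i',\theta]\big)=\tfrac12\|q'-p\|_2^2-\tfrac12\|q-p\|_2^2
\]
and then control each term via the triangle inequality together with Lemma~\ref{le:lemmaauxpost}. The paper instead bounds the two expected payoffs separately using the coordinatewise approximation $p[\theta]\in[q[\theta]-\varrho,\,q[\theta]+\varrho]$, arriving at the weaker intermediate estimate $\tfrac12\|q-q'\|_2^2-2\varrho$, and only then relates $\|q-q'\|_2^2$ back to $\|p-p'\|_2^2$ to reach $\ell/2-4m\varrho\ge\ell/18$. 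Your decomposition is cleaner (the additive constant $H_i$ and the normalizations drop out automatically) and the triangle-inequality chain packages the error analysis more tightly, which is why you land on $\ell/6$; the paper's approach is more elementary in that it never invokes the full Brier identity, at the cost of a looser constant. Both arguments hinge on exactly the same ingredients: the clean event $\E_p$ for the $\varrho$-closeness of $p$ to $q$ and $p'$ to $q'$, and Lemma~\ref{le:lemmaauxpost} to ensure $\|p-p'\|_2^2\ge 9m\varrho$ so that the estimation error is dominated by the posterior separation.
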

\begin{proof}
	Fix $i\in\N$, $b_i\in B_i$ and $s_i, s_i^\prime\in S_i$. Then, we can state the following: 
	\begin{subequations}
		\begin{align}
			\sum_{\theta\in\Theta}\prob^{(i)}\left(\theta\vert s_i, b_i\right)\hat\gamma_i^{b_i}[s_i, \theta] &= \sum_{\theta\in\Theta}\prob^{(i)}\left(\theta\vert s_i, b_i\right)\left[\xi^{(i)}_{b_i, s_i}[\theta] + H_i - \frac{1}{2}||\xivec^{(i)}_{b_i, s_i}||_2^2\right] \\
			&= H_i - \frac{1}{2}||\xivec^{(i)}_{b_i, s_i}||_2^2 + \sum_{\theta\in\Theta}\prob^{(i)}\left(\theta\vert s_i, b_i\right)\xi^{(i)}_{b_i, s_i}[\theta] \\
			&\geq H_i - \frac{1}{2}||\xivec^{(i)}_{b_i, s_i}||_2^2 + \sum_{\theta\in\Theta}\left(\xi^{(i)}_{b_i, s_i}[\theta] - \varrho\right)\xi^{(i)}_{b_i, s_i}[\theta] \\
			&= H_i + \frac{1}{2}||\xivec^{(i)}_{b_i, s_i}||_2^2 - \varrho
		\end{align}
	\end{subequations}
	Furthermore, 
	\begin{subequations}
		\begin{align}
			\sum_{\theta\in\Theta}\prob^{(i)}\left(\theta\vert s_i, b_i\right)\hat\gamma_i^{b_i}[s_i^\prime, \theta] &= \sum_{\theta\in\Theta}\prob^{(i)}\left(\theta\vert s_i, b_i\right)\left[\xi^{(i)}_{b_i, s_i^\prime}[\theta] + H_i - \frac{1}{2}||\xivec^{(i)}_{b_i, s_i^\prime}||_2^2\right] \\
			&= H_i - \frac{1}{2}||\xivec^{(i)}_{b_i, s_i^\prime}||_2^2 + \sum_{\theta\in\Theta}\prob^{(i)}\left(\theta\vert s_i, b_i\right)\xi^{(i)}_{b_i, s_i^\prime}[\theta] \\
			&\leq  H_i - \frac{1}{2}||\xivec^{(i)}_{b_i, s_i^\prime}||_2^2 + \sum_{\theta\in\Theta}\left(\xi^{(i)}_{b_i, s_i}[\theta] + \varrho\right)\xi^{(i)}_{b_i, s_i^\prime}[\theta] \\
			&=  H_i - \frac{1}{2}||\xivec^{(i)}_{b_i, s_i^\prime}||_2^2 + \varrho + \sum_{\theta\in\Theta}\xi^{(i)}_{b_i, s_i}[\theta]\xi^{(i)}_{b_i, s_i^\prime}[\theta].
		\end{align}
	\end{subequations}
	Combining the two results, we get
	\begin{subequations}
		\begin{align}
			\sum_{\theta\in\Theta}&\prob^{(i)}\left(\theta\vert s_i, b_i\right)\left[\hat\gamma_i^{b_i}[s_i, \theta] - \hat\gamma_i^{b_i}[s_i^\prime, \theta]\right] \\
			&\geq  + \frac{1}{2}||\xivec^{(i)}_{b_i, s_i}||_2^2  + \frac{1}{2}||\xivec^{(i)}_{b_i, s_i^\prime}||_2^2 - \sum_{\theta\in\Theta}\xi^{(i)}_{b_i, s_i}[\theta]\xi^{(i)}_{b_i, s_i^\prime}[\theta] -2\varrho \\
			&= \frac{1}{2}\sum_{\theta\in\Theta}\left[\left(\xi^{(i)}_{b_i, s_i}[\theta]\right)^2 - 2\xi^{(i)}_{b_i, s_i}[\theta]\xi^{(i)}_{b_i, s_i^\prime}[\theta] + \left(\xi^{(i)}_{b_i, s_i^\prime}[\theta]\right)^2 \right] - 2\varrho \\
			&= \frac{1}{2}\sum_{\theta\in\Theta}\left[\xi^{(i)}_{b_i, s_i}[\theta] - \xi^{(i)}_{b_i, s_i^\prime}[\theta]\right]^2 - 2\varrho \\
			&\geq \frac{1}{2}\sum_{\theta\in\Theta}\left[\prob^{(i)}\left(\theta\vert b_i, s_i\right) - \prob^{(i)}\left(\theta\vert b_i, s_i^\prime\right)\right]^2 - 2m\varrho^2 - 2\varrho \label{eq:boundl}\\
			&\geq \frac{\ell}{2} - 4m\varrho \label{eq:boundl1} \\
			&\geq \frac{\ell}{2} - \frac{4}{9}\ell\label{eq:boundl2} \\
			&= \frac{\ell}{18}
		\end{align}
	\end{subequations}
	where Equation \eqref{eq:boundl} follows from the fact that, since clean event $\E_p$ holds, $||\xivec^{(i)}_{b_i, s_i} - \xivec^{(i)}_{b_i, s_i^\prime}||_2^2 \geq \sum_{\theta\in\Theta}\left[\prob^{(i)}\left(\theta\vert b_i, s_i\right) - \prob^{(i)}\left(\theta\vert b_i, s_i^\prime\right)\right]^2 - 4 m\varrho^2$, Equation \eqref{eq:boundl1} follows from the definition of $\ell$ and Equation \eqref{eq:boundl2} follows from Lemma \ref{le:lemmaauxpost}. This concludes the proof.
\end{proof}
\begin{lemma}\label{le:lemmastrictinc}
	Assume the clean events $\E_p$ and $\E_d$ hold. Let $\gammavec^\prime=(\gammavec_1^\prime,...,\gammavec_n^\prime)$ be such that
	\[
	\gammavec_i^\prime[\bvec,\svec,\theta] = \beta \gamma_i^{b_i}[s_i, \theta] + (1-\beta) \hat\gamma_i^{b_i}[s_i, \theta],
	\]
	where $\beta= (45 + \bar\ell)/(18\rho + 45 + \bar\ell)$. Then, the following holds 
	\[
	F_i(\tilde\muvec^t, \gammavec^\prime) - F_i^{\phi, \varphi}(\tilde\muvec^t, \gammavec^\prime) \geq \sum_{\bvec\in\B}\tilde\mu^t[\bvec]C_i(b_i, \phi(b_i)) + \frac{\rho\ell}{65}\quad\forall i\in\N,\,\forall(\phi, \varphi)\in \Phi_i.
	\]
\end{lemma}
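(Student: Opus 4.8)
The plan is to reduce the statement to a finite family of single-agent inequalities and then pit the two incentive properties of the uncorrelated building blocks $\gammavec_i^{b_i}$ (Eqs.~\eqref{eq:incstrict} and \eqref{eq:incprop} of Assumption~\ref{ass:inc}) and $\hat\gammavec_i^{b_i}$ (Eq.~\eqref{eq:gammastrictlyproper} and Lemma~\ref{le:lemmastrictlyproper}) against each other. Since $\gammavec_i^\prime$ is uncorrelated, $\gamma_i^\prime[\bvec,\svec,\theta]$ depends on $(\bvec,\svec,\theta)$ only through $(b_i,s_i,\theta)$, so the conditional independence $\prob^{(i)}(s_i\mid\bvec,\theta)=\prob^{(i)}(s_i\mid b_i,\theta)$ lets me marginalise $\svec_{-i}$ out of both $F_i(\tilde\muvec^t,\gammavec^\prime)$ and $F_i^{\phi,\varphi}(\tilde\muvec^t,\gammavec^\prime)$; together with $\sum_{\bvec}\tilde\mu^t[\bvec]=1$, the claim reduces to showing that $\gammavec_i^\prime$ strictly incentivises following the recommendation, namely (i) for all $b_i\ne b_i^\prime$ and every $\varphi_i\colon S_i\to S_i$,
\[
\sum_{s_i\in S_i}\sum_{\theta\in\Theta}\!\Big[\prob^{(i)}(s_i,\theta\mid b_i)\,\gamma_i^\prime[b_i,s_i,\theta]-\prob^{(i)}(s_i,\theta\mid b_i^\prime)\,\gamma_i^\prime[b_i,\varphi_i(s_i),\theta]\Big]\ \ge\ C_i(b_i,b_i^\prime)+\tfrac{\rho\ell}{65},
\]
and (ii) for every $b_i$ and distinct $s_i,s_i^\prime\in S_i$, $\sum_{\theta}\prob^{(i)}(\theta\mid s_i,b_i)\big[\gamma_i^\prime[b_i,s_i,\theta]-\gamma_i^\prime[b_i,s_i^\prime,\theta]\big]\ge\tfrac{\rho\ell}{65}$; aggregating (i) over recommended actions and (ii) over observed signals recovers the claimed inequality. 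In both (i) and (ii) I split $\gamma_i^\prime=\beta\,\gamma_i^{b_i}+(1-\beta)\,\hat\gamma_i^{b_i}$ by bilinearity and bound the two parts separately.

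For (i), Eq.~\eqref{eq:incstrict} --- which is phrased exactly for this joint ``change the action and misreport'' deviation --- gives that the $\gammavec_i^{b_i}$-part is at least $C_i(b_i,b_i^\prime)+\rho$, while the $\hat\gammavec_i^{b_i}$-part is at least $-\tfrac32$ because $\hat\gamma_i^{b_i}[s_i,\theta]\in[0,\tfrac32]$: from Eq.~\eqref{eq:gammastrictlyproper}, $\hat\gamma_i^{b_i}[s_i,\theta]\ge\xi^{(i)}_{b_i,s_i}[\theta]\ge0$ since $H_i\ge\tfrac12\|\xivec^{(i)}_{b_i,s_i}\|_2^2$, and $\hat\gamma_i^{b_i}[s_i,\theta]\le1+\tfrac12-\tfrac1{2m}\le\tfrac32$. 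Hence the left-hand side is at least $\beta\big(C_i(b_i,b_i^\prime)+\rho\big)-\tfrac32(1-\beta)$; subtracting $C_i(b_i,b_i^\prime)$ and using $|C_i(b_i,b_i^\prime)|\le1$ leaves $\beta\rho-\tfrac52(1-\beta)$, which on substituting $\beta=\tfrac{45+\bar\ell}{18\rho+45+\bar\ell}$ collapses to $\tfrac{\rho\bar\ell}{18\rho+45+\bar\ell}$. For (ii), Eq.~\eqref{eq:incprop} makes the $\gammavec_i^{b_i}$-part of the per-posterior gap nonnegative and Lemma~\ref{le:lemmastrictlyproper} makes the $\hat\gammavec_i^{b_i}$-part at least $\tfrac\ell{18}$, so the margin of $\gammavec_i^\prime$ is at least $(1-\beta)\tfrac\ell{18}=\tfrac{\rho\ell}{18\rho+45+\bar\ell}$. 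In both cases the surviving slack is $\tfrac{\rho\ell^\prime}{18\rho+45+\bar\ell}$ with $\ell^\prime\in\{\ell,\bar\ell\}$ and $\ell\le\bar\ell$, so it suffices to check $18\rho+45+\bar\ell\le65$: under the clean events $\bar\ell=\hat\ell+4m\varrho$ with $\hat\ell\le2$ (the maximal squared $\ell_2$-distance in $\Delta(\Theta)$) and $4m\varrho$ controlled by the termination rule $\varrho\le\hat\ell/13m$ of \algoprob, and $\rho$ may be taken $\le1$ without loss of generality (shrinking $\rho$ preserves Assumption~\ref{ass:inc}), so $18\rho+45+\bar\ell\le18+45+2=65$.

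The delicate point is the simultaneous calibration of $\beta$: case (i) needs $\beta$ close to $1$ so that the strict margin $\rho$ from $\gammavec_i^{b_i}$ overcomes the (only boundedly negative) contribution of $\hat\gammavec_i^{b_i}$, which by construction carries no protection against an action change, whereas case (ii) needs $1-\beta$ large enough that the strict-properness margin $\tfrac\ell{18}$ of $\hat\gammavec_i^{b_i}$ is not washed out by the mixing. These opposing requirements are exactly what pin down $\beta=\tfrac{45+\bar\ell}{18\rho+45+\bar\ell}$ and force the numerical constants; checking that this single value clears both inequalities --- while carefully tracking how $\bar\ell$ (and, in the companion estimate, $\underline\ell$) depend on the estimation error $\varrho$ under the clean events --- is where the bulk of the bookkeeping lies.
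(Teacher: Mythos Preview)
Your proof is essentially the same as the paper's: both split the incentive gap according to whether $\phi(b_i)\neq b_i$ (your case (i), the paper's term~$\textcircled{A}$) or $\phi(b_i)=b_i$ (your case (ii), the paper's term~$\textcircled{B}$), bound the $\gammavec_i^{b_i}$-part via Assumption~\ref{ass:inc} and the $\hat\gammavec_i^{b_i}$-part via the range $[0,\tfrac32]$ (case (i)) or Lemma~\ref{le:lemmastrictlyproper} (case (ii)), and obtain the identical expressions $\beta\rho-\tfrac52(1-\beta)=\tfrac{\rho\bar\ell}{18\rho+45+\bar\ell}$ and $(1-\beta)\tfrac{\ell}{18}=\tfrac{\rho\ell}{18\rho+45+\bar\ell}$ before invoking $\rho\le1$, $\ell\le\bar\ell\le2$ to cap the denominator at $65$. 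The only cosmetic difference is that you marginalise $\bvec_{-i},\svec_{-i}$ explicitly up front, whereas the paper carries the full sums and performs the same marginalisation implicitly inside each term.
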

\begin{proof}
	Fix an agent $i\in\N$ and a deviation policy $(\phi, \varphi)\in\Phi_i$. With a slight abuse of notation, let $\gamma_i^\prime[b_i, s_i, \theta] = \beta \gamma_i^{b_i}[s_i, \theta] + (1-\beta)\hat\gamma_i^{b_i}[s_i, \theta]$. Then, we can write the following: 
	\begin{subequations}
		\begin{align}
			&F_i(\tilde\muvec^t, \gammavec^\prime) - F_i^{\phi, \varphi}(\tilde\muvec^t, \gammavec^\prime)\notag \\ &=\underbrace{\sum_{\substack{\bvec\in\B:\\\phi(b_i)\neq b_i}}\tilde\mu^t[\bvec]\sum_{\substack{s_i\in S_i\\ \theta\in\Theta}}\left[\prob^{(i)}\left(s_i, \theta\vert b_i\right)\gamma_i^\prime[b_i, s_i, \theta] - \prob^{(i)}\left(s_i, \theta\vert \phi(b_i)\right)\gamma_i^\prime[b_i, \varphi(b_i, s_i), \theta]\right]}_{\textcircled{A}} \notag \\ &\quad +\underbrace{\sum_{\substack{\bvec\in\B:\\\phi(b_i)= b_i}}\tilde\mu^t[\bvec]\left[\sum_{\substack{s_i\in S_i\\ \theta\in\Theta}}\prob^{(i)}\left(s_i, \theta\vert b_i\right)\gamma_i^\prime[b_i, s_i, \theta] - \prob^{(i)}\left(s_i, \theta\vert b_i\right)\gamma_i^\prime[b_i, \varphi(b_i, s_i), \theta]\right]}_{\textcircled{B}}\label{eq:eqab}.
		\end{align}
	\end{subequations}
	Let us analyze the two terms separately.
	\paragraph{Term \textcircled{A}.}
	First, notice that, by Assumption \ref{ass:inc}, it holds that
	\begin{subequations}
		\begin{align}
			&\sum_{\substack{\bvec\in\B:\\\phi(b_i)\neq b_i}}\tilde\mu^t[\bvec]\left[\sum_{\substack{s_i\in S_i\\ \theta\in\Theta}}\prob^{(i)}\left(s_i, \theta\vert b_i\right)\gamma_i^{b_i}[s_i, \theta] - \prob^{(i)}\left(s_i, \theta\vert \phi(b_i)\right)\gamma_i^{b_i}[\varphi(b_i, s_i), \theta]\right]\notag \\
			&\geq \sum_{\substack{\bvec\in\B:\\\phi(b_i)\neq b_i}} \tilde\mu^t[\bvec]C_i(b_i, \phi(b_i)) + \rho. 
		\end{align}
	\end{subequations}
	Furthermore, noticing that $C_i(b_i, \phi(b_i)) \leq 1$ and that, by definition, $\hat\gamma^{b_i}_i[s_i, \theta]\leq 3/2$ for all $i\in\N$, $b_i\in B_i$, $s_i\in S_i$ and $\theta\in\Theta$, we can write the following
	\begin{subequations}
		\begin{align}
			&\sum_{\substack{\bvec\in\B:\\\phi(b_i)\neq b_i}}\tilde\mu^t[\bvec]\left[\sum_{\substack{s_i\in S_i\\ \theta\in\Theta}}\left[\prob^{(i)}\left(s_i, \theta\vert b_i\right)\hat\gamma_i^{b_i}[s_i, \theta] - \prob^{(i)}\left(s_i, \theta\vert \phi(b_i)\right)\hat\gamma_i^{b_i}[\varphi(b_i, s_i), \theta]\right] - C_i(b_i, \phi(b_i))\right]\notag \\
			&\geq \sum_{\substack{\bvec\in\B:\\\phi(b_i)\neq b_i}}-\frac{5}{2}\tilde\mu^t[\bvec].
		\end{align}
	\end{subequations} 
	Rearranging, we get 
	\begin{subequations}
		\begin{align}
			&\sum_{\substack{\bvec\in\B:\\\phi(b_i)\neq b_i}}\tilde\mu^t[\bvec]\left[\sum_{\substack{s_i\in S_i\\ \theta\in\Theta}}\prob^{(i)}\left(s_i, \theta\vert b_i\right)\hat\gamma_i^{b_i}[s_i, \theta] - \prob^{(i)}\left(s_i, \theta\vert \phi(b_i)\right)\hat\gamma_i^{b_i}[\varphi(b_i, s_i), \theta]\right]\notag \\
			&\geq \sum_{\substack{\bvec\in\B:\\\phi(b_i)\neq b_i}}\tilde\mu^t[\bvec] \left[C_i(b_i, \phi(b_i)) -\frac{5}{2}\right],
		\end{align}
	\end{subequations} 
	By linearity and by definition of $\gammavec^\prime$, we get
	\begin{subequations}
		\begin{align}
			\textcircled{A}&\geq \sum_{\substack{\bvec\in\B:\\\phi(b_i)\neq b_i}}\tilde\mu^t[\bvec]\left[C_i(b_i, \phi(b_i)) + \beta\rho - \frac{5}{2} \left(1-\beta\right)\right] \\
			&=  \sum_{\substack{\bvec\in\B:\\\phi(b_i)\neq b_i}}\tilde\mu^t[\bvec]\left[C_i(b_i, \phi(b_i)) + \frac{\rho\bar\ell}{18\rho + 45 + \bar\ell}\right]\label{eq:eqa0} \\
			&\geq \sum_{\substack{\bvec\in\B:\\\phi(b_i)\neq b_i}}\tilde\mu^t[\bvec]\left[C_i(b_i, \phi(b_i)) + \frac{\rho\ell}{65}\right]\label{eq:eqa},
		\end{align}
	\end{subequations}
	where Equation~\eqref{eq:eqa0} follows from the definition of $\beta$ and Equation~\eqref{eq:eqa} follows from $\rho < 1$ (w.l.o.g.), $\bar\ell \leq 2$ (by the stopping condition of Algorithm \ref{alg:estprob}) and $\ell\leq\bar\ell$ (since clean event $\E_p$ holds).
	
	\paragraph{Term \textcircled{B}.}
	By assumption \ref{ass:inc}, it holds that
	\begin{subequations}
		\begin{align}
			&\sum_{\substack{\bvec\in\B:\\\phi(b_i)= b_i}}\tilde\mu^t[\bvec]\left[\sum_{\substack{s_i\in S_i\\ \theta\in\Theta}}\prob^{(i)}\left(s_i, \theta\vert b_i\right)\gamma_i^{b_i}[s_i, \theta] - \prob^{(i)}\left(s_i, \theta\vert b_i\right)\gamma_i^{b_i}[\varphi(b_i, s_i), \theta]\right]\notag \\
			&= \sum_{\substack{\bvec\in\B:\\\phi(b_i)= b_i}}\tilde\mu^t[\bvec]\sum_{s_i\in S_i}\prob^{(i)}\left(s_i\vert b_i\right)\sum_{\theta\in\Theta}\left[\prob^{(i)}\left(\theta\vert b_i, s_i\right)\left[\gamma_i^{b_i}[s_i, \theta] - \gamma_i^{b_i}[\varphi(b_i, s_i), \theta]\right]\right] \\
			&\geq \sum_{\substack{\bvec\in\B:\\\phi(b_i)= b_i}}\tilde\mu^t[\bvec]\sum_{s_i\in S_i}\prob^{(i)}\left(s_i\vert b_i\right)\sum_{\theta\in\Theta}\left[\prob^{(i)}\left(\theta\vert b_i, s_i\right)\left[\gamma_i^{b_i}[s_i, \theta] - \gamma_i^{b_i}[s_i, \theta]\right]\right] \\
			&= 0.
		\end{align}
	\end{subequations}
	Furthermore, by Lemma \ref{le:lemmastrictlyproper}, 
	\begin{subequations}
		\begin{align}
			&\sum_{\substack{\bvec\in\B:\\\phi(b_i)= b_i}}\tilde\mu^t[\bvec]\left[\sum_{\substack{s_i\in S_i\\ \theta\in\Theta}}\prob^{(i)}\left(s_i, \theta\vert b_i\right)\hat\gamma_i^{b_i}[s_i, \theta] - \prob^{(i)}\left(s_i, \theta\vert b_i\right)\hat\gamma_i^{b_i}[\varphi(b_i, s_i), \theta]\right]\notag \\
			&= \sum_{\substack{\bvec\in\B:\\\phi(b_i)= b_i}}\tilde\mu^t[\bvec]\sum_{s_i\in S_i}\prob^{(i)}\left(s_i\vert b_i\right)\sum_{\theta\in\Theta}\left[\prob^{(i)}\left(\theta\vert b_i, s_i\right)\left[\hat\gamma_i^{b_i}s_i, \theta] - \hat\gamma_i^{b_i}[\varphi(b_i, s_i), \theta]\right]\right] \\
			&\geq \sum_{\substack{\bvec\in\B:\\\phi(b_i)= b_i}}\tilde\mu^t[\bvec]\sum_{s_i\in S_i}\prob^{(i)}\left(s_i\vert b_i\right) \frac{\ell}{18} \\
			&= \sum_{\substack{\bvec\in\B:\\\phi(b_i)= b_i}}\tilde\mu^t[\bvec]\frac{\ell}{18} \\
		\end{align}
	\end{subequations}
	Also in this case, by linearity and by definition of $\gammavec^\prime$, we get 
	\begin{subequations}
		\begin{align}
			\textcircled{B} &\geq \sum_{\substack{\bvec\in\B:\\\phi(b_i)= b_i}}\tilde\mu^t[\bvec]\left(C_i(b_i, \phi(b_i))+ (1-\beta)\frac{\ell}{18}\right) \\
			&= \sum_{\substack{\bvec\in\B:\\\phi(b_i)= b_i}}\tilde\mu^t[\bvec]\left[C_i(b_i, \phi(b_i))+ \frac{\rho\ell}{18\rho + 45 + \bar\ell} \right],\label{eq:eqb0} \\
			&\geq\sum_{\substack{\bvec\in\B:\\\phi(b_i)= b_i}}\tilde\mu^t[\bvec]\left[C_i(b_i, \phi(b_i))+ \frac{\rho\ell}{65} \right] \label{eq:eqb}
		\end{align}
	\end{subequations}
	where to obtain Equation~\eqref{eq:eqb0} we substituted the definition of $\beta$ and Equation~\eqref{eq:eqb} follows from $\rho < 1$ (w.l.o.g.) and $\bar\ell \leq 2$ (by the stopping condition of Algorithm~\ref{alg:estprob}). 
	
	\paragraph{Putting all together.}
	By substituting Equations \eqref{eq:eqa} and \eqref{eq:eqb} into Equation \ref{eq:eqab}, we get
	\[
			F_i(\tilde\muvec^t, \gammavec^\prime) - F_i^{\phi, \varphi}(\tilde\muvec^t, \gammavec^\prime) \geq \sum_{\bvec\in\B}\tilde \mu^t[\bvec] C_i(b_i, \phi(b_i)) + \frac{\rho\ell}{65}.
	\]
	This concludes the proof.
\end{proof}

\lemmaic*
\begin{proof}
	Fix an agent $i\in\N$ and a deviation policy $\phi, \varphi\in\Phi_i$. By linearity, we have that
	\begin{subequations}
		\begin{align}
			F_i(\tilde\muvec^t, \gammavec^t) - F_i^{\phi, \varphi}(\tilde\muvec^t, \gammavec^t) &= \alpha\left[ F_i(\tilde\muvec^t, \tilde\gammavec^t) - F_i^{\phi, \varphi}(\tilde\muvec^t, \tilde\gammavec^t)\right] + (1-\alpha)\left[F_i(\tilde\muvec^t, \gammavec^\prime) - F_i^{\phi, \varphi}(\tilde\muvec^t, \gammavec^\prime)\right] \\
			&\geq \sum_{\bvec\in\B}\tilde\mu^t[\bvec]C_i(b_i,\phi(b_i)) -\alpha \lambda + (1-\alpha)\frac{\rho\ell}{65} \label{eq:bounds}\\
			&= \sum_{\bvec\in\B}\tilde\mu^t[\bvec]C_i(b_i,\phi(b_i)) + \frac{\lambda\rho\ell - \lambda\rho\underline\ell + 8m\varrho \rho\ell/65}{\rho\bar\ell + 65\lambda}\label{eq:defa0}\\
			&\geq \sum_{\bvec\in\B}\tilde\mu^t[\bvec]C_i(b_i,\phi(b_i))\label{eq:defa},
		\end{align}
	\end{subequations}
	where Equation \eqref{eq:bounds} follows from Lemmas \ref{le:lemmadevopt} and \ref{le:lemmastrictinc}, Equation \eqref{eq:defa0} follows from the definition of $\alpha$ in Algorithm \ref{alg:commit} and Equation~\eqref{eq:defa} follows from the fact that, since clean event $\E_p$ holds, $\ell \geq \underline\ell$. This concludes the proof.
\end{proof}
\section{Proof of Theorem \ref{th:thgeneral}}
Assume clean events $\E_c$ and $\E_p$ hold. Notice that by a union bound on the results of \Cref{le:lemmacleanprob} and \Cref{le:lemmacleancost}, this happens with probability at least $1-\delta$. From the definition of regret we have that 
\[
	R^T = \underbrace{\sum_{t\in T_c}\left[U(\muvec^\star, \gammavec^\star, \pivec^\star) - U(\muvec^t, \gammavec^t, \pivec^t)\right]}_{R^T_c} + \underbrace{\sum_{t\in T_u}\left[U(\muvec^\star, \gammavec^\star, \pivec^\star) - U^\circ(\gammavec^t, \pivec^t)\right]}_{R^T_u}.
\]
Let us analyze the two terms separately. 

\paragraph{Term $R^T_u$.}
Consider the rounds $T_u$ in which the principal committed to an uncorrelated mechanism. Since during the commit phase the principal uses only correlated mechanisms, we have that
\[
	T_u = \sum_{\bvec\in\B}\mc T_p(\bvec) + \mc T_d. 
\]
By Lemma~\ref{le:lemmasamples} and Lemma~\ref{le:lemmaroundcosts}, we have that
\[
	|T_u| \leq |\B|\max\{N_1, \kappa\} + 2nk^3 N_2 + nk^2 N_3 = |\B|\max\{T^{2/3}, \kappa\} + 2nk^3 \log (T) + nk^2 T^{2/3},
\]
where $\kappa=\frac{289}{2}m^2 \ln(2K/\delta)\frac{1}{\iota^2\ell^2}$. Then, since for all $t$, by definition of $U$ we have that $U(\muvec^\star, \gammavec^\star, \pivec^\star)-U^\circ(\gammavec^t, \pivec^t)\leq nM + 1$, we can bound the term $R^T_u$ as 
\[
	R^T_u \leq (nM + 1)\left(|\B|\max\{T^{2/3}, \kappa\} + nk^3l^2 \log (T) + nk^3l^2 T^{2/3}\right).
\]

\paragraph{Term $R^T_c$.}
The only rounds in which the principal commits to correlated mechanisms are those of the commit phase. Let $t\in T_c$. By definition of $\gammavec^t$ and by linearity of $U$, we have that
\begin{equation}\label{eq:regrcorr}
	R^T_c \leq \alpha \sum_{t\in T_c}\left[U(\muvec^\star, \gammavec^\star, \pivec^\star) - U(\muvec^t, \tilde\gammavec^t, \pivec^t)\right] + (1-\alpha) T (nM + 1).
\end{equation}
Furthermore, notice that we can bound $1-\alpha$ as
\[
	(1-\alpha) = \frac{65\lambda}{\rho\bar\ell + 65\lambda} \leq \frac{65}{\rho\ell} 2M|\Si|m(k+1)(\nu + \chi).
\]
Recalling that, by \Cref{le:lemmamaconfcost}
\[
	\chi \leq 2kl^2M\sqrt{\frac{\ln (2nk^2/\delta)}{2N_3}} + \frac{kl^2M}{2^{N_2}} = 2kl^2M\sqrt{\frac{\ln (2nk^2/\delta)}{2T^{2/3}}} + \frac{kl^2M}{T},
\]
and by definition
\[
	\nu =  \max_{\bvec\in\B}\left\{ \sqrt{\frac{\ln(12 |\B|T|\Si|nm/\delta)}{2|\mc T_p(\bvec)|}}\right\} \leq \sqrt{\frac{\ln(12 |\B|T|\Si|nm/\delta)}{2T^{2/3}}},
\]
we get 
\begin{equation}\label{eq:oneminusalpha}
	(1-\alpha) \leq \frac{780}{\rho\ell}M^2|\Si|mk^2l^2\left(\sqrt{\frac{\ln(12 |\B|T|\Si|nm/\delta)}{2T^{2/3}}} + \frac{1}{T}\right).
\end{equation}
Let us now consider the term $U(\muvec^\star, \gammavec^\star, \pivec^\star)$. Let $(\xvec^\star, \yvec^\star, \muvec^\star, \zvec^\star)$ be an optimal solution to $\LP(\prob, C, 0)$ and let $(\xvec^t, \yvec^t, \muvec^t, \zvec^t)$ be the optimal solution to $\LP(\zetavec, \Lambda, \varepsilon)$ from which mechanism $(\muvec^t, \tilde\gammavec^t, \pivec^t)$ was obtained. Furthermore, we recall that the objective function of $\LP(\zetavec, \Lambda_i, \varepsilon)$ is
\[
	\bar U(\xvec, \yvec, \muvec, \zvec) = \sum_{\bvec\in\B}\sum_{\svec\in\Si}\sum_{\theta\in\Theta}\left[\zeta_{\bvec}[\svec, \theta]\left[\sum_{a\in\A}y[\bvec, \svec, a]u(a,\theta)\right] - \sum_{i\in\N} x_i[\bvec, \svec, \theta]\right].
\]

 Then, we can write the following
\begin{subequations}
	\begin{align*}
		U(\muvec^\star, \gammavec^\star, \pivec^\star)&= \sum_{\bvec\in\B}\sum_{\svec\in\Si}\sum_{\theta\in\Theta} \left[\mu^\star[\bvec]\prob\left(\svec, \theta\vert\bvec\right)\left[\sum_{a\in\A}\pi^\star[\bvec, \svec, a] u(a, \theta)\right] - \sum_{i\in\mc N} \gamma_i^\star[\bvec, \svec, \theta]\right] \\
		&\leq \sum_{\bvec\in\B}\sum_{\svec\in\Si}\sum_{\theta\in\Theta} \left[\mu^\star[\bvec]\zeta_{\bvec}[\svec, \theta]\left[\sum_{a\in\A}\pi^\star[\bvec, \svec, a] u(a, \theta)\right] - \sum_{i\in\mc N} \gamma_i^\star[\bvec, \svec, \theta]\right] + 2M |\Si| m n \nu \\
		&= \sum_{\bvec\in\B}\sum_{\svec\in\Si}\sum_{\theta\in\Theta} \left[\zeta_{\bvec}[\svec, \theta]\left[\sum_{a\in\A}y^\star[\bvec, \svec, a] u(a, \theta)\right] - \sum_{i\in\mc N} x^\star_i[\bvec, \svec, \theta]\right] + 2M |\Si| m n \nu \\
		&= \bar U(\xvec^\star, \yvec^\star, \muvec^\star, \pivec^\star) + 2M |\Si| m n \nu,
	\end{align*}
\end{subequations}
where the first inequality follows since, given that clean event holds, $|\prob\left(\svec,\theta\vert\bvec\right) - \zeta_{\bvec}[\svec,\theta]|\leq \nu$. In a similar way, it holds that  
\begin{subequations}
	\begin{align*}
		U(\muvec^t, \tilde\gammavec^t, \pivec^t)&= \sum_{\bvec\in\B}\sum_{\svec\in\Si}\sum_{\theta\in\Theta} \left[\mu^t[\bvec]\prob\left(\svec, \theta\vert\bvec\right)\left[\sum_{a\in\A}\pi^t[\bvec, \svec, a] u(a, \theta)\right] - \sum_{i\in\mc N} \tilde\gamma_i^t[\bvec, \svec, \theta]\right] \\
		&\geq \sum_{\bvec\in\B}\sum_{\svec\in\Si}\sum_{\theta\in\Theta} \left[\mu^t[\bvec]\zeta_{\bvec}[\svec,\theta]\left[\sum_{a\in\A}\pi^t[\bvec, \svec, a] u(a, \theta)\right] - \sum_{i\in\mc N} \tilde\gamma_i^t[\bvec, \svec, \theta]\right] - 2M |\Si| m n \nu \\
		&= \sum_{\bvec\in\B}\sum_{\svec\in\Si}\sum_{\theta\in\Theta} \left[\zeta_{\bvec}[\svec,\theta]\left[\sum_{a\in\A}y^t[\bvec, \svec, a] u(a, \theta)\right] - \sum_{i\in\mc N} x^t_i[\bvec, \svec, \theta]\right] - 2M |\Si| m n \nu \\
		&= \bar U(\xvec^t, \yvec^t, \muvec^t, \pivec^\star) - 2M |\Si| m n \nu,
	\end{align*}
\end{subequations}
Furthermore, we recall that, by Lemma~\ref{le:lemmafeasible}, $(\xvec^\star, \yvec^\star, \muvec^\star, \zvec^\star)$ is a feasible solution to $\LP(\zetavec, \Lambda, \varepsilon)$, while, by definition, $(\xvec^t, \yvec^t, \muvec^t, \zvec^t)$ is optimal for $\LP(\zetavec, \Lambda, \varepsilon)$. Thus, $\bar U(\xvec^t, \yvec^t, \muvec^t, \pivec^\star)\geq \bar U(\xvec^\star, \yvec^\star, \muvec^\star, \pivec^\star)$, which yields
\begin{subequations}
	\begin{align*}
		\sum_{t\in T_c}\left[U(\muvec^\star, \gammavec^\star, \pivec^\star) - U(\muvec^t, \tilde\gammavec^t, \pivec^t)\right] &\leq \sum_{t\in T_c}\left[\bar U(\xvec^\star, \yvec^\star, \muvec^\star, \pivec^\star) - \bar U(\xvec^t, \yvec^t, \muvec^t, \pivec^\star) + 4M |\Si| m n \nu\right] \\
		&\leq \sum_{t\in T_c} 4M |\Si| m n \nu \\
		&\leq 4 M |\Si|mn T^{2/3}\sqrt{\ln(12 |\B|T|\Si|nm/\delta)}
	\end{align*}
\end{subequations}
Combining the above results, we get 
\begin{subequations}
	\begin{align*}
		R_c^T &\leq \alpha \sum_{t\in T_c}\left[U(\muvec^\star, \gammavec^\star, \pivec^\star) - U(\muvec^t, \tilde\gammavec^t, \pivec^t)\right] + (1-\alpha) T (nM + 1) \\
		&\leq \frac{1564}{\rho\ell}M^3 |\Si|mn k^2 \left(T^{2/3}\sqrt{\ln(12 |\B|T|\Si|nm/\delta)} + 1\right).
	\end{align*}
\end{subequations}

\paragraph{Putting all together.}
Hence, the final regret bound is 
\begin{subequations}
	\begin{align*}
		R^T &= R^T_c + R^T_u \\
		&\leq \frac{1567}{\rho\ell}M^3 |\B||\Si|mn k^3l^2 \left(\sqrt{\ln(12 |\B|T|\Si|nm/\delta)} + 1\right)\max\{T^{2/3}, \kappa\} + (nM + 1)\log (T), 
	\end{align*}
\end{subequations}
which gives the result.


\end{document}